\documentclass[11pt]{article}

\usepackage{sn-preamble} 
\usepackage{tikz}
\usepackage{verbatim}
\usepackage{cancel}
\usepackage{bm}
\usepackage{caption}
\usepackage{changepage}   
\usepackage{ragged2e}

\usepackage{lscape}
\usepackage{rotating}
\usepackage{placeins}

\newcommand{\violet}[1]{{\color{violet}{#1}}}
\newcommand{\black}[1]{{\color{black}{#1}}}

\newcommand{\ignore}[1]{}

\makeatletter
\newcommand{\vast}{\bBigg@{4}}
\newcommand{\Vast}{\bBigg@{5}}
\makeatother

\newcommand{\qual}{\mathrm{qual}}

\newcommand{\sgn}{\mathrm{sign}}

\newcommand{\xref}{x_{\mathrm{ref}}}
\newcommand{\Adv}{\mathrm{Adv}}

\newcommand{\tbg}{\tilde{\boldsymbol{g}}}

\newcommand{\forster}{\textsc{Forsterize}}

\newcommand{\warmup}{\textsc{Weak-Learn-$\mathsf{AND}_2$-of-$\mathsf{LTF}$}}

\newcommand{\warmupwl}{\textsc{Weak-Learn-$\mathsf{AND}_2$-of-$\mathsf{LTF}$-Over-Distribution}}

\newcommand{\learnfunc}{\textsc{Weak-Learn-$\mathsf{ANY}_k$-of-$\mathsf{LTF}$}}

\newcommand{\nnz}{\mathrm{nnz}}

\newcommand{\imp}{\mathrm{imp}}
\newcommand{\mimp}{\mathrm{mimp}}

\newcommand{\ind}{\mathrm{ind}}

\usepackage{iftex}
\ifPDFTeX
  \usepackage[T1]{fontenc}
  \usepackage{lmodern}
  \newcommand{\crayonfont}{\sffamily} 
\else
  \usepackage{fontspec}
  \newfontfamily\Crayon{TeX Gyre Heros}
  \newcommand{\crayonfont}{\Crayon}
\fi

\usepackage{xcolor}
\usetikzlibrary{calc,fit,positioning,shapes.geometric,decorations.pathmorphing,shadows.blur}

\tikzset{
  crayon draw/.style={
    draw=violet!80!black, line width=1.8pt, rounded corners=10pt,
    decorate, decoration={random steps,segment length=6pt,amplitude=0.9pt}
  },
  crayon line/.style={
    draw=violet!80!black, line width=1.2pt,
    decorate, decoration={random steps,segment length=5pt,amplitude=0.8pt}
  },
  sticker star/.style={
    star, star points=5, star point ratio=1.8,  line join=round, line cap=round,
    minimum size=16pt, inner sep=0pt,
    draw=olive!60!black, line width=0.8pt,
    fill=yellow!85!orange,
    blur shadow={shadow blur steps=6,shadow xshift=1.2pt,shadow yshift=-1.2pt,shadow opacity=0.5}
  },
  title text/.style={font=\crayonfont\Large, text=violet!90!black},
  rank text/.style={font=\crayonfont\large, text=violet!90!black},
  small math/.style={font=\crayonfont\small, text=violet!90!black},
  note/.style={font=\itshape\footnotesize, text=black!85},
}

\title{
Learning Functions of Halfspaces
}
\author{Josh Alman\thanks{Email: \texttt{josh@cs.columbia.edu}}
\\ \textsl{Columbia University} \and 
Shyamal Patel\thanks{Email: \texttt{shyamalpatelb@gmail.com}} \\ \textsl{Columbia University} \and Rocco A. Servedio\thanks{Email: \texttt{ras2105@columbia.edu}}\\ \textsl{Columbia University}}
\date{}

\begin{document}

\pagenumbering{gobble}
\maketitle

\begin{abstract}
We give an algorithm that learns arbitrary Boolean functions of $k$ arbitrary halfspaces over $\R^n$, in the challenging distribution-free Probably Approximately Correct (PAC) learning model, running in time $2^{\sqrt{n} \cdot (\log n)^{O(k)}}$.  This is the first algorithm that can PAC learn even intersections of two halfspaces in time $2^{o(n)}.$
\end{abstract}

\newpage

\pagenumbering{arabic}


\section{Introduction} \label{sec:intro}

Learning an unknown \emph{halfspace}, i.e.,~a Boolean-valued function $f: \R^n \to \bits,$ $f(x) = \sign(w \cdot x - \theta)$, is arguably the most fundamental problem in the theory of machine learning.
Algorithms for learning halfspaces date back more than sixty years, and have long played a central role as the backbone of many different approaches for learning Boolean functions, such as the Perceptron \cite{Novikoff:62,Rosenblatt:58} and Winnow \cite{lit87} algorithms, various statistical approaches \cite{DH73}, Support
Vector Machines \cite{Vap82}, AdaBoost \cite{FS97}, and many others. Such algorithms are also at the core of state-of-the-art learning results for a range of other Boolean function classes, including DNF formulas \cite{KlivansServedio:04jcss}, decision trees \cite{blu92}, and de Morgan Boolean formulas of bounded size \cite{Lee09formulas} (see, e.g.,~the discussion in \cite{HellersteinServedio:07,Sherstov13sicomp,Sherstov13,GKK20}).

In a landmark paper, Blumer et al.~\cite{bluehrhauwar89} showed that the existence of polynomial-time algorithms for linear programming gives a $\poly(n)$-time algorithm to learn a single halfspace in the ``distribution independent'' PAC learning model.  (Recall that in this framework a learning algorithm must succeed with high probability in generating a high-accuracy hypothesis $h: \R^n \to \bits$, given random examples that are  drawn from an unknown and arbitrary distribution $\calD$ over $\R^n$ and labeled by an unknown and arbitrary halfspace; see \Cref{sec:PAC} for a more detailed overview of the PAC learning model.)

Since then, one of the most outstanding open problems in computational learning theory has been to develop non-trivial algorithms for learning more complex functions of halfspaces.  In particular, the problem of learning a simple union or intersection of halfspaces has been the subject of especially intense interest, with multiple papers, from the early 1990s to the present day, highlighting its importance and centrality as a frontier challenge problem for computational learning theory \cite{Baum:90b,KOS:04,KlivansSherstov07unconditional,KS09,KLT09,Vempala10,KhotSaket:11jcss,Tiegel24,KSV24COLT,DMRT25}, with close connections to learning polytopes, multi-index models, and depth-$2$ neural networks.

Given the importance of PAC learning intersections of halfspaces, and the evident difficulty of designing non-trivial algorithms, many researchers have developed algorithms for restricted versions of the problem.  To give a brief and non-exhaustive overview, these include algorithms for learning ``low weight'' halfspaces \cite{KOS:04} as well many algorithms for learning under restricted distributions such as uniform distributions \cite{BlumKannan:97,KOS:04}, Gaussian distributions \cite{Vempala10,Vempalafocs10,KSV24COLT},  logconcave distributions \cite{KLT09,BZ17}, distributions which guarantee that every example has a non-trivial margin \cite{KS08,GKKN22,Chubanov23}, and ``factorizable'' distributions \cite{DMRT25}.  Other algorithms grant more power to the learner, such as the ability to make membership queries in various settings \cite{KwekPitt:98,GKM12}.

A different active line of work, motivated by the difficulty of designing successful algorithms, seeks to establish various kinds of \emph{hardness} results for learning intersections of halfspaces.  This includes complexity-theoretic lower bounds on PAC learning with restricted classes of hypotheses based on worst-case hardness assumptions \cite{BR92,ABFKP:08,KhotSaket:11jcss}; representation-independent lower bounds based on average-case hardness assumptions or cryptographic assumptions \cite{KS09,DSSS16,DV21,Tiegel24,DingGu21};
Statistical Query lower bounds of various types \cite{KlivansSherstov07unconditional,DKPZ21,HSSV22,KSV24COLT}; and lower bounds on state-of-the-art techniques based on polynomial threshold function degree \cite{Sherstov13,Sherstov13sicomp,sherstov2021hardest}.

To summarize the above, a plethora of partial results have been given for a wide range of restricted versions of the intersection-of-halfspaces learning problem, and many hardness results have been established for different versions of the problem. 
But prior to the present work, no $2^{o(n)}$-time algorithm was known for the original problem of PAC learning even an intersection of two halfspaces.

\bigskip

\subsection{Our result}

We give the first non-trivial algorithm for learning intersections (and more general functions) of halfspaces in the general PAC learning model.  To state our result precisely, let ${\cal C}_k$ denote the class of all Boolean functions of $k$ halfspaces,
i.e.,~functions of the form $f: \R^n \to \bits,$ $f(x)=g(h_1(x),\dots,h_k(x))$ where each $h_i: \R^n \to \bits$ is an arbitrary halfspace and $g$ is an arbitrary function $\bits^k \to \bits.$

\begin{theorem} \label{thm:mainintro}
There is an algorithm that runs in time $\poly(2^{\sqrt{n} \cdot (\log n)^{O(k)}},$ $1/\eps,$ $\log(1/\delta))$ and learns ${\cal C}_k$ in the distribution-free Probably Approximately Correct (PAC) learning model, using random examples only. 
\end{theorem}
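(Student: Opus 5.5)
The plan is to reduce to \emph{weak} learning and then apply boosting. Distribution-free boosting (e.g.\ AdaBoost or smooth boosting) turns any algorithm that, for \emph{every} distribution $\calD$, outputs a hypothesis with advantage $\gamma$ over random guessing into a strong PAC learner. This costs $\poly(1/\gamma,1/\eps,\log(1/\delta))$ calls to the weak learner. It therefore suffices to give a weak learner for ${\cal C}_k$ with running time and inverse advantage $1/\gamma$ both bounded by $2^{\sqrt n\,(\log n)^{O(k)}}$. A weak learner may also abstain on part of the domain, since a hypothesis that is correct with advantage $\gamma$ on a region of mass $\mu$ (and outputs a coin flip elsewhere) still has advantage $\gamma\mu$ overall. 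This is the form in which we will build it, and it is what allows us to discard awkward portions of the distribution.

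The core tool is \forster{}: we apply an invertible linear map $A$ (together with the rescaling $x\mapsto Ax/\|Ax\|$, which does not change any homogeneous halfspace after lifting the thresholds to an extra coordinate) so that the sample is in approximate radial isotropic position, i.e.\ $\E[xx^\top]\approx I/n$ with $\|x\|=1$. Because halfspaces are closed under invertible linear maps and positive rescaling, the target stays in ${\cal C}_k$. In isotropic position, for any unit normal $w$ we have $\E[(w\cdot x)^2]=1/n$. Hence, for the first halfspace $h_1=\sign(w_1\cdot x)$, either a non-negligible ($1/\poly(n)$) fraction of the mass has margin $|w_1\cdot x|\ge 1/(\sqrt n\,\polylog n)$, or almost all mass lies in a thin slab around the hyperplane $w_1^\perp$. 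On points of margin $\rho$, $\sign(w_1\cdot x)$ is computed exactly by a polynomial of degree $\tilde O(1/\rho)=\tilde O(\sqrt n)$ (via a Chebyshev-type approximation of $\sign$ on $[-1,-\rho]\cup[\rho,1]$). The plan is to combine these approximations through $g$ to obtain a PTF of degree $\sqrt n\,(\log n)^{O(k)}$ that is correct on a large region. We would then find it by linear programming over the $n^{\sqrt n (\log n)^{O(k)}}=2^{\sqrt n(\log n)^{O(k)}}$ monomials, restricted to the relevant region.

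We handle $k$ halfspaces by induction on $k$; this is where the $(\log n)^{O(k)}$ exponent arises, each level costing a polylogarithmic blow-up in degree. The warm-up case is \warmup{}, which we do first to expose the structure. For each halfspace there are two possibilities. In the first, its margin is large on a noticeable fraction of the mass. There it is a low-degree polynomial, and the target restricted to that region is a function of at most $k-1$ ``hard'' halfspaces composed with a degree-$\tilde O(\sqrt n)$ polynomial. In the second, the mass concentrates near its hyperplane, so that halfspace is nearly ``decided by small perturbations''. We then re-Forsterize the conditional distribution and recurse. The quantities $\imp$ and $\mimp$ (importance of a halfspace on the current distribution) will track which case we are in. A crucial point is that we never know $w_i$. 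The weak learner will instead guess among the branches and sub-distributions, each guessed branch costing at most a $2^{-\sqrt n\,\polylog n}$ loss in advantage. It will then verify any candidate hypothesis on fresh samples, keeping one whose empirical advantage is large.

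The main obstacle will be the recursion when the mass concentrates near a hyperplane we cannot identify. Forster position is only a \emph{global} guarantee, and conditioning on a slab may destroy isotropy. Moreover, a single bad direction can then absorb almost all mass while the remaining halfspaces still have tiny margins. Our first attempt will be a potential argument. Each time we re-Forsterize a sub-distribution on which some halfspace has margin below $1/(\sqrt n\,\polylog n)$, we should show that either a weak hypothesis already exists on a $2^{-\tilde O(\sqrt n)}$ fraction, or the effective dimension (measured via the number of nonzero/large coordinates, cf.\ $\nnz$) relevant to that halfspace drops by a constant factor. This would bound the recursion depth per halfspace by $O(\log n)$ and give total degree $\sqrt n(\log n)^{O(k)}$. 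If this dimension-reduction argument fails, the fallback is a direct bound showing that points very close to all $k$ hyperplanes simultaneously carry a structure (an approximately lower-dimensional subspace) on which the recursion can be run with $n$ replaced by a smaller parameter.
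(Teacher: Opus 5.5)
Your outer layer matches the paper: weak learning with a hypothesis that is informative only on an efficiently recognizable region, then boosting, with \forster{} used to create margin. The core step, however, has a genuine gap.

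\textbf{The obstacle you target does not arise.} After re-Forsterizing, \Cref{lem:rad-iso-large-margin} gives, for \emph{every} unit $w \in V$, a $\frac{1}{4n}$ fraction of points with margin at least $\frac{1}{2\sqrt n}$. So your dichotomy always lands in its first branch. The slab recursion, with its unproved effective-dimension potential, is aimed at a case that never occurs.

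\textbf{The real obstacle is credit assignment.} The large-margin points for $w_1$ are defined by the unknown $w_1$. You therefore cannot restrict the LP to them, nor build a hypothesis that abstains outside them. Running the LP on the whole sample fails because $f$ need not have PTF degree $o(n)$ (Sherstov). Even on those points, the other $k-1$ halfspaces may have arbitrarily small margin, so there is no low-degree PTF for $f$ there either. Finally, $g'(p,h_2,\dots,h_k)$ with $p$ a degree-$\tilde O(\sqrt n)$ PTF is not an instance of ${\cal C}_{k-1}$. Linearizing $p$ in order to recurse puts you in dimension $N = n^{\tilde O(\sqrt n)}$, where a $2^{\sqrt N}$ bound is useless.

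\textbf{The missing idea.} You need a way to turn ``a $1/\poly(n)$ fraction has margin $\Omega(1/\sqrt n)$ w.r.t.\ an unknown $w$'' into an \emph{explicit} region of non-negligible mass on which $\sign(w\cdot x)$ is nearly constant. The paper does this as follows.
\begin{itemize}
\item It guesses $\bg \sim \calN(0,\frac1n I_n)$. This $\bg$ satisfies $w \cdot \bg \ge \alpha = n^{-1/4}(\log n)^{O(k)}$ with probability $2^{-\sqrt n(\log n)^{O(k)}}$; this probability, not LP size, sets the running time.
\item It restricts to the known region $\{x : \bg\cdot x \ge \pm\beta\}$.
\item The Advantage Lemma (\Cref{lem:advantage-simplified}, \Cref{lem:advantage}, together with \Cref{lem:monotonicity}) shows that a point with $w\cdot x \ge \frac{1}{2\sqrt n}$ is $e^{\Omega(n\alpha\beta\,(w\cdot x))}$ times likelier to land in this region than any point on the wrong side.
\item A reverse-Markov argument then yields a nearly pure region. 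Once all $k$ halfspaces are nearly constant on the surviving points, so is $f$, and no polynomials are needed.
\end{itemize}

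\textbf{The second difficulty you miss.} The impurity achieved this way is at best quasi-polynomially small. Zooming in to fix the next halfspace may keep only a $2^{-\sqrt n(\log n)^{\Theta(1)}}$ fraction of the points, which can destroy the purity of earlier-fixed halfspaces. The paper handles this not by dimension reduction but by the leaderboard/quality potential (\Cref{def:quality}, thresholds $\beta_j$ shrinking by a factor $\log^5 n$ per rank, \Cref{lem:num-updates}). If an earlier halfspace's purity would be destroyed, its wrong-side points must have margin $\gg 1/\sqrt n$ w.r.t.\ the newly guessed halfspace. That new halfspace then has higher quality and takes over that leaderboard slot, evicting everything below it. Since quality is $O(\log n)$, this can happen only $\log^{O(k)} n$ times.

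Accordingly, $\imp$ and $\mimp$ in the paper measure \emph{impurity} (the fraction of surviving points on the wrong side of a fixed halfspace), not importance.
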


As stated earlier, this is the first algorithm in the original PAC learning model of learning from random examples that runs in time $2^{o(n)}$, even for learning an intersection of two halfspaces under arbitrary distributions.  In fact, for learning to accuracy $\eps=1/n$, no prior $2^{o(n)}$-time algorithm was known even for learning an intersection of two halfspaces under the uniform distribution on $\bits^n$.

\subsection{Our techniques} \label{sec:techniques}

The standard technique for PAC learning a class ${\cal C}$ of Boolean functions in the distribution-free setting is the so-called ``polynomial method.'' In this approach, one shows that every function $f$ in ${\cal C}$ can be represented exactly as the sign of a real polynomial of degree $d$, i.e.~as a degree-$d$ ``polynomial threshold function'' (PTF). Since we can PAC learn degree-$d$ PTFs in time $\poly({n \choose \leq d})$, this yields a $\poly({n \choose \leq d})$-time algorithm for PAC learning the class ${\cal C}$. Unfortunately, the celebrated work of Sherstov \cite{Sherstov13, Sherstov13sicomp, sherstov2021hardest} proves that even over the domain $\zo^n$, an intersection of two halfspaces may have PTF degree $\Omega(n)$, and hence any $2^{o(n)}$ time algorithm must proceed via different techniques.

Conceptually, the difficulty in learning an intersection $h_1 \wedge h_2$ of even two halfspaces has long been attributed to the ``credit assignment problem'' \cite{Baum:90b}. In particular, when we receive a negative example $x$, we know that either $h_1(x) = -1$ or $h_2(x) = -1$; the difficulty of the problem stems from our inability to distinguish between these two cases.

\subsubsection{Warm-up:  Learning an intersection of two halfspaces.}

To circumvent the above two barriers, we will give an approach to directly solve the credit assignment problem that is not based on polynomials.  Before discussing our approach, we record some initial simplifying assumptions. In particular, we'll assume (without loss of generality, as discussed later) that the halfspaces $h_1$ and $h_2$ are origin-centered, and that all examples are drawn from a distribution ${\cal D}$ supported on the unit sphere $\mathbb{S}^{n-1}$. As such, we have that the labeling function $f(x)$ is $h_1(x) \land h_2(x)$ where $h_1(x) = \sign(w^{(1)} \cdot x)$ and $h_2(x) = \sign(w^{(2)} \cdot x)$ for unknown unit vectors $w^{(1)}, w^{(2)} \in \mathbb{R}^n$. We will also assume that we are given a ``large'' sample $S$ (we will specify its size soon) of points that are drawn from the distribution $\calD$ and labeled by $f$. Our goal will then be to run in $\poly(|S|, 2^{\wt{O}(\sqrt{n})})$ time and find a ``weak hypothesis'' that   correctly classifies $1/2 + \gamma$  fraction of the points in $S$, where $\gamma \geq 2^{-\wt{O}(\sqrt{n})}$.
Moreover, this weak hypothesis will belong to a class with low VC-dimension.  By choosing the sample size $|S|$ to be $\gg 1/\gamma$,  such an algorithm to find such a weak hypothesis over $S$ is sufficient to PAC learn an intersection of two halfspaces to accuracy $\eps$ and confidence $\delta$ in $\poly(2^{\wt{O}(\sqrt{n})}, \eps^{-1}, \log(1/\delta))$ time, using standard generalization bounds and boosting arguments.

The core of our approach will be a procedure to \emph{find a region where the first halfspace is (nearly) constant}. More precisely, we want to either find a region $R_+ \subseteq \mathbb{R}^n$ such that (say) at least a $1 - \frac{1}{\omega(n \log n)}$ fraction
of the points in $S \cap R_+$ satisfy $h_1(x) = 1$, or a region $R_{-} \subseteq \mathbb{R}^n$ such that (say) at least a $1 - \frac{1}{\omega(n \log n)}$  fraction of the points in $S \cap R_{-}$ satisfy $h_1(x) = -1$. In order for such a guarantee to be non-trivial, we will also require that the region we find contain a non-trivial fraction of the points of $S$, i.e., at least a $2^{-\wt{O}(\sqrt{n})}$ fraction.

Note that constructing such a region is sufficient to let us solve the learning problem. Indeed, if we find a region $R_-$, then $f$ must be close to the constant $-1$ function on $R_-$, as almost all points in this region don't satisfy $h_1$. So a hypothesis that outputs $-1$ on $R_-$ and outputs the majority label outside of $R_-$ will have accuracy ${\frac 1 2} + \Omega \left( \frac{|R_- \cap S|}{|S|} \right) = {\frac 1 2} + 2^{-\wt{O}(\sqrt{n})}$ over $S$. On the other hand, if we find a region $R_+$, then all but at most a $\frac{1}{\omega(n \log n)}$ fraction of points in $S \cap R_+$ are labeled according to $h_2$ (because $h_1=1$ and so $h_1 \wedge h_2$ is simply $h_2$). In this case, we can draw $n \log n$ points from $S \cap R_+$ at random, and with $1-o(1)$ probability all of them will be labeled according to $h_2$; we can then use linear programming to learn a high accuracy hypothesis $h$ for $h_2$, with error rate at most $o(1)$ on $S \cap R_+$. Outputting $h(x)$ for points $x \in S \cap R_+$ and the majority label outside of $R_+$, we again  have a hypothesis with accuracy ${\frac 1 2} + \Omega \left( \frac{|R_+ \cap S|}{|S|} \right) = {\frac 1 2} + 2^{-\wt{O}(\sqrt{n})}$ over $S$.

It remains to describe how to construct such a region $R_+$ or $R_-$ in time $\poly(|S|, 2^{\tilde{O}(\sqrt{n})})$. To do this, we will assume that $h_1$ has an $\Omega(1/n)$ fraction of points $x$ with a \emph{non-trivial margin}, i.e., that satisfy $\left| w^{(1)} \cdot x \right| = \Omega(1/\sqrt{n})$. Crucially, such anti-concentration can be achieved by applying a \emph{algorithmic Forster transform} (see \Cref{lem:rad-iso-large-margin} and \Cref{thm:forster}), which was provided in recent work of Diakonikolas et al.~\cite{diakonikolas2023strongly}. We then branch into two cases based on whether $w^{(1)}$ has a ``large positive margin'' or a ``large negative margin,'' meaning that  $\Omega(1/n)$ fraction of points in $S$ either satisfy $w^{(1)} \cdot x \geq \Omega(1/\sqrt{n})$ or satisfy $w^{(1)} \cdot x \leq -\Omega(1/\sqrt{n})$.

{\bf The ``large positive margin'' case:} In this case, we aim to randomly choose a vector $\bg \sim \calN \left(0,\frac{1}{n} I_n \right)$  that satisfies $w^{(1)} \cdot \bg \geq \alpha := \Theta \left( \frac{\sqrt{\log(n)}}{n^{1/4}} \right)$, and we define the region
$R_+$ to be
$R_+(\bg) := \{x: \bg \cdot x \geq \alpha/10\}$.  It can be shown (see \Cref{lem:advantage-simplified}) that points with a ``large positive margin vis-a-vis $w^{(1)}$,'' i.e.~points $x$ satisfying $w^{(1)} \cdot x \geq \Omega(1/\sqrt{n})$, are at least (say) $n^{3}$ times more likely to appear in $R_+(\bg)$ than points $x$ satisfying $h_1(x) = -1$ (equivalently, satisfying $w^{(1)} \cdot x < 0$). Since an $\Omega(1/n)$ fraction of points in $S$ have such a large positive margin vis-a-vis $w^{(1)}$, it follows that we expect at most an $O(n^{-2})$ fraction of the points in $S \cap R_+(\bg)$ to satisfy $h_1(x) = -1$. Moreover, by Gaussian tail bounds, we expect roughly an $\exp(- O(n \alpha^2)) = 2^{-\wt{O}(\sqrt{n})}$ fraction of points to lie in $R_+(\bg)$, as desired.

Of course, we cannot sample a $\bg \sim \calN \left(0,\frac{1}{n} I_n \right)$ that satisfies $w^{(1)} \cdot \bg \geq \alpha$, as we do not know $w^{(1)}$. Instead, we will ``guess'' by randomly drawing a $\bg \sim \calN \left(0,\frac{1}{n} I_n \right)$. Note that by Gaussian tail bounds, such a $\bg$ will indeed satisfy $w^{(1)} \cdot \bg \geq \alpha$ with probability $2^{-\tilde{O}(\sqrt{n})}.$ After $2^{\wt{O}(\sqrt{n})}$ repetitions, with high probability we will indeed guess a $\bg$ satisfying $w^{(1)} \cdot \bg \geq \alpha$, 
Thus, all in all, in this case we indeed construct a region $R_+(\bg)$ in time $2^{\wt{O}(\sqrt{n})}$, as desired.

{\bf The ``large negative margin'' case:}  This case is almost identical. We again draw a guess $\bg \sim \calN \left(0,\frac{1}{n} I_n \right)$ and hope that the vector $\bg$ satisfies $w^{(1)} \cdot \bg \geq \alpha$. The only change is that we consider the region $R_-(\bg) := \{x: \bg \cdot x \leq -\alpha/10\}$ rather than $R_+(\bg)$. The same approach and analysis then gives an algorithm to construct a region $R_-(\bg)$ in time $2^{\wt{O}(\sqrt{n})}$ in the negative margin case, as desired.

\subsubsection{Learning functions of $k$ halfspaces.} \label{sec:learnfunc}

We now turn to the actual (and significantly more challenging) task of learning an arbitrary function of $k$ unknown halfspaces. As before, we can assume that $\calD$ is supported on $\mathbb{S}^{n-1}$ and that each target halfspace is origin centered. Our goal is to learn a target function $f = g(h_1(x), \dots, h_k(x))$, where $h_i(x) = \sign(w^{(i)} \cdot x)$ is the $i$-th unknown halfspace and the combining function $g: \bits^k \to \bits$ is unknown and arbitrary.  Similar to before, we aim to do this by giving a weak learner that finds a hypothesis with non-trivial accuracy (now this will mean accuracy ${\frac 1 2} + \gamma$ where $\gamma = 2^{-\sqrt{n}\log^{O(k)}(n)}$) on a set $S$ of $2^{\sqrt{n} \log^{\Omega(k)}(n)}$ labeled examples.

Inspired by the case of an intersection of two halfspaces, our high-level approach will be to construct a region $R \subseteq \R^n$ such that every halfspace $h_i$ is ``fixed,'' meaning that it almost always takes the same value across all the examples in $S \cap R$ (and moreover $|S \cap R|$ is not too small).  That is, our goal is that there exist values $s_i \in \bits$ such that for all $i \in [k]$, at most (say) a $\frac{1}{n^2}$ fraction of points in $S \cap R$ satisfy  $\sign(w^{(i)} \cdot x) \not = s_i$. Given this, the target function $f$ would be nearly constant across all points in $S \cap R$, taking value $g(s_1, s_2, \dots, s_k)$ on almost all of them. Consequently, the hypothesis that outputs the bit $g(s_1, s_2, \dots, s_k)$ on points in $R$ and the majority label outside of $R$ would have accuracy ${\frac 1 2} + \gamma$ over $S$, where $\gamma = \Omega \left( |S \cap R|/|S| \right)$. 

Let us first discuss finding such a region $R$ when there are $k=2$ halfspaces, as this already captures many of the non-trivial aspects of our techniques for general $k$. 
(For simplicity, throughout the ensuing discussion we do not mention the Forsterization steps that are carried out to ensure a margin each time we sample a halfspace $\bg \sim \calN \left(0,\frac{1}{n} I_n \right)$.) Recall that in the previous section, we only found a region that fixed a single halfspace. To start, we consider the following naive approach: Sample a $\bg^{(1)} \sim \calN \left(0,\frac{1}{n} I_n \right)$ with $w^{(1)} \cdot \bg^{(1)} \geq \alpha$ and compute a region $R_{s_1}(\bg^{(1)})= \{x: \bg^{(1)} \cdot x \geq s_1 \cdot \alpha/10\}$, for some $s_1 \in \{-,+\}$, to fix $h_1$ as described in the previous section. Now using $R_{s_1}(\bg_1) \cap S$ as our new sample of points to classify, 
compute a region $R_{s_2}(\bg^{(2)})$, for some $s_2 \in \{-,+\}$, to fix $h_2$, where $\bg^{(2)} \sim \calN \left(0,\frac{1}{n} I_n \right)$ and $w^{(2)} \cdot \bg^{(2)} \geq \alpha$. Then use $R := R_{s_1}(\bg_1) \cap R_{s_2}(\bg^{(2)})$ as the desired region.

Does this naive approach succeed? Encouragingly, $R$ satisfies some of the properties we want; in particular, it contains a non-trivial ($2^{-\wt{O}(\sqrt{n})}$) fraction of points in $S$ (since each of the two ``fixings'' keeps at least a $2^{-\wt{O}(\sqrt{n})}$ fraction of points), and at most an $O(n^{-2})$ fraction
of points in $S \cap R$ satisfy $h_2(x) \not = s_2$. But there is a major problem, which is that we have no guarantee that $h_1$ is fixed by $R$! In particular, while at most a $O(n^{-2})$ fraction of points in $S \cap R_{s_1}(\bg^{(1)})$ satisfy $h_1(x) \not = s_1$, when we further intersect with $R_{s_2}(\bg^{(2)})$, we are ``zooming in'' on a small ($2^{-\wt{O}(\sqrt{n})}$) fraction of points from $S \cap R_{s_1}(\bg^{(1)})$. As such, it is entirely possible  that most points in $R$ satisfy $h_1(x) \not = s_1$. Indeed, recalling  how these regions are actually constructed, if the points satisfying $h_1(x) \not = s_1$
have large margin with respect to $w^{(2)}$ and those satisfying 
$h_1(x) = s_1$
have a small margin with respect to $w^{(2)}$, then we should very much expect most points in $S \cap R$ to satisfy 
$h_1(x) \not = s_1$.

To circumvent this problem, we consider the following small change to the above procedure: When we make our random guess for $\bg^{(1)}$, we hope for the slightly stronger property that $w^{(1)} \cdot \bg^{(1)} \geq \alpha \log(n)$ (rather than just $\alpha$), and we correspondingly define the region $R_{s_1}^\star(\bg^{(1)}) := \{x: \bg^{(1)} \cdot x \geq s_1 \cdot \alpha \log(n) / 10\}$.  We then restrict to the points in
$R^\star_{s_1}(\bg^{(1)}) \cap S$, and we construct a region $R_{s_2}(\bg^{(2)})$ using a $\bg^{(2)} \sim \calN \left(0,\frac{1}{n} I_n \right)$ satisfying $w^{(2)} \cdot \bg^{(2)} \geq \alpha$ as before. We then take the region $R$ to be $R := R_{s_1}^\star(\bg^{(1)}) \cap R_{s_2}(\bg^{(2)})$.

As before, we still have that $R$ contains a non-trivial ($2^{-\wt{O}(\sqrt{n})}$) fraction of points in $S$, and that at most an $O(n^{-2})$ fraction of points in $S \cap R$ satisfy $h_2(x) \not = s_2$. Notably, because of how we defined $R^\star_{s_1}(\bg^{(1)})$, we now will also have that at most a $n^{-\log(n)}$ fraction of points in $R^\star_{s_1}(\bg^{(1)})$
satisfy $h_1(x) \not = s_1$. As before, this fraction is not enough to ensure that $h_1$ is fixed by $R$. But here is a crucial insight: if $h_1$ is \emph{not} fixed by $R$, then $S \cap R^\star_{s_1}(\bg^{(1)})$ must contain many points with a large margin vis-a-vis $w^{(2)}$. To see this, bucket the points in $S \cap R^\star_{s_1}(\bg^{(1)})$ with $\sign(w^{(1)} \cdot x) \not = s_1$ by their margin with respect to $w^{(2)}$ to get sets $T_\lambda$ consisting of those points with $w^{(2)}$-margin in the range $\left[ \frac{\lambda}{\sqrt{n}}, \frac{2\lambda}{\sqrt{n}} \right)$ for power-of-two values of $\lambda$. 
A careful calculation using \Cref{lem:advantage} and \Cref{lem:monotonicity} shows that a point in $T_\lambda$ is $n^{\Theta(\lambda)}$ times more likely to appear in $R^\star_{s_1}(\bg^{(1)})$ than a typical point $x \in S \cap R_{s_1}^\star(\bg^{(1)})$ with $\sign(w^{(1)} \cdot x) = s_1$.\footnote{Here we are assuming that only a tiny fraction of the points in  $S \cap R_{s_1}^\star(\bg^{(1)})$ with  $\sign(w^{(1)} \cdot x) = s_1$ have a large margin vis-a-vis $w^{(2)}$, as otherwise $S \cap R_{s_1}^\star(\bg^{(1)})$ would already contain many points with a large margin.}
Thus, we conclude that, in expectation, $R$ has at most a $\sum_{\lambda} n^{\Theta(\lambda)} {\frac {|T_\lambda|}{|S \cap R^\star_{s_1}(\bg^{(1)})|}}$ fraction of points satisfying $\sign(w^{(1)} \cdot x) \not = s_1$. 

Since there is at least a $1/n^2$ fraction of points satisfying $\sign(w^{(1)} \cdot x) \not = s_1$ in $R$ and at most a $n^{-\Omega(\log(n))}$ fraction of points in $S \cap R^\star_{s_1}(\bg^{(1)})$ satisfying $\sign(w^{(1)} \cdot x) \not = s_1$, we can infer that that there exists a $T_\lambda$ with ${\frac {|T_\lambda|}{|S \cap R^\star_{s_1}(\bg^{(1)})|}} \geq n^{-O(\lambda)}$ for some $\lambda = \Omega(\log(n))$. By definition of $T_\lambda$, this implies that there exists a $\lambda = \Omega(\log n)$ such that there is a $n^{-O(\lambda)}$ fraction of points in $S \cap R^\star_{s_1}(\bg^{(1)})$ with margin at least $\Omega \left( \frac{\lambda}{\sqrt{n}} \right)$, i.e.,~at least $\Omega \left(\frac{\log(n)}{\sqrt{n}} \right)$, with respect to $w^{(2)}$.
Notably, this is a much larger margin than the $\Omega(1/\sqrt{n})$ margin promised by the Forster transform earlier, although this has come at the cost of only holding for a $n^{-\Omega(\lambda)}$ fraction of the points as opposed to an $\Omega(1/n)$ fraction of the points earlier.

Despite holding only for a smaller fraction of points, having this large margin makes our algorithmic techniques stronger. In particular, if 
$S \cap R^\star_{s_1}(\bg^{(1)})$ has at least an $n^{-\Omega(\lambda)}$ fraction of points with margin $\Omega \left( \frac{\lambda}{\sqrt{n}} \right)$ with respect to $w^{(2)}$,
 then 
 further sampling a new $\bg' \sim \calN \left(0,\frac{1}{n} I_n \right)$ with $w^{(2)} \cdot \bg' \geq \alpha \log(n)$ and constructing $R_{s}(\bg') := \{x: \bg' \cdot x \geq s \cdot \alpha \log(n)/10\}$ for some $s \in \{-,+\}$  yields that at most $n^{-\Omega(\lambda \log(n))}$ fraction of points in $S \cap R^\star_{s_1}(\bg^{(1)})$ satisfy $h_2(x) \not = s$. Notably, this fraction is at most $n^{-\Omega(\log^2(n))}$, as $\lambda \geq \log(n)$ 
 in our above construction, an improvement over the $n^{-\Omega(\log(n))}$ bound that we had for $R^\star_{s_1}(\bg^{(1)})$.
 
This leads us to our final overall algorithm for constructing the desired region $R$ in the case of $k=2$, which roughly proceeds as follows:

\begin{itemize}
    \item Sample a $\bg^{(1)} \sim \calN \left(0,\frac{1}{n} I_n \right)$ with $w^{(1)} \cdot \bg^{(1)} \geq \alpha \log(n),$ and consider the region $R_{s_1}^\star(\bg^{(1)})$ defined earlier, for some $s_1 \in \{-,+\}$. In particular, consider the point set $R_{s_1}^\star(\bg_1) \cap S$ as the new sample $S'$ of points over which we seek a nontrivially accurate weak hypothesis. 

    \item 
If at least an $n^{-\lambda}$ fraction of points in $S'$ have margin $\Omega \left( \frac{\lambda}{\sqrt{n}} \right)$ with respect to $w^{(2)}$ for some $\lambda \geq (\log n)^i$
where $i$ is the recursion depth of the current execution of the algorithm, then recursively
restart the algorithm on $S'$
with $w^{(1)}$ and $w^{(2)}$ swapped, i.e., trying to fix  $w^{(2)}$ first and then $w^{(1)}$.

\item 
Otherwise, guess $\bg^{(2)} \sim \calN \left(0,\frac{1}{n} I_n \right)$ with $w^{(2)} \cdot \bg^{(2)} \geq \alpha$
and define the region $R_{s_2}(\bg^{(2)}) = \{x: \bg^{(2)} \cdot x \geq s_2 \cdot \alpha/10\}$, for some $s_2 \in \{-,+\}$, to fix $h_2$.

\item Output $R := R_{s_1}^\star(\bg^{(1)}) \cap R_{s_2}(\bg^{(2)})$.

\end{itemize}

Note that the algorithm always terminates after at most $\frac{\log(n)}{\log \log(n)}$ recursion depth, as any point has margin at most $1$. It is useful to emphasize that regarding the second and third bullets above, the algorithm has no way of determining whether there exists $n^{-\lambda}$ fraction of points with margin $\Omega \left( \frac{\lambda}{\sqrt{n}} \right)$ in our current point set. As such, we will simply guess whether or not this is the case and hope to get lucky. Since we guess correctly with probability $\frac{1}{2}$ and only make $\frac{\log(n)}{\log \log(n)}$ many such guesses, the cost of these guesses is negligible. In particular, the runtime of the algorithm will be dominated by guessing vectors $\bg$ that have correlation $\alpha \log(n)$ with the target halfspaces $w^{(j)}$. 

Boostrapping our previous analysis, one can show that by repeatedly applying the previous reasoning, on the $i$th recursive restart of the algorithm we will have that the first halfspace that we attempt to fix has at least $n^{-\lambda}$ points with margin $\Omega \left( \frac{\lambda}{\sqrt{n}} \right)$ where $\lambda \geq (\log n)^i$. Moreover, if at recursion depth $i$ we ever do not have $n^{-\lambda}$ points with margin $\Omega \left( \frac{\lambda}{\sqrt{n}} \right)$ for some $\lambda \geq (\log n)^i$, then (cf.~the discussion at the beginning of the ``crucial insight'' earlier) the region $R$ that we output fixes both $h_1$ and $h_2$, as desired.

To fix $k \gg 2$ halfspaces, we follow the same high level idea. At each step, we have fixed some set of halfspaces $F \subseteq [k]$. We then either fix a new halfspace or we increase the margin of some halfspace in $F$ by at least a $\Omega(\log(n))$ factor. The run time is then dominated by the fact that to get such a guarantee, we will need to make a guess $\bg \sim \calN \left(0,\frac{1}{n} I_n \right)$ that satisfies $w^{(1)} \cdot \bg \geq \alpha \log^{O(k)}(n)$. Since we must draw such a $\bg$ by only sampling from  $\calN \left(0,\frac{1}{n} I_n \right)$, we then expect our algorithm to take time $2^{\sqrt{n} \log^{O(k)}(n)}$.

We end by noting that while the above sketch covers the core ideas in our algorithm, it hides a number of details, particularly in the case of $k \gg 2$ halfspaces. To mention one of these, it turns out that our algorithm potentially needs to restart up to $\log^k(n)$ times. Roughly speaking, this arises from the fact that if $h_1, \dots, h_{k/2}$ are fixed and $h_1$ has a large margin, then our algorithmic techniques will allow us to find a region where $h_1$ is very close to constant as above. That said, after restricting to such a region, we may no longer fix $h_2, \dots, h_{k/2}$. While the ideas described above can be modified to continue to ensure that we make progress here, it comes at the cost of an increased number of restarts, and requires us to make guesses $\bg$ that have much bigger correlations, i.e., $\alpha \log^{O(k)}(n)$, with the target halfspaces. 

\medskip


\section{Preliminaries} \label{sec:preliminaries}

\subsection{Basic tools from probability}
We use {\bf bold font} to denote random variables.
We will use the following standard Gaussian tail bound:
\begin{lemma}
[Gaussian tail bound] \label{lem:gauss-lb}
If $\bg \sim \calN(0,1)$, then for all $t > 1$, we have that
\[\Pr_{\bg} [\bg \geq t] = \Theta \left( \frac{1}{t} e^{-t^2/2} \right). \]
\end{lemma}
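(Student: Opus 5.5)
We prove the two directions of the Mills-ratio estimate separately, working directly with the integral
\[
\Pr[\bg \ge t] = \frac{1}{\sqrt{2\pi}}\int_t^\infty e^{-u^2/2}\,du .
\]
Both bounds come from comparing the integrand with an exact derivative.

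\emph{Upper bound.} For $u \ge t > 0$ we have $u/t \ge 1$. Hence
\[
\int_t^\infty e^{-u^2/2}\,du \le \frac{1}{t}\int_t^\infty u\, e^{-u^2/2}\,du = \frac{1}{t}e^{-t^2/2}.
\]
This gives $\Pr[\bg\ge t] \le \frac{1}{\sqrt{2\pi}\, t} e^{-t^2/2}$.

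\emph{Lower bound.} We use the identity
\[
\frac{d}{du}\Bigl(-\frac{1}{u}e^{-u^2/2}\Bigr) = \Bigl(1+\frac{1}{u^2}\Bigr)e^{-u^2/2}.
\]
Since $1 + 1/u^2 \le 1 + 1/t^2 \le 2$ for $u \ge t > 1$, we get
\[
\frac{1}{t}e^{-t^2/2} = \int_t^\infty \Bigl(1+\frac{1}{u^2}\Bigr)e^{-u^2/2}\,du \le 2\int_t^\infty e^{-u^2/2}\,du .
\]
Therefore $\Pr[\bg\ge t] \ge \frac{1}{2\sqrt{2\pi}\,t}e^{-t^2/2}$.

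Combining the two bounds gives $\Pr[\bg\ge t]=\Theta\bigl(\frac1t e^{-t^2/2}\bigr)$ uniformly for $t>1$, with explicit constants $\frac{1}{2\sqrt{2\pi}}$ and $\frac{1}{\sqrt{2\pi}}$. The only point needing care is the lower bound. The hypothesis $t>1$ is exactly what keeps the factor $1+1/u^2$ bounded by an absolute constant. Without it, the $\frac1t$ form would fail as $t\to 0$, since the left side tends to $1/2$ while $\frac1t e^{-t^2/2}$ blows up.
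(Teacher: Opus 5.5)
Your proof is correct. The upper bound via $u/t \ge 1$ and the lower bound via the exact antiderivative $-\frac{1}{u}e^{-u^2/2}$ of $(1+u^{-2})e^{-u^2/2}$ give the explicit constants $\frac{1}{2\sqrt{2\pi}}$ and $\frac{1}{\sqrt{2\pi}}$, and the argument also covers $t=1$, which the paper implicitly needs in its case splits at $t_\gamma(x)\ge 1$. The paper itself gives no proof of this lemma and quotes it as a standard Gaussian (Mills-ratio) tail estimate; your argument is the classical proof of that fact.
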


We will also use the following simple ``reverse Markov'' inequality:
\begin{lemma}
[Reverse Markov] \label{lem:reverse-Markov}
Let $\kappa, u > 0$, and let $\bz$ be a real random variable that always takes values less than $u$ and that has $\E[\bz] \geq \kappa u$.  Then
$\Pr[\bz \geq {\frac \kappa 2}u] \geq {\frac \kappa 2}.$
\end{lemma}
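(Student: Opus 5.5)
We bound $\E[\bz]$ from above by splitting on the event $A := \{\bz \geq \frac{\kappa}{2}u\}$. The plan is to use the two available pieces of information separately. On $A$ we use only the global upper bound $\bz < u$. On the complement $A^c$ we use the defining inequality $\bz < \frac{\kappa}{2}u$.

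Writing $p := \Pr[A]$, this gives
\[
\kappa u \leq \E[\bz] = \E[\bz \mathbf{1}_A] + \E[\bz \mathbf{1}_{A^c}] \leq p\, u + (1-p)\frac{\kappa}{2} u \leq p\,u + \frac{\kappa}{2}u .
\]
The final inequality uses $1-p \leq 1$ together with $\frac{\kappa}{2}u > 0$; this holds since $\kappa, u > 0$. Note that $\bz$ may take negative values, but this causes no trouble: on $A^c$ we only need the upper bound $\bz \mathbf{1}_{A^c} \leq \frac{\kappa}{2}u\,\mathbf{1}_{A^c}$, and no lower bound is used.

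Rearranging gives $p\,u \geq \frac{\kappa}{2}u$. Dividing by $u>0$ yields $p \geq \frac{\kappa}{2}$, which is the claim.

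There is no genuine obstacle here. The only step requiring care is the bound on $A^c$. It must be an inequality for $\E[\bz \mathbf{1}_{A^c}]$, so that negative values of $\bz$ can only help. It must also avoid dropping the factor $(1-p)$ in the wrong direction, which is legitimate precisely because $\frac{\kappa}{2}u$ is positive. Integrability of $\bz$ is implicitly assumed, since the hypothesis $\E[\bz] \geq \kappa u$ refers to its expectation.
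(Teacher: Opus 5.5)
Your proof is correct and complete. Split on $A=\{\bz \ge \frac{\kappa}{2}u\}$ with $p=\Pr[A]$, bound $\bz$ by $u$ on $A$ and by $\frac{\kappa}{2}u$ on its complement, and you get $\kappa u \le pu + \frac{\kappa}{2}u$, hence $p \ge \kappa/2$. The paper states this lemma as a standard fact without proof, so there is no paper argument to compare against; yours is the standard one. Your argument only uses $\bz \le u$, which is the non-strict form in which the paper actually applies the lemma (in both filtering lemmas the relevant random variable can equal $u$).
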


\subsection{Origin-Centered Halfspaces}
\label{sec:OCH}
Recall that a \emph{halfspace} is a Boolean-valued function $f: \R^n \to \{-1,1\}$ of the form $f(x)=\sign(w \cdot x - \theta),$ for some $w \in \R^n$ and $\theta \in \R$, where
\[
\sign(t) := \begin{cases}
    1 & 
    \text{if~}t \geq 0\\
    -1 & \text{if~} t<0
\end{cases}
\]
and we view $-1$ as False, 1 as True (so the intersection of two halfspaces $h_1,h_2$ is equal to 1 if and only if $h_1=h_2=1$). We call $w$ the \emph{weight vector} and $\theta$ the \emph{threshold}.

Throughout the paper we will assume that all halfspaces are origin-centered, i.e.,~of the form $\sign(w \cdot x)$ with threshold zero.   It is a standard fact that this is without loss of generality, since we can view any $n$-dimensional example $x$ as an $(n+1)$-dimensional example $x' = (x,1)$, and any $n$-dimensional halfspace $h(x) = \sign(w \cdot x - \theta)$ as an $(n+1)$-dimensional origin-centered halfspace $h'(x') = \sign(w' \cdot x')$ with $w'=(w,-\theta)$, which satisfy $h'(x') = h(x)$. Working with origin-centered halfspaces will facilitate our use of the Forster Transform (see \Cref{remark:Forster}).

Since scaling the weight vector of an origin-centered halfspace does not change the label of any point, we will assume that $w$ is always a unit vector.

We will also assume without loss of generality that all of the halfspaces $h^{(i)}(x)=\sign(w^{(i)} \cdot x)$ in the target function $f=g(h^{(1)},\dots,h^{(k)})$ satisfy $w^{(i)} \cdot x \not = 0$ for any example $x$ in the finite sample drawn from the distribution that is used by our learning algorithm. A simple argument shows that any halfspace can be slightly perturbed to achieve this without changing its label of any example from our finite sample.

Finally, throughout the paper we view the dimension $n$ as an asymptotically large parameter.

\subsection{Margin and Radial Isotropic Position}

We begin by defining the margin of a point vis-a-vis a halfspace:

\begin{definition}[Margin]
    Given a halfspace $\sgn(w \cdot x)$ for some unit vector $w \in \R^{n}$, and given a unit vector $x \in \R^{n}$, the \emph{margin} of $x$ is defined as $|w\cdot x|$.
\end{definition} 

Typically, having examples with a large margin makes learning halfspaces easier. Indeed, if all points have large margin, then well-known approaches such as the Perceptron algorithm \cite{Novikoff:62,Rosenblatt:58} give fast learning algorithms. While we are unable to guarantee that all the examples we draw have large margin, we can ensure that \emph{some points} have large margin. 
More precisely, we will apply an algorithmic Forster transform (which we describe next) to our data set to ensure it satisfies \Cref{def:approx-RIP} below.  (Recall that for two $n \times n$ matrices $A,B$, we write ``$A\preceq B$'' to indicate precedence in the Loewner order, i.e.,~that $B-A$ is positive semi-definite, meaning that $x^T(B-A)x \geq 0$ for every nonzero $x \in \R^n$.)

\begin{definition}[$(1 + \eps)$-Radial Isotropic Position with respect to $V$] \label{def:approx-RIP}
    For $\eps \in [0,1/2]$ and a subspace $V \subseteq \R^n$, we say that a set of points $S \subseteq V$ is in \emph{$(1 + \eps)$-radial isotropic position with respect to $V$} if $\|x\|_2 = 1$ for all $x \in S$ and
        \[\frac{1 - \eps}{\dim(V)} \cdot P_V \preceq \frac{1}{|S|} \sum_{x \in S}  xx^T \preceq \frac{1 + \eps}{\dim(V)} \cdot P_V, \]
        where we write $P_V: \R^n \to \R^n$ to denote the orthogonal projection operator onto $V$.
\end{definition}

Crucially, this will imply that a non-trivial fraction of points have large margin for each of the halfspaces we are dealing with. As much of our analysis will deal with the fraction of points having a given margin, we make the following central definition.

\begin{definition} \label{def:margin}
Let $w \in \R^n$ be a unit vector.
We say that a finite data set $S$ of unit vectors has a \emph{$(p,\tau)$-margin with respect to $w$} if
at least $p|S|$ of the $|S|$ points $z \in S$ satisfy $|w \cdot z| \geq \tau.$
\end{definition}

We then have that

\begin{lemma} \label{lem:rad-iso-large-margin}
    If finite set $S \subseteq \R^n$ is in $(3/2)$-radial isotropic position with respect to $V$, then for any unit vector $w \in V$, the set $S$ has a $\left (\frac{1}{4n}, \frac{1}{2\sqrt{n}} \right)$-margin with respect to $w$. 
\end{lemma}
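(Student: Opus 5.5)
The plan is a reverse-Markov argument applied to the random variable $|w \cdot \bz|^2$, where $\bz$ is drawn uniformly from $S$. Write $d = \dim(V) \le n$, and fix a unit vector $w \in V$.

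\emph{Step 1: the mean.} Since $P_V w = w$, the lower Loewner bound in \Cref{def:approx-RIP} with $1+\eps = 3/2$ (so $\eps = 1/2$) gives
\[
\E_{\bz}\big[(w\cdot \bz)^2\big] \;=\; w^T\Big(\frac{1}{|S|}\sum_{x\in S} xx^T\Big)w \;\ge\; \frac{1-\eps}{d}\, w^T P_V w \;=\; \frac{1}{2d} \;\ge\; \frac{1}{2n}.
\]

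\emph{Step 2: the upper bound.} Every $z \in S$ is a unit vector and $w$ is a unit vector, so Cauchy--Schwarz gives $(w\cdot z)^2 \le 1$ always.

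\emph{Step 3: reverse Markov.} Let $p$ be the fraction of points of $S$ with $(w\cdot z)^2 \ge \tau^2 := \frac{1}{4n}$, that is, with $|w\cdot z| \ge \frac{1}{2\sqrt n}$. Splitting the expectation according to whether $(w\cdot z)^2 \ge \tau^2$ gives
\[
\frac{1}{2n} \;\le\; \E\big[(w\cdot\bz)^2\big] \;\le\; p\cdot 1 + (1-p)\cdot\frac{1}{4n} \;\le\; p + \frac{1}{4n}.
\]
Rearranging yields $p \ge \frac{1}{4n}$, which is exactly the claimed $\left(\frac{1}{4n},\frac{1}{2\sqrt n}\right)$-margin.

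This is the argument of \Cref{lem:reverse-Markov}, done directly. Invoking that lemma verbatim with $u=1$ and $\kappa = \frac{1}{2n}$ would also give the result, up to a harmless strict-versus-non-strict inequality issue at the value $u$. I expect no real obstacle here. The only points needing care are that $w \in V$, which is what makes $w^T P_V w = 1$, and that $\dim(V) \le n$.
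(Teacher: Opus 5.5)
Your proposal is correct and follows essentially the paper's proof: you lower-bound the average of $(w\cdot z)^2$ over $z \in S$ by $\frac{1}{2\dim(V)}$ using the Loewner lower bound together with $w^T P_V w = 1$, then split that second moment at the threshold, use $(w\cdot z)^2 \le 1$, and rearrange. The only cosmetic difference is that the paper splits at $\frac{1}{2\sqrt{\dim(V)}}$ and applies $\dim(V)\le n$ only at the end, whereas you pass to $n$ first; both give the claimed $\left(\frac{1}{4n},\frac{1}{2\sqrt{n}}\right)$-margin.
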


\begin{proof}
Fix any unit vector $w \in V$. Note that 
        \[\Ex_{\bx \sim S} [(w^T \bx)^2] = w^T \Ex_{\bx \sim S} [ \bx \bx^T] w \geq \frac{1}{2\dim(V)} w^T P_V w = \frac{1}{2\dim(V)}. \]
    Since $\|x\| \leq 1$, we must have $|w\cdot x| \leq 1$. Thus, 
        \[\Ex_{\bx \sim S} [(w^T \bx)^2] \leq \Prx_{\bx \sim S} \left[|w^T \bx| \geq \frac{1}{2 \sqrt{\dim(V)}} \right] + \left( \frac{1}{2 \sqrt{\dim(V)}} \right)^2.\]
    Rearranging and using that $\dim(V) \leq n$ then gives the desired result.
\end{proof}

Given a finite set of points $S \subseteq \R^n$ where $0^n \notin S$ we can always efficiently put a significant fraction of them into radial isotropic position with respect to some subspace $V$ by using an algorithm of Diakonikolas, Tzamos, and Kane:

\begin{theorem}[Algorithmic Forster Transform \cite{diakonikolas2023strongly}\footnote{We remark that this is a slight rephrasing of the result in \cite{diakonikolas2023strongly}. In particular, they set $A: V \rightarrow \mathbb{R}^{\dim(V)}$ and guarantee that $S'$ in $(ii)$ is in $(1+\eps)$-radial isotropic position with respect to $\R^{\dim(V)}$. This is equivalent to our formulation as we can simply apply an isometry from $\R^{\dim(V)}$ to $V$. Having $A$ map from $V$ to $V$ will be convenient in our algorithm and analysis below as it keeps us in the same ambient space $\mathbb{R}^n$.}]
\label{thm:forster}
There exists a randomized algorithm \forster\ that, given a multi-set $S \subseteq \R^n \setminus \{0\}$ and $\eps \in (0,1)$, runs in time $\poly(|S|n/\eps)$ and with high probability returns a subspace $V \subseteq \R^n$ with $V \not  = 0$, a linear transformation $A: V \rightarrow V$, and a set of points $S' \subseteq V$ such that

\begin{itemize}

    \item [$(i)$] $|S \cap V| \geq |S| \cdot {\frac {\dim(V)} n}$, and
 
    \item [$(ii)$] $S' := \left\{ \frac{Ax}{\|Ax\|}: x \in S \cap V \right\} \subseteq V$ is in $(1 + \eps)$-radial isotropic position with respect to $V$.
    
\end{itemize}
 
\end{theorem}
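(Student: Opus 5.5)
The plan is to derive this rephrased statement from the algorithmic Forster transform of \cite{diakonikolas2023strongly}, and only to translate the output format. The cited result says: given a multiset $S \subseteq \R^n \setminus \{0\}$ and $\eps$, one can in time $\poly(|S|n/\eps)$, with high probability, find one of two things. Either it finds an invertible linear map $B$ on $\R^n$ such that the normalized points $Bx/\|Bx\|$ are in $(1+\eps)$-approximate radial isotropic position. Or it finds a proper nonzero subspace $W$ that contains at least a $\dim(W)/n$ fraction of the points. This second case is the standard certificate that exact Forster position does not exist. The reduction runs this procedure repeatedly, restricting to the subspace each time it is returned.

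Concretely, I would set $V_0 = \R^n$ and $S_0 = S$. At stage $j$ I fix an isometry $\iota_j:\R^{\dim V_j}\to V_j$ and run the cited algorithm on $\iota_j^{-1}(S_j)$, where $S_j = S \cap V_j$. If it returns a transformation $B_j$, I stop. I output $V = V_j$ and $A = \iota_j B_j \iota_j^{-1}$, which maps $V$ to $V$. Radial isotropic position in coordinates transfers to radial isotropic position with respect to $V$, with the projection $P_V$ replacing the identity, because $\iota_j\iota_j^{T} = P_V$ and normalization commutes with isometries. If instead it returns a subspace $W_j$ with $|S_j \cap W_j| \ge |S_j|\dim(W_j)/\dim(V_j)$, I set $V_{j+1} = \iota_j(W_j)$ and recurse.

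The dimension strictly decreases at each stage, so there are at most $n$ stages. Each stage costs $\poly(|S|n/\eps)$, and a union bound over the $n$ stages preserves the high-probability guarantee after amplifying each call. For property $(i)$, I telescope the ratios:
\[
|S\cap V| \;\ge\; |S|\prod_j \frac{\dim V_{j+1}}{\dim V_j} \;=\; |S|\,\frac{\dim V}{n}.
\]
The process always terminates with $V \neq 0$. Since $0\notin S$, every set $S_j$ is nonempty. A one-dimensional subspace containing points is trivially in exact radial isotropic position, because all normalized points are $\pm$ a single unit vector and $xx^T = P_V$. So the dense-subspace case cannot recur below dimension $1$.

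The main obstacle is making sure the cited result really has this "either near-Forster position or a dense subspace" form. This includes checking that its subspace density is measured relative to the current dimension, with the correct inequality direction, and that it works with multisets. If the original only guarantees the weaker density $\dim W/\dim V_j - \eps'$, I would take $\eps'$ inverse-polynomially small and absorb it. If that breaks the exact inequality in $(i)$, I would accept a $(1-o(1))$ factor, which suffices for later use.
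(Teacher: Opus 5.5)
Your argument is correct, but it takes a longer route than the paper. The paper's whole justification is its footnote: the result of \cite{diakonikolas2023strongly} is taken to already have the ``dense subspace plus transform'' form. That is, the cited algorithm itself returns a subspace $V$ with $|S\cap V|\ge |S|\dim(V)/n$, together with a map $A:V\to\R^{\dim(V)}$ that puts the normalized points of $S\cap V$ in $(1+\eps)$-radial isotropic position with respect to $\R^{\dim(V)}$. The only remaining step is to compose $A$ with an isometry $\R^{\dim(V)}\to V$. Your congruence argument, using $\iota\iota^T=P_V$ and the fact that normalization commutes with isometries, is exactly that step made explicit.

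You instead start from a weaker ``transform-or-certificate'' form of the reference and rebuild the subspace guarantee yourself. You recurse on the returned dense subspace and telescope the density ratios to get $|S\cap V|\ge|S|\dim(V)/n$. Termination with $V\neq 0$ follows because a point set on a line is trivially in exact radial isotropic position, or equivalently because a line has no proper nonzero subspace. This is the standard way the subspace version is derived from the dichotomy. So your proof is more self-contained and would still go through if the reference only stated the dichotomy. If the reference already outputs $(V,A)$, as the paper asserts, the recursion and the union bound are unnecessary.

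Two small corrections. First, nonemptiness of each $S_j$ follows from your telescoped bound $|S\cap V_j|\ge |S|\dim(V_j)/n>0$, not from $0\notin S$; the latter is only needed so that $Ax/\|Ax\|$ is defined. Second, your fallback of accepting a $(1-o(1))$ factor would not prove item (i) as stated. It would, however, be harmless for the paper, which uses (i) only to say that each Forsterization keeps at least a $1/n$ fraction of the points.
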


Throughout the paper, we write ``\forster'' to denote an invocation of the \forster\ algorithm with its $\eps$-parameter set to 1/2.

Note that the \forster\ algorithm, strictly speaking, takes as input a multi-set of unlabeled examples. When there is no risk of confusion we sometimes write \forster$(S)$, where $S$ is a multi-set of labeled examples $\{(x^{(1)},y^{(1)}),\dots,(x^{(|S|)},y^{(|S|)})\}$, to mean $\forster(\{x^{(1)},\dots,x^{(|S|)}\}$, and we view the output $S'$ of \forster$(S)$ in such cases as being the corresponding set of labeled examples (see \Cref{remark:Forster}).

\begin{remark} [The Forster transform is compatible with origin-centered halfspaces]
\label{remark:Forster}
We observe that if $h^{(i)}(x) = \sign(w^{(i)}\cdot x)$ is any one of the $k$ halfspaces over $\R^n$ in the target function and $S,V,A,S'$ are as in \Cref{thm:forster}, then for all nonzero vectors $x \in V$ we have that $h^{(i)}(x) = \sign(w' \cdot {\frac {Ax}{\|Ax\|}})$, where
$w' = \frac{A^{-1}P_V w^{(i)}}{\|A^{-1}P_V w^{(i)}\|} \in V$. (Note this will always be well-defined since $w^{(i)}\cdot x \not = 0$ for all $x $ in $S$ by assumption as discussed near the end of \Cref{sec:OCH}.)
In words, the property of being labeled by an origin-centered halfspace (or by a function of origin-centered halfspaces) is ``preserved under Forsterization.''
\end{remark}


\section{Warm-Up: Learning an Intersection of Two Halfspaces} \label{sec:warmup}

In this section we prove the following theorem:

\begin{theorem} \label{thm:2}
There is a distribution-free PAC learning algorithm for the class of all intersections of two halfspaces over $\R^n$, which learns to accuracy $\eps$ and confidence $1-\delta$, running in time $\poly(2^{\tilde{O}(\sqrt{n})},1/\eps,\log(1/\delta))$ and using $\poly(2^{\tilde{O}(\sqrt{n})},1/\eps,\log(1/\delta))$ examples.
\end{theorem}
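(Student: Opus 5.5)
We reduce the statement to constructing a weak learner over a fixed finite sample, and then apply standard boosting and generalization arguments. Let $S$ be a multiset of $m$ labeled examples, labeled by $f = h_1 \wedge h_2$ with $h_i(x)=\sign(w^{(i)}\cdot x)$; by \Cref{sec:OCH} we may take both halfspaces origin-centered. The core task is a randomized procedure running in time $\poly(m, 2^{\tilde O(\sqrt n)})$ that, with probability at least $2^{-\tilde O(\sqrt n)}$, outputs a hypothesis agreeing with $f$ on a $\frac12+\gamma$ fraction of $S$, where $\gamma = 2^{-\tilde O(\sqrt n)}$. The hypothesis is taken from a class of low VC dimension: on one region (an intersection of at most two halfspaces) it outputs a halfspace or a constant, and elsewhere it outputs a constant, so the VC dimension is $O(n)$. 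We repeat the procedure $2^{\tilde O(\sqrt n)}$ times and keep the candidate with the best empirical accuracy on $S$; this gives a weak learner on $S$. Boosting (e.g.\ AdaBoost, which reweights $S$, so the weak learner must work for weighted samples) combined with VC/Occam generalization bounds, with $m = \poly(2^{\tilde O(\sqrt n)},1/\eps,\log(1/\delta))$, then yields \Cref{thm:2}.

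To build the weak learner, first apply \forster\ (\Cref{thm:forster}) to the points of $S$. By \Cref{thm:forster}(i), the subspace $V$ captures at least a $1/n$ fraction of $S$. By \Cref{remark:Forster}, the transformed set $S'$ is still labeled by an intersection of two origin-centered halfspaces $w'_1, w'_2 \in V$. By \Cref{lem:rad-iso-large-margin}, $S'$ has a $(\frac{1}{4n},\frac{1}{2\sqrt n})$-margin with respect to $w'_1$. Hence at least a $\frac{1}{8n}$ fraction of $S'$ lies on one side with margin at least $\frac{1}{2\sqrt n}$; we guess which side, positive or negative. Points outside $V$ receive the majority label, which costs only a constant-factor loss in the advantage.

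Next, draw $\bg \sim \calN(0,\frac1n I_n)$ and set $\alpha = C\sqrt{\log n}/n^{1/4}$. By \Cref{lem:gauss-lb}, $w'_1\cdot \bg \ge \alpha$ holds with probability $e^{-O(n\alpha^2)} = 2^{-\tilde O(\sqrt n)}$; condition on this event. Define $R = \{x : \bg\cdot x \ge \alpha/10\}$ in the positive case and $R=\{x:\bg\cdot x\le -\alpha/10\}$ in the negative case. The key estimate concerns a fixed unit vector $x$. We decompose $\bg\cdot x = (w'_1\cdot x)(w'_1\cdot\bg) + \bg_\perp\cdot x_\perp$, where the second term is an independent Gaussian with variance at most $1/n$. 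The conditional distribution of $w'_1\cdot \bg$ given that it is at least $\alpha$ concentrates near $\alpha$. We compare a point with $w'_1\cdot x \ge \frac{1}{2\sqrt n}$ against a point with $w'_1\cdot x<0$. For the first, the mean shift is at least $\alpha/(2\sqrt n)$; for the second, the shift is nonpositive. The required threshold is $\alpha/10$, and the noise has standard deviation $1/\sqrt n$. The ratio of the two Gaussian tail probabilities is then roughly $\exp(\Theta(\sqrt n\,\alpha\cdot \alpha\sqrt n/\sqrt n)) = \exp(\Theta(\alpha^2 n^{1/2}\cdot\ldots))$. Choosing $C$ large makes this ratio at least $n^3$, which is the content of \Cref{lem:advantage-simplified}. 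Averaging over $\bg$, $\E|S'\cap R\cap\{h_1\neq s\}| \le n^{-3}\cdot 8n \cdot \E|S'\cap R|$. Moreover, $\E|S'\cap R| \ge 2^{-\tilde O(\sqrt n)}|S'|$, since each large-margin point lands in $R$ with at least that probability. \Cref{lem:reverse-Markov} together with Markov's inequality then shows that, with probability $2^{-\tilde O(\sqrt n)}$, both of the following hold: $R$ captures a $2^{-\tilde O(\sqrt n)}$ fraction of $S'$, and at most an $O(n^{-2})$ fraction of $S'\cap R$ has $h_1 \ne s$.

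Given such an $R$, we finish as in \Cref{sec:techniques}. If $s=-1$, output $-1$ on $R$. If $s=+1$, the labels on $S'\cap R$ agree with $h_2$ except on an $O(n^{-2})$ fraction. Sample $O(n\log n)$ points of $S'\cap R$; with probability $1-o(1)$ all of them are labeled correctly by $h_2$. Use linear programming to find a consistent halfspace, which by VC generalization over the uniform distribution on $S'\cap R$ errs on only an $o(1)$ fraction of it. Outside $R$, output the majority label. The resulting advantage is $\Omega(|S'\cap R|/|S|) = 2^{-\tilde O(\sqrt n)}$. The main obstacle is the ratio estimate behind \Cref{lem:advantage-simplified}, i.e.\ getting the constants in $\alpha$ and the threshold $\alpha/10$ right so that both the $n^3$ ratio and the $2^{-\tilde O(\sqrt n)}$ capture probability hold. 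I would first try computing the conditional law of $w'_1\cdot\bg$ explicitly and bounding it via \Cref{lem:gauss-lb}.
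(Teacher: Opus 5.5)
Your outline follows the paper's proof. You Forsterize, condition $\bg\sim\calN(0,\frac1n I_n)$ on $w'_1\cdot\bg\ge\alpha=10\beta$, restrict to $\{\bg\cdot x\ge\beta\}$ or $\{\bg\cdot x\le-\beta\}$, and invoke \Cref{lem:advantage-simplified}. You then output $-1$ or an LP-fitted halfspace on the region and majority labels elsewhere, and finish by generalizing and boosting. Boosting by reweighting a fixed sample, with resampling to feed the weak learner, is an acceptable substitute for the paper's distribution-level weak learner plus \Cref{thm:boost}.

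\textbf{The gap.} The step that fails as written is the one you compress into \emph{reverse Markov together with Markov's inequality}; this is exactly the content of \Cref{lem:filtering-simple}. Suppose the two inequalities are applied separately and combined by a union bound. Let $\kappa=\E|S'\cap R|/|S'|=2^{-\tilde O(\sqrt n)}$. Reverse Markov guarantees $|S'\cap R|\ge\frac\kappa2|S'|$ only with probability $\frac\kappa2$. So Markov's inequality for the wrong-side count $Z$ must be used with failure probability below $\frac\kappa2$. That bounds $Z$ only by $O(\frac1\kappa\E Z)=O(\frac1\kappa\cdot n^{-2}\,\E|S'\cap R|)=O(n^{-2}|S'|)$. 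This is vastly larger than the guaranteed size $\frac\kappa2|S'|$ of $S'\cap R$, so the argument yields no impurity bound. The two counts must be coupled before any tail bound is applied.

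\textbf{The fix.} The paper applies \Cref{lem:reverse-Markov} to the single variable $\bz=\frac{16}{n^{4/3}}|\{x\in S'\cap R: w'_1\cdot x\ge\frac{1}{2\sqrt n}\}|-|\{x\in S'\cap R: w'_1\cdot x\le 0\}|$ (positive case). This variable never exceeds $u=\frac{16}{n^{4/3}}|S'|$. By the advantage bound and the per-point capture probability, $\E[\bz]\ge 2^{-O(\sqrt n\ln n)}|S'|$. Let $\kappa'=\E[\bz]/u$. On the event $\bz\ge\frac{\kappa'}{2}u$ you get impurity at most $\frac{16}{n^{4/3}}$ and $|S'\cap R|\ge\frac{\kappa'}{2}|S'|$ simultaneously. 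Equivalently, you could apply Markov to the impurity ratio under the size-biased measure with density $|S'\cap R|/\E|S'\cap R|$.

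\textbf{Two smaller inaccuracies, both harmless.} First, in original coordinates the region $\{x\in V:\bg\cdot Ax\ge\beta\|Ax\|\}$ is a cone, not an intersection of two halfspaces, because of the normalization $Ax/\|Ax\|$ and the nonzero threshold. The hypothesis class is therefore built from a subspace indicator, a degree-2 PTF and halfspaces, and has VC dimension $O(n^2)$ rather than $O(n)$. Second, restricting to $V$ costs a factor $1/n$ in the advantage (since $|S\cap V|\ge|S|/n$), not a constant. Neither affects the $2^{\tilde O(\sqrt n)}$ bounds.
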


\Cref{thm:2} is easily seen to be a special case of \Cref{thm:mainintro}; we give a self-contained proof of \Cref{thm:2} in this section as a warm-up, and to highlight the ideas which suffice for learning an intersection of two halfspaces.  (As discussed in \Cref{sec:techniques}, additional conceptual and technical ingredients are required for the more ambitious goal of learning arbitrary functions of $k>2$ halfspaces.)

The key to \Cref{thm:2} will be the \warmup{}
 algorithm that is given in \Cref{alg:weak-learn-two} below.
At an intuitive level, lines~2(a)-2(c) of the algorithm attempt to generate a hypothesis which ``abstains'' outside the region $R_-(\bg)$ and outputs the constant $-1$ within that region, while lines~2(d)-2(e) try to generate a hypothesis which ``abstains'' outside the region $R_+(\bg)$ and predicts according to a linear threshold function within that region.
Our analysis will show that each time through the Line~2 loop, there is at least a $2^{-\tilde{O}(\sqrt{n})}$ probability that either Lines~2(a)-2(c) or Lines~2(d)-2(e) succeed in constructing a hypothesis with accuracy at least ${\frac 1 2} + 2^{-\tilde{O}(\sqrt{n})}$ on the input data set $S$ of labeled examples; this is established in the proof of \Cref{lem:weak-learn-over-sample}, which is the main result of \Cref{sec:advantage}. Once we have shown this, it is straightforward to use \warmup{} to obtain a distribution-independent PAC learning algorithm via standard generalization error arguments and accuracy boosting methods, thereby proving \Cref{thm:2} (see \Cref{ap:proof-theorem-2} for details of how this is done).

{\small 
\begin{algorithm}
\addtolength\linewidth{-2em}
\vspace{0.5em}

\textbf{Input:} A sequence $S$ of $2^{\Omega(\sqrt{n} \log (n))}$ labeled examples 
$(x,y)$
where each 
$x \in \R^n$ 
and each 
$y=f(x)$ for an unknown $f: \R^n \to \bits$ that is an intersection of two halfspaces, 
$f(x) = \sgn(w^{(1)} \cdot x) \land \sgn(w^{(2)} \cdot x)$.\\[0.25em]
\textbf{Output:} A hypothesis $h: \R^n \to \bits$ or FAIL.\\[0.5em]

\warmup($S$):\\

\begin{enumerate}
	\item \vskip.05in Let $V,A,S' \gets \forster(S)$.

        \item \vskip -.05in Set $\beta := \frac{\sqrt{\ln(n)}}{n^{1/4}}$, and 
        repeat the following $2^{\widetilde{O}(\sqrt{n})}$ times: 
        \vskip -0.05in
        \label{line:warmup-loop}
      
        \begin{itemize}
            \item [(a)] Sample $\bg \in \R^n$ from the distribution $\calN(0,\frac{1}{n} I_n)$.
            \item [(b)] 
            Let $R_-(\bg) := \{x \in \R^n: \bg \cdot x \leq -\beta \}$.
            \item [(c)] 
            For each $b_1,b_2 \in \bits,$
            define $\bh_{b_1,b_2,-}: \R^n \to \bits$ as follows:  
                \[\bh_{b_1,b_2,-}(x) := \begin{cases} b_1 & \text{if~}x \not \in V \\ 
                b_2 & \text{if~}x \in V \text{~and~} \frac{Ax}{\|Ax\|} \not \in R_-(\bg) 
                \\ -1 & \text{if~} x \in V \text{~and~} \frac{Ax}{\|Ax\|} \in R_-(\bg) 
                \end{cases}.\]
            If some $\bh_{b_1,b_2,-}$ agrees with at least $1/2 + 2^{-O(\sqrt{n\ln(n)})}$ of the examples in  $S$, output that $\bh_{b_1,b_2,-}$ and halt.
            \end{itemize}

            \begin{itemize}
            \item [(d)] Let $R_+(\bg) := \{x \in \R^n: \bg \cdot x \geq \beta \}$.
            \item [(e)] If $|R_+(\bg) \cap S'| \geq 2^{-\Omega(\sqrt{n})} |S'|$:
                \begin{itemize}
                    \item [(i)] Sample $n\ln(n)$ points $\bx^{(1)}, \dots, \bx^{(n\ln(n))}$ uniformly and independently from $R_+(\bg) \cap S'$.
                    \item [(ii)] Via linear programming,  find a $w \in \R^m$ such that $\sign(w \cdot \bx^{(i)}) = f(\bx^{(i)})$ for all $i \in [n\ln(n)]$ if such a $w$ exists. 
                    \item [(iii)] If such a $w$ exists, 
                    for each $b_1,b_2 \in \bits,$ define $\bh_{b_1,b_2,+}: \R^n \to \bits$ as follows: 
                        \[\bh_{b_1,b_2,+}(x) := \begin{cases} b_1 & \text{if~}x \not \in V \\
                        b_2 & \text{if~}x \in V \text{~and~}\frac{Ax}{\|Ax\|} \not \in R_+(\bg) 
                        \\
                        \sgn \left(w \cdot Ax \right) & \text{if~}x \in V \text{~and~}\frac{Ax}{\|Ax\|} \in R_+(\bg) 
                        \end{cases}.\]
                        If some $\bh_{b_1,b_2,+}$ agrees with at least $1/2 + 2^{-O(\sqrt{n}\ln(n))}$ of the examples in  $S$, output that $\bh_{b_1,b_2,+}$ and halt.
                \end{itemize}
        \end{itemize}
    \item \vskip-.05in Return FAIL.
\end{enumerate}
\caption{A Weak Learner for Intersections of Two Halfspaces Over a Sample}
\label{alg:weak-learn-two}
\end{algorithm}
}

The rough intuition for the algorithm is that we wish to use our guess $\bg$ to restrict our attention either 

\begin{enumerate}[label=(\Alph*), ref=\Alph*]
  \item\label{item:A} to a region $R_-(\bg)$ where almost all points in $S'$ \emph{don't} satisfy $w^{(1)}$, or
  \item\label{item:B} to a region $R_+(\bg)$ where almost all points in $S'$ \emph{do} satisfy $w^{(1)}$.
\end{enumerate}

Note that in the first case, $f$ should label almost all points in $R_-(\bg) \cap S'$ negatively, since $\sign(w^{(1)} \cdot x)$ is almost never satisfied. On the other hand, in the second case almost all points in $R_+(\bg) \cap S'$ are labeled according to \emph{just one} halfspace, $\sign(w^{(2)} \cdot x)$. Since there is (almost) no noise, sampling a set of $n\ln(n)$ points and finding a consistent halfspace should then have high accuracy over $R_+(\bg) \cap S'$.

A high-level overview of why the \forster\ algorithm achieves our wish is the following: we first apply a Forster transform to ensure that (i) the size of $S'$ is a non-trivial fraction (at least $1/n$) of the size of $S$, and (ii) moreover, by \Cref{lem:rad-iso-large-margin}, a non-trivial fraction of the points in $S'$ have a large margin with respect to $w^{(1)}$. 
We will repeatedly guess random Gaussian vectors $\bg$, in the hope that we find one with $\bg \cdot w^{(1)} \gtrsim n^{-1/4}$ (note that the probability that a random $\bg$ has this property is $\gtrsim 2^{-\sqrt{n}}$). 
We will show that points with a large positive margin with respect to $w^{(1)}$ are much more likely to appear in $R_+(\bg)$ than points with $w^{(1)} \cdot x \leq 0$ --- this is the upshot of \Cref{lem:advantage-simplified}, which is the main lemma that we use. Thus, assuming our guess is good, we indeed expect $R_+(\bg) \cap S'$ to mostly contain points that satisfy $w^{(1)}$. An analogous argument explains why $R_-(\bg) \cap S'$ should mostly consist of points that do not satisfy $w^{(1)}$.

\FloatBarrier

\subsection{Achieving Nontrivial Accuracy on the Input Data set $S$} \label{sec:advantage}

\begin{definition}[Lucky vector]
    We say that a vector $g \in \R^n$ is \emph{$\alpha$-lucky with respect to unit vector $w \in \R^n$} if $w \cdot g \geq \alpha$.
\end{definition}

Throughout \Cref{sec:warmup} we will fix the parameters
\begin{equation}
    \label{eq:alpha-beta-warmup}
    \alpha := 10 \beta, 
    \quad \quad \beta := {\frac {\sqrt{\ln n}}{n^{1/4}}} \quad \text{as in \Cref{alg:weak-learn-two}}.
\end{equation}
Our analysis will condition on $\bg$ being $\alpha$-lucky with respect to $w' := {\frac {A^{-1}P_V w^{(1)}}{\|A^{-1} P_V w^{(1)}\|}}$, 
where $\bg \sim \calN(0,\frac{1}{n} I_n)$ as in Line~3(a). Note that, by \Cref{lem:gauss-lb}, $\bg$ is $\alpha$-lucky with respect to $w'$ with probability $2^{-\wt{O}(\sqrt{n})}$. Since we run the loop in Line \ref{line:warmup-loop} for $2^{\wt{\Omega}(\sqrt{n})}$ times, we expect to sample many $\alpha$-lucky guesses $\bg$.

The next definition captures the idea that a random region $\bR$ of $\R^n$ is more likely to contain one particular unit vector, $x,$ than another one, $\xref$. (All of our analysis in this section deals with unit vectors $x,\xref \in \R^n$; the vectors $x,\xref$ should be thought of as points in $S'$.)

\begin{definition} \label{def:advantage}
Fix $x,\xref$ to be unit vectors in $\R^n$ and let $\bR$ be a random variable which takes values that are regions of $\R^n$.   We say that the
\emph{advantage of $x$ over $\xref$ in $\bR$} is
\[
\Adv(x,\xref,\bR) := {\frac {\Pr[x \in \bR]} {\Pr[\xref \in \bR]}}.
\]
(We will usually consider advantages that are $\gg 1$.)
\end{definition}

The random regions that we will be interested in are the following:

\begin{definition} \label{def:Rsets}
Given a
(not necessarily Gaussian) random vector $\bg \in \R^n$, define random variables $R_{-}(\bg), R_+(\bg) \subset \R^n$ as in \Cref{alg:weak-learn-two}, i.e.,
\begin{align*}
R_{-}(\bg) &:=   \{x \in \R^n: \bg \cdot x \leq - \beta\},\\
R_+(\bg) &:=   \{x \in \R^n: \bg \cdot x \geq \beta\}.
\end{align*}
\end{definition}

Note that when we combine \Cref{def:advantage} with \Cref{def:Rsets}, we get that
\[
\Adv(x,\xref,R_+(\bg))  = \frac{\Pr[x \in R_{+}(\bg)]}{\Pr[\xref \in R_{+}(\bg)]}
= {\frac {\Pr_{\bg}[\bg \cdot x \geq \beta]}{\Pr_{\bg}[\bg \cdot \xref \geq \beta]}}
\]
and likewise for $\Adv(x,\xref,R_-(\bg))$; we will use this often below.

\subsubsection{The Advantage Lemma} 

The following lemma is our main tool to prove \Cref{thm:2}, our main result for learning an intersection of two halfspaces.  Intuitively, it says that if $\bg$ is $\alpha$-lucky for a unit vector $w$, then a point $x$ with a ``large positive margin'' w.r.t.~$w$ is much more likely to be in the region $R_+(\bg)$ than a point $\xref$ that has a non-positive margin w.r.t.~$w$.
\begin{lemma}[Advantage Lemma, Simplified Version for Intersection of Two Halfspaces] \label{lem:advantage-simplified}
Let $x,\xref, w$ be unit vectors in $\R^n$ satisfying $w \cdot \xref \leq 0$ and $w \cdot x \geq {\frac 1 {2\sqrt{n}}}$.
Let $\bg \sim \calN(0, \frac{1}{n} I_n)$ and let $\tbg$ denote $\bg$ conditioned on it being $\alpha$-lucky for $w$.  Then\footnote{Note that the particular constant $7/3$ is not significant and we could have obtained a range of different constants here; $7/3$ was chosen because it is convenient for our later arguments to have here a constant strictly greater than 2.}
\[
\Adv(x,\xref, R_{+}(\tbg)) \geq n^{7/3}.
\]
\end{lemma}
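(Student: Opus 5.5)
The plan is to decompose $\bg$ along $w$ and its orthogonal complement, so that the conditioning on luckiness touches only one coordinate. Write $\bg = \bt\, w + \bg_\perp$, where $\bt := w\cdot \bg \sim \calN(0,1/n)$ and $\bg_\perp$ is the projection of $\bg$ onto $w^\perp$. By rotational invariance, $\bg_\perp$ is independent of $\bt$. Conditioning on $\alpha$-luckiness therefore only conditions $\bt$ to satisfy $\bt \ge \alpha$, and leaves $\bg_\perp$ distributed as before.

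For any unit vector $z$, write $z = (w\cdot z) w + z_\perp$. Then $\tbg\cdot z = \bt\,(w\cdot z) + \bg_\perp\cdot z_\perp$. Here $\bg_\perp\cdot z_\perp \sim \calN(0,(1-(w\cdot z)^2)/n)$, independent of $\bt$. I would bound the numerator and denominator of $\Adv(x,\xref,R_+(\tbg))$ separately.

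\textbf{Denominator.} Since $w\cdot\xref\le 0$ and $\bt\ge\alpha>0$, we have $\tbg\cdot\xref \le \bg_\perp\cdot(\xref)_\perp$. This variable has variance at most $1/n$. By \Cref{lem:gauss-lb},
\[
\Pr[\xref\in R_+(\tbg)] \le \Pr[\calN(0,1/n)\ge\beta] = \Theta\!\left(\tfrac{1}{\sqrt n\,\beta}\right) e^{-n\beta^2/2},
\]
where $n\beta^2 = \sqrt n\ln n$.

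\textbf{Numerator.} Let $m := w\cdot x \ge \frac1{2\sqrt n}$. Since $m>0$, the probability $\Pr[\bt m + \bg_\perp\cdot x_\perp \ge \beta \mid \bt]$ is nondecreasing in $\bt$. It therefore suffices to lower bound it at $\bt=\alpha=10\beta$, where it equals $\Pr[\calN(0,(1-m^2)/n)\ge \beta(1-10m)]$.

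If $m\ge 1/10$, the threshold is nonpositive, so the probability is at least $1/2$. The advantage is then at least $e^{\Omega(\sqrt n\ln n)}$, which is more than enough.

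If $m<1/10$, I would compare Gaussian exponents using \Cref{lem:gauss-lb}. The numerator is $\gtrsim \frac{1}{\sqrt n\beta}\exp\bigl(-\tfrac{n\beta^2}{2}\cdot\tfrac{(1-10m)^2}{1-m^2}\bigr)$. The prefactors of the two tails differ only by constant factors, since both thresholds are $\Theta(\beta)$ and the standard deviations are $\Theta(1/\sqrt n)$. The elementary inequality to check is $\frac{(1-10m)^2}{1-m^2} \le 1 - c\,m$ for $m\in(0,1/10)$, with $c$ close to $20$; $c\ge 10$ suffices. Given this, the exponent gain is at least
\[
\tfrac{n\beta^2}{2}\cdot c\,m \ge \tfrac{c}{2}\cdot\sqrt n\ln n\cdot\tfrac{1}{2\sqrt n} = \tfrac{c}{4}\ln n,
\]
so the ratio is $\Omega(n^{c/4})\ge n^{7/3}$ for large $n$. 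With $c\approx 20$, as the expansion $(1-20m+100m^2)(1+m^2+\dots)\approx 1-20m$ suggests, this gives roughly $n^5$.

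\textbf{Main obstacle.} The main obstacle is the small-$m$ numerator. The boost from luckiness is only $\alpha m\approx 5\beta/\sqrt n$, a vanishing fraction of $\beta$. The argument therefore relies on the tail ratio $\exp(n\beta\cdot\delta)$ with $\delta=\alpha m$: the choice $\alpha=10\beta$ and $n\beta^2=\sqrt n\ln n$ make this polynomially large. Two points need care here: the $1-m^2$ variance reduction must not eat the gain, which the inequality above handles, and the $\Theta$-constants in \Cref{lem:gauss-lb} must be absorbed into a slack factor of $n^{c/4-7/3}$.
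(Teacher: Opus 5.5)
Your proposal is correct and follows essentially the same route as the paper: condition on $w\cdot\bg$, use monotonicity to reduce to $w\cdot\bg=\alpha$, and compare Gaussian tails, where your inequality $\frac{(1-10m)^2}{1-m^2}\le 1-10m$ on $(0,1/10)$ is the same exponent computation as the paper's bound on $\frac{20x_1-101x_1^2}{1-x_1^2}$, giving $n^{2.5}$ versus the paper's $n^{2.475}$. One small imprecision needs fixing, namely the claim that both thresholds are $\Theta(\beta)$: the numerator threshold is not uniformly $\Theta(\beta)$ as $m\to 1/10$, and the normalized threshold can even fall below $1$, outside the range of \Cref{lem:gauss-lb}. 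Only one-sided facts are needed, though. That threshold is at most $\beta\sqrt{n}/\sqrt{0.99}$, so its $1/s$ prefactor is at least a constant times the denominator's, and a normalized threshold below $1$ gives a $\Theta(1)$ numerator; this is exactly how the paper handles it, via its separate case $t_\alpha(x)<1$ and the bound $t_\alpha(x)\le O(n^{1/4}\sqrt{\ln n})$.
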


\begin{proof}
Without loss of generality let $w = e_1$ and note that as a consequence of this convention and our assumptions, we have that $\tbg_1 \geq \alpha$, $(x_{\mathrm{ref}})_1 \leq 0,$ and $x_1 \geq {\frac 1 {2\sqrt{n}}}$.
    Fix some $\gamma \geq \alpha$ and
    condition on $\tbg_1 = \gamma$. It then follows that $(\tbg_2, \dots, \tbg_n)$ is drawn according to $\calN(0,\frac{1}{n} I_{n-1})$. Now define the function $t_\gamma: \R^n \to \R,$
    \begin{equation}
        \label{eq:tdef}
    t_{\gamma}(z) := \frac{\beta - \gamma z_1}{\sqrt{1 - z_1^2}} \cdot \sqrt{n},
    \quad \quad \text{so} \quad \quad
    t_{\gamma}(\xref) \geq \beta \sqrt{n}.
    \end{equation}
Note that since $(\tbg_2, \dots, \tbg_n) \sim \calN(0,\frac{1}{n} I_{n-1})$, it follows that
\begin{equation}\Prx_{\tbg} \left[x \in R_+(\tbg) \bigg |\, \tbg_1 = \gamma \right] = \Prx_{\by \sim \calN(0,1)} [\by \geq t_\gamma(x)] = \begin{cases} \Theta(\frac{1}{t_{\gamma}(x)} e^{-t_{\gamma}(x)^2/2}) & \text{if~}t_{\gamma} (x)\geq 1 \\ \Theta(1) & \text{otherwise} \end{cases},
\label{eq:carrot-simple}
\end{equation} 
and likewise
\begin{equation}\Prx_{\tbg} \left[\xref \in R_+(\tbg)  \right] = \Prx_{\by \sim \calN(0,1)} [\by \geq t_\gamma(\xref)] = \Theta\pbra{\frac{1}{\beta \sqrt{n}} e^{-n\beta^2/2}} =
\Theta\pbra{
{\frac {e^{- \sqrt{n} \ln(n)/2}} {n^{1/4} \sqrt{\ln n}}}
},
\label{eq:beet-simple}
\end{equation} 
where in both cases we used \Cref{lem:gauss-lb} (note that in \Cref{eq:beet-simple} we know that $t_\gamma(\xref) \geq \beta \sqrt{n}$ is $>1$ by assumption on $\beta$).

Since the Gaussian tail $\Pr_{\by \sim {\cal N}(0,1)}[\by \geq t]$ is a decreasing function of $t$ and $t_\gamma(x)$ is decreasing in $\gamma$, the probability given in \Cref{eq:carrot-simple} is increasing in $\gamma$. 
Thus, for each $\gamma \geq \alpha$, we have that

\[\Prx_{\tbg} \left[x \in R_+(\tbg) \bigg | \tbg_1 = \gamma \right] \geq \begin{cases} \Omega\pbra{\frac{1}{t_{\alpha}(x)} e^{-t_{\alpha}(x)^2/2}} & \text{if~}t_{\alpha}(x) \geq 1 \\ \Theta(1) & \text{otherwise} \end{cases}. \]
Since $\tbg$ is a mixture over outcomes with $\tbg_1=\gamma$ as $\gamma$ ranges over $[\alpha,\infty)$, we have
\begin{equation} \label{eq:parsnip-simple}
\Prx_{\tbg} \left[x \in R_+(\tbg)  \right] \geq \begin{cases} \Omega\pbra{\frac{1}{t_{\alpha}(x)} e^{-t_{\alpha}(x)^2/2}} & \text{if~}t_{\alpha}(x) \geq 1 \\ \Theta(1) & \text{otherwise} \end{cases}.
\end{equation}
We now consider two cases depending on the value of $t_\alpha(x).$

The first case is that $t_{\alpha}(x) < 1$; in this case, recalling \Cref{def:advantage}, \Cref{eq:beet-simple,eq:parsnip-simple} give us that
\[ \Adv(x,\xref, R_{+}(\tbg)) 
= \Theta\pbra{
e^{\sqrt{n} \ln(n)/2} \cdot n^{1/4} \ln n
} \geq n^{7/3}, 
\]
where the inequality holds recalling that $n$ is an asymptotically large parameter.

The second case is that $t_{\alpha}(x) \geq 1$; note that this implies ${\frac 1 {2 \sqrt{n}}} \leq x_1 \leq 1/10.$ In this case, again by \Cref{eq:beet-simple,eq:parsnip-simple}
we have 
\begin{equation}
\Adv(x,\xref, R_{+}(\tbg)) = \Omega \left( \frac{n^{1/4} \sqrt{\ln(n)}}{t_\alpha(x)} e^{-t_{\alpha}(x)^2/2 + \sqrt{n}\ln(n)/2}\right).
\label{eq:sugakix-simple}
\end{equation}
Recalling \Cref{eq:tdef}, we can then compute that
\begin{align*}
{\frac {-t_\alpha(x)^2} 2} + {\frac {\sqrt{n}\ln(n)}2} 
&= {\frac {n} {1-x_1^2}}
\pbra{
{\frac {-\beta^2 + 2\alpha \beta x_1 - \alpha^2 x_1^2} 2}
} 
+ {\frac {\sqrt{n} \ln(n)}{2}}\\
&={\frac {\sqrt{n}} {2(1-x_1^2)}}\cdot
\pbra{-\ln(n) + 20 \ln(n)x_1 - 100\ln(n)x_1^2 + \ln(n)(1-x_1^2)
}
\tag{recalling \Cref{eq:alpha-beta-warmup}}\\
&= 
{\frac {\sqrt{n}} {2(1-x_1^2)}}\cdot
\pbra{20 \ln(n)x_1 - 101\ln(n)x_1^2}\\
&\geq 2.475\ln(n)
\tag{since ${\frac 1 {2 \sqrt{n}}} \leq x_1 \leq 1/10$}.
\end{align*}
Since $\beta - \alpha x_1 \leq \sqrt{\ln(n)}/n^{1/4}$ and $0 \leq x_1 \leq 1/10,$ it follows from \Cref{eq:tdef} that $t_\alpha(x) \leq O(n^{1/4}\sqrt{\ln(n)} )$, so combining with \Cref{eq:sugakix-simple} and recalling again that $n$ is asymptotically large, we get that 
\[
\Adv(x,\xref, R_{+}(\tbg))
= \Omega(n^{2.475}) \geq n^{7/3}.
\qedhere
\]
\end{proof}

\subsubsection{Filtering}

Our next result, \Cref{lem:filtering-simple}, is a ``filtering lemma.'' 
Before stating the lemma we first give an intuitive explanation:  Let $T$ be a collection of unit-vector example points such that a not-too-tiny fraction of them (at least a $p$ fraction) have a not-too-small margin (at least a $\tau$ margin) with respect to a unit vector $w.$  Let $\bg  \sim {\cal N}(0,{\frac 1 n}I_n)$ be a Gaussian ``guess vector,'' and recall that 

\begin{itemize}

\item $T \cap R_+(\bg)$ is the subset of points in $T$  that have a ``$\beta$-magnitude positive margin under $\bg$,'' 

\item $T \cap R_-(\bg)$ is the subset of points in $T$  that have a ``$\beta$-magnitude negative margin under $\bg$.''

\end{itemize}

\noindent 
\Cref{lem:filtering-simple} says that then with not-too-tiny probability over $\bg$, either

\begin{enumerate}

\item both (i) the fraction of points in $T \cap R_+(\bg)$ that are classified \emph{negatively} by the halfspace $\sign(w \cdot x)$ is small, and (ii) the fraction of points in $T$ that belong to $R_+(\bg)$ is not too small; or

\item both (i) the fraction of points in $T \cap R_-(\bg)$ that are classified \emph{positively} by the halfspace $\sign(w \cdot x)$ is small, and (ii) the fraction of points in $T$ that belong to $R_-(\bg)$ is not too small.

\end{enumerate}

We will eventually apply this filtering lemma by taking $w$ to be $w' = \frac{A^{-1}P_V w^{(1)}}{\|A^{-1}P_V w^{(1)}\|} \in V$, the vector corresponding to the first of the two halfspaces in the target, and taking $T$ to be the set $S'$. The filtering lemma is useful for us for the following reason:
\begin{itemize}

\item In case (1.) of the lemma, restricting our attention to $R_+(\bg)$ means that we have effectively ``filtered'' the initial data set $T$ to the not-too-small subset $T \cap R_+(\bg)$ of points which are essentially labeled \emph{positively} by the first halfspace $\sign(w^{(1)} \cdot x)$, and hence essentially labeled by the target intersection of halfspaces $f$ according to $\sign(w^{(2)} \cdot x)$ (recall Case (\ref{item:B}) in the discussion at the start of \Cref{sec:warmup}). Weak learning over $T \cap R_+(\bg)$ is straightforward in this case, by just trying to learn a linear threshold function (corresponding to $\sign(w^{(2)} \cdot x)$).  

\item
In case (2.) of the lemma, similar to case (1.), restricting our attention to $R_-(\bg)$ means that we have effectively ``filtered'' the initial data set $T$  to the not-too-small subset $T \cap R_-(\bg)$ of points which are essentially labeled \emph{negatively} by the first halfspace $\sign(w^{(1)} \cdot x)$, and hence are essentially labeled \emph{negatively} overall by $f$ (recall case (\ref{item:A}) in the discussion at the start of \Cref{sec:warmup}). In case (2.) weak learning over $T \cap R_-(\bg)$ is even more straightforward than in case (1.), since the constant $-1$ classifier suffices.

\end{itemize}

We remark that while the detailed statement of the lemma is slightly technical, the argument establishing it is quite simple using the advantage lemma (\Cref{lem:advantage-simplified}).

\begin{lemma}[Filtering Lemma] \label{lem:filtering-simple}
    Let $T$ be a finite set of unit vectors in $\R^n$ and let $w$ be a unit vector in $\R^n$ such that $T$ has a $(p,\tau)$ margin with respect to $w$, where $p = {\frac 1 {4n}},\tau={\frac 1 {2 \sqrt{n}}}.$  Then either
    \begin{align}
&        \Prx_{\bg \sim \calN(0,\frac{1}{n}I_n)} \Bigg[ \bigg|\{x \in T \cap R_{+}(\bg): w \cdot x \leq 0\} \bigg| \leq \frac{16}{n^{4/3}} |T \cap R_{+}(\bg)| 
\nonumber \\ &\text{~~~~~~~~~~~~~~~~~and } 
        |T \cap R_{+}(\bg)| \geq 2^{-O(\ln(n) \cdot \sqrt{n})}|T| \Bigg] \geq 2^{-O(\sqrt{n} \ln(n))} 
        \label{eq:first}
    \end{align}
    or 
    \begin{align}
&        \Prx_{\bg \sim \calN(0,\frac{1}{n}I_n)} \Bigg[ \bigg|\{x \in T \cap R_{-}(\bg): w \cdot x \geq 0\} \bigg| \leq \frac{16}{n^{4/3}} |T \cap R_{-}(\bg)|  \nonumber \\ &\text{~~~~~~~~~~~~~~~~~and }        |T \cap R_{-}(\bg)| \geq 
2^{-O(\ln(n) \cdot \sqrt{n})}|T| \Bigg] \geq 2^{-O(\sqrt{n} \ln(n))}.
	\label{eq:second}
    \end{align}
\end{lemma}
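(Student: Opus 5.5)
Since at least $p|T|$ points of $T$ have $|w\cdot x|\ge \tau$, one of the two signed subsets has size at least $\frac p2 |T|$. Suppose the set $P:=\{x\in T: w\cdot x\ge \tau\}$ has $|P|\ge |T|/(8n)$; we prove \Cref{eq:first}. The other case, with negative margins, gives \Cref{eq:second}. It follows by applying the same argument to $-w$ and noting that $R_-(\bg)=R_+(-\bg)$ with $-\bg$ identically distributed.

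The argument works with the conditioned vector $\tbg$, i.e.\ $\bg$ conditioned on being $\alpha$-lucky for $w$. By \Cref{lem:gauss-lb}, this event has probability $2^{-O(\sqrt{n}\ln n)}$, since $\alpha\sqrt n=10\sqrt{\ln n}\,n^{1/4}$. Let $N:=\{x\in T: w\cdot x\le 0\}$, and fix any $x\in P$ (possible since $P\neq\emptyset$). For each $\xref\in N$, \Cref{lem:advantage-simplified} gives $\Pr[\xref\in R_+(\tbg)]\le n^{-7/3}\Pr[x\in R_+(\tbg)]$. Write $q:=\min_{x\in P}\Pr[x\in R_+(\tbg)]$; applying the lemma with the minimizing $x$ gives $\Pr[\xref\in R_+(\tbg)]\le n^{-7/3}q$. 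By linearity,
\[
\E\bigl[|N\cap R_+(\tbg)|\bigr]\le n^{-7/3}q|T|,\qquad \E\bigl[|P\cap R_+(\tbg)|\bigr]\ge q|P|\ge \frac{q|T|}{8n}.
\]
For the size condition, \Cref{eq:parsnip-simple} together with $t_\alpha(x)=O(n^{1/4}\sqrt{\ln n})$ gives $q\ge 2^{-O(\sqrt n\ln n)}$.

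Next, both expectations are converted into a simultaneous high-probability event. By Markov, $|N\cap R_+(\tbg)|\le n^{-7/3}q|T|\cdot n^{1/3}\cdot 16/\ldots$. More concretely, set the threshold at $M:=\frac{c}{n}\cdot\frac{q|T|}{n}$ with a small slack factor, so that $\Pr[|N\cap R_+|> 2n^{-2}q|T|]\le \frac{q}{32n}$ roughly. For $P$, apply \Cref{lem:reverse-Markov} to $\bz=|P\cap R_+(\tbg)|/|T|\le u=1$ with $\kappa=q/(8n)$; this gives $\Pr[\bz\ge q/(16n)]\ge q/(16n)$. This step is the main obstacle. Markov's failure probability for $N$ must be smaller than the reverse-Markov success probability $q/(16n)$, and $q$ itself is tiny. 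Markov with threshold $\lambda\,\E$ fails with probability at most $1/\lambda$, so we need $\lambda \gg 16n/q$. That would wreck the ratio, because $q$ is $2^{-\tilde O(\sqrt n)}$.

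The plan is to avoid this by comparing against the \emph{conditional} expectation. Let $\bZ_P:=|P\cap R_+(\tbg)|$ and $\bZ_N:=|N\cap R_+(\tbg)|$. We have $\E[\bZ_N]\le n^{-7/3}q|T|\le 8n^{-4/3}\E[\bZ_P]$. Consider the random variable $\bW:=\bZ_P-\frac{n^{4/3}}{16}\bZ_N\cdot\frac{1}{?}$; more simply, $\E[\bZ_P-\tfrac{n^{4/3}}{16}\bZ_N]\ge \E[\bZ_P](1-\tfrac12)$. Apply \Cref{lem:reverse-Markov} to $\bW/|T|$, which is bounded above by $1$, with $\kappa=q/(16n)$. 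With probability at least $q/(32n)$, we get $\bW\ge q|T|/(32n)>0$. That event gives both $\bZ_N\le 16n^{-4/3}\bZ_P\le 16n^{-4/3}|T\cap R_+|$ and $|T\cap R_+|\ge \bZ_P\ge q|T|/(32n)=2^{-O(\sqrt n\ln n)}|T|$. Multiplying by the luckiness probability yields the claimed $2^{-O(\sqrt n\ln n)}$ probability over unconditioned $\bg$, which proves \Cref{eq:first}.
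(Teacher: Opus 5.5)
Your final argument is correct and is essentially the paper's proof: you condition on $\alpha$-luckiness, use \Cref{lem:advantage-simplified} to get $\E[Z_N]\le 8n^{-4/3}\E[Z_P]$, lower bound $\E[Z_P]$ via \Cref{eq:parsnip-simple}, and apply \Cref{lem:reverse-Markov} to the single combined variable $Z_P-\frac{n^{4/3}}{16}Z_N$, which is just $\frac{n^{4/3}}{16}$ times the paper's variable $\frac{16}{n^{4/3}}Z_P-Z_N$. Delete the abandoned Markov digression, whose tail estimate is unjustified; your remark that $R_-(\bg)=R_+(-\bg)$ with $-\bg$ identically distributed handles the symmetric case slightly more explicitly than the paper's ``swap $w$ and $-w$''.
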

\begin{proof}
Since $T$ has a $(p,\tau)$ margin with respect to $w$, either at least $p|T|/2$ points $x \in T$ have $w \cdot x \geq \tau$, or at least $p|T|/2$ points $x \in T$ have $w \cdot x \leq -\tau.$  Swapping $w$ and $-w$ swaps \Cref{eq:first} and \Cref{eq:second}, so without loss of generality we may assume that $w \cdot x \geq \tau$ for at least $p|T|/2$ points in $T$; under this assumption we will show that \Cref{eq:first} holds. 

Let $\tbg$ denote the result of conditioning $\bg$ on being $\alpha$-lucky for $w$, and note that by \Cref{lem:gauss-lb}, $\bg$ is $\alpha$-lucky with probability $q_1 := 2^{-O(\sqrt{n}\ln(n))}.$  We have that
\begin{equation} \label{eq:humbug}
{\frac{\Ex_{\tbg}\sbra{\abs{\cbra{x  \in T \cap R_+(\tbg): w \cdot x \geq \tau}}}}{\Ex_{\tbg}\sbra{\abs{\cbra{\xref  \in T \cap R_+(\tbg): w \cdot \xref \leq 0}}}}}=
{\frac {\sum_{x \in T: w \cdot x \geq \tau}\Pr_{\tbg}[x \in R_+(\tbg)]}{\sum_{\xref \in T: w \cdot \xref \leq 0}\Pr_{\tbg}[\xref \in R_+(\tbg)]}
}.
\end{equation}
There are at least $p|T|/2$ summands in the numerator and at most $|T|$ summands in the denominator of the RHS of \Cref{eq:humbug}. By \Cref{lem:advantage-simplified} each numerator-summand is at least $n^{7/3}$ times each denominator-summand, so recalling that $p={\frac 1 {4n}}$, \Cref{eq:humbug} gives 
\begin{align*}
\overbrace{\Ex_{\tbg}\sbra{\abs{\cbra{x  \in T \cap R_+(\tbg): w \cdot x \geq \tau}}}}^{=A}
&\geq {\frac p 2}n^{7/3} 
{\Ex_{\tbg}\sbra{\abs{\cbra{\xref  \in T \cap R_+(\tbg): w \cdot \xref \leq 0}}}}\\
&= \overbrace{{\frac {n^{4/3}} 8}}^{=B}
\overbrace{{\Ex_{\tbg}\sbra{\abs{\cbra{\xref  \in T \cap R_+(\tbg): w \cdot \xref \leq 0}}}}}^{=C}.
\end{align*}

Additionally, note that for any  $x$ with $w \cdot x \geq \tau$, by \Cref{eq:parsnip-simple} and the upper bound $t_\alpha(x) \leq O(n^{1/4} \sqrt{\ln n} )$ established near the end of the proof of \Cref{lem:advantage-simplified}, we have
\[
\Prx_{\tilde{\bg}}[x \in R_{+}(\tbg)] \geq 2^{-O(\sqrt{n} \ln(n))},
\]
and hence by linearity of expectation we get that
\[\overbrace{\Ex_{\tbg} \left[ \left| \{x \in T \cap R_{+}(\tbg): w \cdot x \geq \tau \} \right| \right]}^{=A} \geq \overbrace{p |T| 2^{-O(\sqrt{n} \ln(n))}}^{=D}.
\]

Using $2A/B - C \geq A/B \geq  D/B$ and linearity of expectation, we get that
    \begin{equation} \Ex_{\tbg} \left [ \overbrace{\frac{16}{n^{4/3}} \cdot \left| \{x \in T \cap R_{+}(\tbg): w \cdot x \geq \tau \} \right|}^{=2A/B} - \overbrace{\left| \{\xref \in T \cap R_{+}(\tbg): w \cdot \xref \leq 0\} \right|}^{=C} \right ] \geq \overbrace{ 2^{-O(\sqrt{n} \ln(n))}|T|}^{=D/B}.
    \label{eq:cheesepuff}
    \end{equation}
Since the quantity inside the expectation on the LHS is a real random variable $\bz$ that always takes values at most
$\frac{16}{n^{4/3}}|T| =: u$, by \Cref{lem:reverse-Markov} (``reverse Markov'') we have that $\Pr[\bz \geq {\frac \kappa 2} u] \geq {\frac \kappa 2}$ where $\kappa = {\frac {n^{4/3}}{16}}2^{-O(\sqrt{n} \ln(n))}$; in other words, with probability at least ${\frac {n^{4/3}}{32}}2^{-O(\sqrt{n} \ln(n))}=2^{-O(\sqrt{n} \ln(n))}=:q_2$, we have
\[
\frac{16}{n^{4/3}} \cdot \left| \{x \in T \cap R_{+}(\tbg): w \cdot x \geq \tau \} \right| 
- \left| \{\xref \in T \cap R_{+}(\tbg): w \cdot \xref \leq 0\} \right| \geq 2^{-O(\sqrt{n} \ln(n))}|T| .
\]
Note that this implies both 
\[
\left| \{\xref \in T \cap R_{+}(\tbg): w \cdot \xref \leq 0\} \right| \leq 
\frac{16}{n^{4/3}} \cdot \left| \{x \in T \cap R_{+}(\tbg): w \cdot x \geq \tau \} \right| 
\leq
\frac{16}{n^{4/3}} \cdot \left| T \cap R_{+}(\tbg) \right| 
\]
and
\[
\left|T \cap R_+(\tbg)\right| 
\geq  2^{-O(\sqrt{n} \ln (n))} |T|.
\]
This means that the probability in \Cref{eq:first} is at least $q_1 q_2 =  2^{-O(\sqrt{n} \ln (n))}$, recalling the definitions of $q_1$ and $q_2$ from earlier in the proof, and the proof is complete.
\end{proof}

\subsubsection{Achieving Non-Trivial Accuracy on the Sample $S$}

We now have the ingredients to show that with high probability, \warmup~constructs a ``simple'' hypothesis $h$ that correctly classifies significantly more than half of the points in the input data set $S$:

\begin{lemma} [Achieving non-trivial accuracy on a fixed sample] \label{lem:weak-learn-over-sample}
Suppose that the input data set $S$ for $\warmup$ is a sequence of $|S| = 2^{\Omega(\sqrt{n}\log n)}$ examples that are labeled according to some 
intersection of two halfspaces $f = \sgn(w^{(1)} \cdot x) \wedge \sgn(w^{(2)} \cdot x)$.  Then with probability at least $19/20$, $\warmup$ outputs a hypothesis $h: \R^n \to \bits$ that correctly classifies at least
${\frac 1 2} + \gamma$  fraction of the examples in $S$, where $\gamma := 2^{-O(\sqrt{n} \ln(n))}$.
Moreover, the hypothesis class ${\cal H}$ of all hypotheses that can be generated by \warmup\ has VC dimension at most $O(n^2).$
\end{lemma}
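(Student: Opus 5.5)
The plan is to combine the Forster guarantee with the Filtering Lemma to show that a single iteration of the Line~2 loop succeeds with probability $2^{-O(\sqrt{n}\ln n)}$. We then amplify over the $2^{\widetilde{O}(\sqrt{n})}$ iterations, and finally bound the VC dimension by a counting argument.

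\emph{Step 1: reduce to $S'$.} By \Cref{thm:forster} (with $\eps=1/2$), with high probability we get $V$, $A$, $S'$ with $|S\cap V|\ge |S|/n$, and $S'$ in $(3/2)$-radial isotropic position with respect to $V$. By \Cref{remark:Forster}, the points of $S'$ are labeled by $\sgn(w'\cdot z)\wedge \sgn(w''\cdot z)$, where $w',w''\in V$ are the normalized images $A^{-1}P_Vw^{(i)}$. By \Cref{lem:rad-iso-large-margin}, $S'$ has a $(\frac1{4n},\frac1{2\sqrt n})$-margin with respect to $w'$. So \Cref{lem:filtering-simple} applies with $T=S'$ and $w=w'$, and one of \eqref{eq:first} or \eqref{eq:second} holds.

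\emph{Step 2: a good $\bg$ gives advantage on $S$.} Suppose first that \eqref{eq:second} holds for the drawn $\bg$. Then all but a $16/n^{4/3}$ fraction of $S'\cap R_-(\bg)$ have $f=-1$, and $|S'\cap R_-(\bg)|\ge 2^{-O(\sqrt n\ln n)}|S'|$. The hypothesis $\bh_{b_1,b_2,-}$ uses majority labels $b_1,b_2$ on the two complementary parts (outside $V$, and inside $V$ but outside the region). It therefore has accuracy at least $1/2$ there. On the region it has accuracy $1-o(1)$, which yields a gain of $\Omega(|S'\cap R_-|/|S|)\ge 2^{-O(\sqrt n\ln n)}$ over $1/2$. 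Here we use $|S'|\ge|S|/n$, and we must count multiplicities consistently between $S$ and $S'$. Thus some $b_1,b_2$ passes the test in Line~2(c).

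Now suppose instead that \eqref{eq:first} holds. Then the size test in Line~2(e) passes, possibly after adjusting the constant in the $2^{-\Omega(\sqrt n)}$ threshold to $2^{-O(\sqrt n\ln n)}$. The points in $R_+(\bg)\cap S'$ with $w'\cdot z\le 0$ form at most a $16/n^{4/3}$ fraction. So with probability at least $1-16n\ln n/n^{4/3}=1-o(1)$, all $n\ln n$ sampled points are labeled by $\sgn(w''\cdot z)$, and the LP is feasible. For the accuracy of the returned $w$, I would use a standard Occam/VC argument: halfspaces have VC dimension $n$, so with $n\ln n$ i.i.d.\ uniform samples from the finite set $R_+(\bg)\cap S'$, any consistent halfspace has error $O(\ln n/\ln n\cdot\ldots)$. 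This bound is too weak as stated, and it is the main obstacle. The fix is to use the $\eps$-net bound $m=O(\frac{n}{\eps}\log\frac1\eps)$, which gives error $\eps$ equal to a small constant (say $1/10$) once the hidden constant in $n\ln n$ is large enough. Error $1/10$ with respect to $\sgn(w''\cdot z)$, plus $o(1)$ disagreement between $\sgn(w''\cdot z)$ and $f$, suffices for accuracy bounded away from $1/2$ on the region. The same majority-label argument then gives the $1/2+2^{-O(\sqrt n\ln n)}$ bound. Note that $\sgn(w\cdot Ax)=\sgn(w\cdot Ax/\|Ax\|)$.

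\emph{Step 3: amplification and VC dimension.} Each iteration succeeds with probability at least $2^{-O(\sqrt n\ln n)}\cdot(1-o(1))$. With $2^{\widetilde O(\sqrt n)}$ iterations, some iteration succeeds except with probability $\ll 1/40$. Union-bounding with the Forster failure gives probability at least $19/20$. Any output passes an explicit accuracy test on $S$, so the output is never wrong when it halts.

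For the VC dimension, every hypothesis is determined by the following data: the subspace $V$, and membership in $V$ is a conjunction of at most $n$ linear equalities; a test $\bg\cdot Ax\le -\beta\|Ax\|$ or $\ge\beta\|Ax\|$, which is a degree-2 polynomial threshold in $x$ after squaring with sign conditions; a halfspace $\sgn(w\cdot Ax)$; and two bits. Each piece is a Boolean combination of $O(n)$ threshold functions of polynomials of degree at most $2$ in $x$, with $O(n^2)$ real parameters (the entries of $A$, $\bg$, $w$, and a basis of $V$). By the Goldberg--Jerrum / Warren sign-pattern bound, the class has VC dimension $O(n^2\log n)$. A more careful parameterization (absorbing $A^T\bg$ and $A^Tw$ into vectors, and treating $\|Ax\|^2=x^TA^TAx$ as a single quadratic form) gives $O(n^2)$ parameters for one bounded-size Boolean combination of degree-2 PTFs and halfspaces. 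That yields the claimed $O(n^2)$ bound, up to whatever slack in logarithms the paper tolerates.
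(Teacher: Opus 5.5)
Your accuracy argument (Forster transform, margin lemma, Filtering Lemma, majority-label hypotheses, then amplification over iterations) is the paper's proof. Your extra care about the Line~2(e) size threshold, multiplicities, and the Forster failure probability is fine.

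The ``main obstacle'' you identify in case (I) is not real, though. The realizable-case bound gives error $O\bigl((d\log(m/d)+\log(1/\delta))/m\bigr)$. With $d=n+1$ and $m=n\ln n$ this is $O(\ln\ln n/\ln n)=o(1)$, which is exactly the $1-o_n(1)$ accuracy the paper asserts. Your difficulty came from using the cruder $d\log m/m$ bound. Your constant-error fallback also works, and it needs no hidden constant in the sample size, since $\frac{n}{\epsilon}\log\frac{1}{\epsilon}=O(n)\ll n\ln n$ for constant $\epsilon$.

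The VC-dimension part, as written, only establishes $O(n^2\log n)$. Encoding $x\in V$ as a conjunction of up to $n$ linear equalities makes the number of sign tests grow with $n$. Your ``more careful parameterization'' (absorbing $A^T\bg$, $A^Tw$, and $x^TA^TAx$) does not remove those tests.

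The missing step is to treat $\mathbf{1}[x\in V]$ as a single concept. You can take it either from the class of subspace indicators (VC dimension at most $n$), or as the single degree-2 PTF $x^T(I-P_V)x\le 0$. Then every hypothesis is $g(f_1,f_2,f_3,f_4)$ with $g:\bits^4\to\bits$ and each $f_i$ from a class of VC dimension $O(n^2)$. The standard composition bound (proof of Theorem~3.6 of \cite{KearnsVazirani:94}) then gives $O(n^2)$ with no logarithmic loss. This is exactly the paper's argument, and it needs no Warren/Goldberg--Jerrum machinery. An $O(n^2\log n)$ bound would suffice for the downstream weak learner, but not for the lemma as stated.
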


\begin{proof}
Our goal is to establish the following:

\begin{claim} \label{claim:goal}
For 
 each execution of the 2(a)-2(e) loop, 
 there is at least a $2^{-\tilde{O}(\sqrt{n})}$ probability that either Lines~2(a)-2(c) or Lines~2(d)-2(e) succeed in constructing a hypothesis $h: \R^n \to \bits$ that correctly classifies at least ${\frac 1 2} + 2^{-O(\sqrt{n}\ln(n))}$ fraction of points in $S$.
 \end{claim}
 
This is because given  \Cref{claim:goal}, $\Pr[$all $2^{\tilde{O}(\sqrt{n})}$ repetitions of the loop fail to construct such a hypothesis$]$  is at most $(1 - 2^{-\tilde{O}(\sqrt{n})})^{2^{\tilde{O}(\sqrt{n})}} \ll 1/10$, as desired. 

\begin{proofof}{\Cref{claim:goal}}
By \Cref{thm:forster}, the set $S'$ obtained in Line~1 is in $(3/2)$-radial isotropic position with respect to $V$, and hence by \Cref{lem:rad-iso-large-margin} $S'$ has a $(p = {\frac 1 {4n}},\tau={\frac 1 {2 \sqrt{n}}})$-margin with respect to the vector $w' = \frac{AP_V w^{(1)}}{\|AP_V w^{(1)}\|}.$
Consider a particular execution of the 2(a)-2(e) loop. By \Cref{lem:filtering-simple}, either (I) the event whose probability is lower bounded in \Cref{eq:first}, or (II) the event whose probability is lower bounded in \Cref{eq:second}, occurs with probability at least $2^{-O(\sqrt{n \ln(n)})}$ (taking the set $T$ of \Cref{lem:filtering-simple} to be $S'$ in both cases).

We suppose first that event (II) occurs.  In this case, by definition of event (II), at most a ${\frac {16}{n^{4/3}}}=o_n(1)$ fraction of the points in $S$ that are handled by the third line in the definition of $h_{b_1,b_2,-}$ are misclassified by $h_{b_1,b_2,-}$.  Hence for a suitable choice\footnote{To be specific, $b_1$ is the majority label of the examples in $S \setminus V$ and $b_2$ is the majority label of the examples $x \in S \cap V$ such that $\frac{Ax}{\|Ax\|} \not \in R_-(\bg)$.} of the bits $b_1,b_2 \in \bits$,
by the definition of event (II) the hypothesis $\bh_{b_1,b_2,-}$ is correct on at least a ${\frac 1 2} + 2^{-O(\sqrt{n} \ln(n))}$ fraction of points in $S$, as desired.

Next, we suppose that event (I) occurs.  In this case, by definition of event (I) and a union bound, a sample of $n \ln(n)$ points drawn uniformly and independently from $R_+(\bg) \cap S'$ as in Line~3(e)(i) has probability at most $o_n(1)$ of \emph{not} being labeled according to the halfspace $\sign(A^{-1} P_V w^{(2)} \cdot x)$
in $\R^n$, so with probability at least $1-o_n(1)$ the algorithm reaches Line~2(e)(iii). Moreover, since the class of halfspaces over $\R^n$ has VC dimension at most $n+1$, standard uniform convergence results (see e.g.~part~(1) of Theorem~6.8 of \cite{SSBD14}) give that with probability $1-o_n(1)$, any halfspace consistent with a random sample of $n \ln(n)$ points drawn uniformly from $R_+(\bg) \cap S'$ has accuracy $1-o_n(1)$ on $R_+(\bg) \cap S'$.
Hence for a suitable choice of the bits $b_1,b_2 \in \bits$,
by the definition of event (I) the hypothesis $\bh_{b_1,b_2,+}$ is correct on at least a ${\frac 1 2} + 2^{-O(\sqrt{n} \ln(n))}$ fraction of points in $S$, as desired.
This concludes the proof of \Cref{claim:goal}.
\end{proofof}

To finish the proof of \Cref{lem:weak-learn-over-sample}, we observe that any hypothesis defined in Lines~2(c) or 2(e)(iii) must belong to the class ${\cal H}$ of all functions $\R^n \to \bits$ that are of the form ``if $x \in V$ then output $b_1$; otherwise, if ${\frac {Ax}{\|Ax\|}} \in h'$ then output $b_2$; otherwise output $h''$'' where $b_1,b_2$ are fixed bits and $h',h''$ are halfspaces (note that $h''$ is the constant $-1$ in Line~2(c)).  For $h' = \sign(v \cdot x -\theta),$ the indicator of ${\frac {Ax}{\|Ax\|}} \in h'$ can be expressed as the intersection of a degree-2 polynomial threshold function and another halfspace, more precisely, as $\sgn\left( (A^T v \cdot x)^2 - \theta^2 \|Ax\|^2 \right) \land \sgn\left(  A^T v \cdot x \right)$. Thus, we can write any function in ${\cal H}$ as $g(f_1,f_2,f_3,f_4)$, where $g: \bits^4 \to \bits$ is a four-variable Boolean function; $f_1$ is the indicator function of a linear subspace $V$ of $\R^n$; $f_2$ is a degree-2 polynomial threshold function; and $f_3,f_4$ are both halfspaces over $\R^n$.
The VC dimension of the class of all linear subspaces of $\R^n$ is at most $n$ (see \cite{VCdim-subspace} for an easy argument establishing this); the VC dimension of the class of all degree-two polynomial threshold functions is well known to be $O(n^2)$; the VC dimension of the class of all halfspaces over $\R^n$ is well known to be $n+1$; and the VC dimension of the class of all functions $g: \bits^n \to \bits$ that  depend only on the first four coordinates is clearly $O(1)$. Given this, it follows directly from standard arguments (see the proof of Theorem~3.6 of \cite{KearnsVazirani:94}) that the VC dimension of the hypothesis class ${\cal H}$ is $O(n^2)$, and \Cref{lem:weak-learn-over-sample} is proved.
\end{proof}

By \Cref{lem:weak-learn-over-sample}, \warmup~is a $2^{\tilde{O}(\sqrt{n})}$-time algorithm that achieves accuracy at least $1/2 + 2^{-O(\sqrt{n} \log(n))}$ on \emph{any} sufficiently large fixed input sample that is labeled according to an intersection of two halfspaces, and  moreover it does this using a ``simple'' hypothesis (belonging to a hypothesis class of low VC dimension)  Given this, it is straightforward to prove \Cref{thm:2} using standard machinery from learning theory:  a standard generalization error argument shows that \Cref{lem:weak-learn-over-sample} easily yields a weak learning algorithm that achieves non-trivial advantage over random guessing on any input distribution, and then standard hypothesis boosting algorithms such as \cite{Schapire:90,Freund:95} give a full-fledged PAC learning algorithm. We provide full details, and complete the proof of \Cref{thm:2}, in \Cref{ap:proof-theorem-2}.


\section{Learning Functions of Halfspaces}
In this section we prove our main result:

\begin{theorem}
[Restatement of \Cref{thm:mainintro}]
\label{thm:main}
For any $k$,
there is a distribution-free PAC learning algorithm for the class of functions of $k$ halfspaces over $\mathbb{R}^n$, which learns to accuracy $\eps$ and confidence $1 - \delta$, running in time $\poly(2^{\sqrt{n} \log^{O(k)}(n)}, 1/\eps, \log(1/\delta))$, and using $\poly(2^{\sqrt{n} \log^{O(k)}(n)}, 1/\eps, \log(1/\delta))$ examples.
\end{theorem}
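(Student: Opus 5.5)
As in the proof of \Cref{thm:2}, the plan is to reduce \Cref{thm:main} to a weak learner over a fixed sample. Concretely, we will give an algorithm \learnfunc{} which, given a sample $S$ of $2^{\sqrt{n}\log^{\Omega(k)}(n)}$ examples labeled by $f=g(h_1,\dots,h_k)$, outputs with constant probability a hypothesis that is correct on at least a $\frac12+\gamma$ fraction of $S$, where $\gamma=2^{-\sqrt{n}\log^{O(k)}(n)}$. The hypothesis will come from a class of VC dimension $\poly(n,k)$. The standard generalization and boosting argument of \Cref{ap:proof-theorem-2} then gives the claimed running time and sample complexity. The hypotheses have a fixed shape: a decision list of depth $O(k\log^{k}n)$, each of whose nodes tests membership in a subspace $V$ or tests a Forsterized halfspace $\bg\cdot \frac{Ax}{\|Ax\|}\gtrless \pm\theta$. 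The leaves output either a constant bit or the majority label. The VC bound then follows exactly as in \Cref{lem:weak-learn-over-sample}, by composing $\poly(k\log^k n)$ degree-$2$ PTFs and subspace indicators.

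The core is a \emph{region-fixing} procedure, which will be recursive with guessing. The algorithm maintains a current sample $T$, a set $F\subseteq[k]$ of fixed halfspaces with target signs $s_i$, and for each $i\in F$ a current \emph{margin level} $\lambda_i$, a power of $\log n$. At each step it Forsterizes $T$; by \Cref{lem:rad-iso-large-margin} and \Cref{remark:Forster}, every $w^{(i)}$ then has a $(\frac1{4n},\frac1{2\sqrt n})$-margin on the result. It then guesses, with probability $\frac12$ per guess, one of two things. The first option is an index $j\notin F$ to fix. The second is that some $i\in F$ now has an $(n^{-\lambda},\lambda/\sqrt n)$-margin with $\lambda\ge \lambda_i\log n$, in which case that $i$ is refixed at the higher level. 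In either case it samples $\bg\sim\calN(0,\frac1nI_n)$ and intersects $T$ with $R_\pm(\bg)=\{\bg\cdot x\gtrless\pm\beta\lambda\}$. We need a generalized Advantage Lemma, extending \Cref{lem:advantage-simplified}: if $\bg$ is $\alpha\lambda$-lucky for $w$ and $w\cdot x\ge \lambda/\sqrt n$ while $w\cdot\xref\le0$, then $\Adv(x,\xref,R_+)\ge n^{\Omega(\lambda^2)}$; more generally, for the monotonicity we need, the advantage of points with margin $\mu/\sqrt n$ grows like $n^{\Theta(\mu\lambda)}$. We also need the corresponding generalized Filtering Lemma, extending \Cref{lem:filtering-simple}, which says the surviving fraction of ``wrong-sign'' points for $w$ is at most $n^{-\Omega(\lambda^2)}$, with probability and region size $2^{-\sqrt n\,\lambda^{O(1)}\log n}$. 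Since all levels satisfy $\lambda\le\log^{O(k)}n$, every guess costs $2^{\sqrt n\log^{O(k)}n}$.

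The key potential argument follows the $k=2$ sketch of \Cref{sec:learnfunc}. We bucket the wrong-sign points of a fixed $h_i$ by their margin relative to the halfspace $w^{(j)}$ that is being fixed next. The Advantage Lemma bounds how much the next filtering step can amplify each bucket $T_\mu$: the amplification is at most $n^{O(\mu\lambda_j)}$ relative to typical right-sign points. Suppose that after the step, $h_i$ is no longer $\frac1{n^2}$-fixed. Then some bucket has relative mass at least $n^{-O(\mu\lambda_j)}$ with $\mu\gtrsim\lambda_i^2/\lambda_j$, and this is exactly a certificate that the second kind of guess is correct, at a margin level at least a $\log n$ factor larger. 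So every step either grows $|F|$ or multiplies some $\lambda_i$ by $\log n$. Each $\lambda_i\le\sqrt n$ always, so the level can rise at most $O(\log n/\log\log n)$ times per index. However, refixing $h_i$ may unfix others, and so the number of steps is bounded by a lexicographic potential argument by $\log^{O(k)}n$. When no unfixing occurs and $F=[k]$, $f$ is $(1-k/n^2)$-constant on the region, and outputting that constant there and the majority elsewhere gives advantage proportional to the region's mass. The guesses succeed all together with probability $2^{-\log^{O(k)}n}\cdot 2^{-\sqrt n\log^{O(k)}n}$, so repeating that many times suffices.

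The main obstacle is making this potential argument rigorous for general $k$. Specifically, we must choose the thresholds $\lambda$ so that refixing $h_i$ at a higher level does not destroy the fixing of the other indices in $F$ faster than progress is made. This requires the levels to be nested, with each new level exceeding all current levels by a $\log n$ factor, and is what forces the $\log^{O(k)}n$ exponents. I would first attempt the potential $\sum_i \log_{\log n}\lambda_i$ with lexicographic tie-breaking by $|F|$, and verify the amplification calculation bucket by bucket using the generalized Advantage Lemma.
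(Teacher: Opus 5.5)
Your outer framework matches the paper's: weak learning over a fixed sample with a VC bound and boosting, repeated Forsterize--guess--filter steps, a two-sided advantage lemma, and bucketing wrong-sign points by margin with respect to the direction currently being filtered. But what you defer as ``the main obstacle'' is the actual substance of the proof, and the parametrization you sketch for it does not work. Tying the threshold and luckiness to the margin level ($\beta\lambda$ and $\alpha\lambda$) means a single guess succeeds only with probability $e^{-\Theta(\sqrt{n}\log(n)\,\lambda^2)}$. You bound the number of level increases via $\lambda\le\sqrt n$, but at such levels a guess costs $2^{-\Omega(n^{3/2})}$. The bound $\lambda\le\log^{O(k)}(n)$ that you invoke for the running time is incompatible with this for small $k$, and nothing in your scheme enforces it. Your certificate also does not match your definition of a level. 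It gives relative mass $n^{-O(\mu\lambda_j)}$ at margin $\mu/\sqrt n$, whereas level $\mu$ means an $(n^{-\mu},\mu/\sqrt n)$-margin. It also concerns the halfspace $w^{(j)}$ being fixed at this very step, not an index already in $F$. The paper decouples margin from threshold: the threshold at step $t$ depends only on the leaderboard rank $r_t$, namely $\beta_{r_t}$ with $\beta_j=\log^{5(k-j+1)}(n)/n^{1/4}$. So every guess succeeds with probability at least $e^{-O(n\beta_1^2)}=2^{-\sqrt n\log^{O(k)}(n)}$, regardless of margins. Large margins are exploited automatically, since at a fixed threshold the advantage is $e^{\Theta(n\beta_j^2\tau)}$ (\Cref{lem:advantage}), and \Cref{def:quality} calibrates the required fraction $p=\frac{1}{4n}e^{-n\beta_j^2\tau/\log n}$ to exactly this advantage.

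The nesting must also go the other way. When a lower-priority halfspace is filtered at threshold $\beta'$, a wrong-sign point of an already-fixed halfspace can be amplified by $e^{O(n\beta'^2\tau)}$, with $\tau$ its margin with respect to the new direction. This amplification has to be beaten by the purity obtained earlier, so later-fixed slots need much \emph{smaller} thresholds ($\beta_j\ge\log^5(n)\,\beta_{r_t}$ for $j<r_t$), not levels exceeding all current ones. Your own inequality $\mu\gtrsim\lambda_i^2/\lambda_j$ gives a higher level for $w^{(j)}$ only when $\lambda_i\gg\lambda_j$. Nor is $\sum_i\log_{\log n}\lambda_i$ a valid potential, since it can drop when a refixing unfixes other indices. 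What is needed is the leaderboard of \Cref{def:good}: a newly fixed halfspace enters at the first rank whose incumbent it beats in quality (the algorithm guesses this rank), and all lower ranks are reset. This does two things. First, it guarantees that the halfspace fixed at rank $r_t$ has $j$-quality at most the incumbent's for every $j<r_t$, which is what makes the high-margin buckets for slot $j$ small. Second, it gives \Cref{lem:num-updates}: rank $j+i$ changes at most $\log^i(n)$ times per tenure of rank $j$. Hence there are at most $\log^k(n)$ steps, the total guessing cost is $e^{-O(n\beta_1^2\log n)}$, and the cumulative damage to slot $j$ is $e^{O(\sqrt n\beta_{j+1}^2\log^{\qual_j+2}(n))}$, which is dominated by its purity $e^{-\sqrt n\beta_j^2\log^{\qual_j-1}(n)/3}$. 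Finally, each step must control all of this at once: purity of the new slot, only multiplicative degradation of every higher slot (with the modified-impurity floor), and a non-tiny surviving sample. The paper obtains this from the fine-filter definition (\Cref{def:finefilter}) and a reverse-Markov argument on the combined random variable of \Cref{claim:expectation-lb}. Your single-halfspace generalized filtering lemma does not give this.
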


\begin{remark} \label{rem:large-k}
There is a simple ``brute-force'' algorithm that PAC learns any function of $k$ halfspaces in time $\poly(2^{2^k + O(nk^2)}, 1/\eps,\log(1/\delta))$ (see \Cref{ap:brute-force} for the straightforward argument that establishes this).  If $k \geq {\frac {c \log n}{\log \log n}}$ for any  constant $c>0$, then $2^{2^k + O(nk^2)} \leq 2^{\sqrt{n} \log^{O(k)}(n)}$, and hence \Cref{thm:main} is immediate for such large values of $k$. The most interesting values of $k$ for us are small values of $k$ satisfying $k \leq {\frac {c \log n}{\log \log n}}$, and in the rest of this section, without loss of generality, we will suppose that $k \leq {\frac {c \log n}{\log \log n}}.$
\end{remark}

We further remark that we have made no attempt to optimize the hidden constant in the big-Oh notation of \Cref{thm:main}.

\medskip

\noindent {\bf High-level overview.}  Recall that at a  high level, our algorithm for learning an intersection of two halfspaces is based on the fact that we can find a region $R \subseteq \R^n$ which (i) contains ``not too few'' of the Forsterized sample points $S'$, and (ii) is such that one of the halfspaces is almost constant over the points in $S' \cap R$. To learn a function of $k$ halfspaces, we will design an algorithm that similarly attempts to find a region, containing ``not too few'' of the sample points $S'$, where \emph{all $k$ halfspaces} are almost constant and thus so is the function.

In order to do this we will need some new ideas. To get started, it will be helpful to understand what goes wrong with the previous approach if we try to use it to learn an intersection of \emph{three} halfspaces given by vectors $w^{(1)}, w^{(2)}$ and $w^{(3)}$. After making a good guess $\bg^{(1)}$ for the first halfspace and restricting to (say) $R_+(\bg^{(1)})$,  at most an inverse polynomial fraction of points in $S' \cap R_+(\bg^{(1)})$ satisfy $w^{(1)} \cdot x \leq 0$. After Forsterizing the points in $S' \cap R_+(\bg^{(1)})$, we can then make a good guess $\bg^{(2)}$ for the second halfspace and intersect with the region (say) $R_+(\bg^{(2)})$; after doing this, at most an inverse polynomial fraction of points in $S' \cap R_+(\bg^{(1)}) \cap R_+(\bg^{(2)})$ satisfy  $w^{(2)} \cdot x \leq 0$. However, since $S' \cap R_+(\bg^{(1)}) \cap R_+(\bg^{(2)})$ may be only a very tiny (much less than inverse polynomial) fraction of $S' \cap R_+(\bg^{(1)})$,
it is entirely possible that we no longer have that a small fraction of points in $S' \cap R_+(\bg^{(1)}) \cap R_+(\bg^{(2)})$ satisfy  $w^{(1)} \cdot x \leq 0$.

To address this problem, we show that if many points in $S' \cap R_+(\bg^{(1)}) \cap R_+(\bg^{(2)})$ satisfy $w^{(1)} \cdot x \leq 0$ after intersecting with $R_+(\bg^{(2)})$, then it turns out that it must be the case that there was a ``not too small'' fraction of points in $S' \cap R_+(\bg^{(1)})$
that had an ``unexpectedly large'' margin with respect to $w^{(1)}$, specifically a margin  $\gg \frac{1}{\sqrt{n}}$.
We can then take advantage of this fact to better filter points, i.e.,~to construct a region that is ``more pure'' for the halfspace $w^{(1)}.$ Intuitively, this then makes progress towards learning, as if we ever find a region where a halfspace is constant on all but a $2^{-\sqrt{n} \log (n)}$ fraction of points, then we could produce a weak learner by learning a function of two halfspaces over the points in this region. 

At a high level, our algorithm is based on repeatedly applying this simple idea. However, quite a bit of care and technical work is required to set up the algorithm and its analysis so that everything works out.

\subsection{The Algorithm}
\label{sec:algorithm}
We now describe the algorithm for constructing a weak hypothesis with non-trivial accuracy over a fixed sample of examples labeled according to any function of $k$ halfspaces.
Similar to  \Cref{sec:warmup}, such an algorithm easily yields a weak PAC learning algorithm for any distribution ${\cal D}$, which in turn yields a strong PAC learning algorithm via standard boosting techniques (see \Cref{sec:strong-PAC-any-of-k}).

To aid the reader in digesting the algorithm and its subsequent analysis, we adhere to the following conventions:  variables $t,t'$, etc.~denote a ``time step'' (an execution of the Line~\ref{line:ltf-func-inner-loop} loop), and variables $i,j,\ell$, etc.~denote an index in $[k]$.

\begin{algorithm}[!hbtp]
\addtolength\linewidth{-2em}
\vspace{0.5em}

\textbf{Input:} A sequence $S$ of $2^{\sqrt{n} \log^{\Omega(k)}(n)}$ labeled examples $(x,y)$ where each $x \in \R^n$ and each $y = f(x)$ for an unknown $f: \R^n \to \bits$ that is an arbitrary function of $k$ halfspaces, $f(x)=g(\sign(w^{(1)} \cdot x),\dots,\sign(w^{(k)} \cdot x)).$ \\[0.25em]
\textbf{Output:} A hypothesis $h: \R^n \to \bits$ or FAIL.\\[0.5em]

\learnfunc($S$):\\[0.25em]

\begin{enumerate}
        \item For $i = 1, \dots k-1, k$, set $\beta_i := \frac{\log^{5(k-i+1)}(n)}{n^{1/4}}$. 
        \label{line:beta}

        \item Repeat $2^{\sqrt{n} \log^{O(k)}(n)}$ times:
        \label{line:outer-loop}
        \begin{enumerate}
        \item Let $S_0 := S$ and let $V_1, A_1, \bS_1 \gets \forster(S_0)$.
                   
        \item Set $\bu_1= \cdots=  \bu_k = 0 $.
        \item For $t = 1, \dots, \log^k(n)$:
        \label{line:ltf-func-inner-loop}
        \begin{enumerate}
            \item If $\bu_i \not = 0$ for all $i \in [k]$:
            \begin{enumerate}
                \item For $b_1, b_2, b_3 \in \{\pm 1\}$, let
                \[\bh_{b_1,b_2,b_3}(x) := \begin{cases} b_1 & \text{if~} x \not \in V_t \\
                b_2 & \text{~if~} x \in V_t \text{~and~}\exists t' \in [t-1] \text{~s.t.~} \frac{A_{t'}x}{\|A_{t'}x\|} \not \in R_{\bs_{t'}}^{\beta_{\br_{t'}}}(\bg^{(t')}) \\ 
                b_3 & \text{~if~} x \in V_t \text{~and~}\forall t' \in [t-1],  \frac{A_{t'}x}{\|A_{t'}x\|} \in R_{\bs_{t'}}^{\beta_{\br_{t'}}}(\bg^{(t')})\end{cases}.\]
                \item If some $\bh_{b_1, b_2, b_3}$, agrees with at least $1/2 + 2^{-\sqrt{n} \log^{O(k)}(n)}$ of the examples in $S$, output that $\bh_{b_1,b_2,b_3}$ and halt. 
                \item Otherwise, exit the inner loop on Line \ref{line:ltf-func-inner-loop} and return to the next execution of the outer Line~\ref{line:outer-loop} loop.
            \end{enumerate}

            \item Sample a random $\br_t \in  [k]$. \label{line:sample-r}
            \item Sample $\bg^{(t)} \sim \calN \left(0,\frac{1}{n} I_n \right)$.
            \label{line:sample-g}
            \item Set $\bu_{\br_{t}} = t$ and $\bu_i = 0$ for all $i > \br_{t}$.
            \label{line:u-update}
            \item Randomly choose $\bs_t \sim \{-, +\}$ and let $R_{\bs_t}^{\beta_{\br_t}}(\bg^{(t)}) := \{x \in \mathbb{R}^{n}: \bg^{(t)} \cdot x \geq \bs_t \cdot \beta_{\br_t} \}$.
            \label{line:sample-s}
            \item Set $V_{t+1},A',\bS_{t+1} \gets \forster(\bS_{t} \cap R_{\bs_t}^{\beta_{\br_t}}(\bg^{(t)}))$ and set $A_{t+1} \gets A' A_t$.         \label{line:forsterize-again}
        \end{enumerate}
        \end{enumerate}
    \item Return FAIL
\end{enumerate}

\caption{A Weak Learner for Functions of $k$ Halfspaces Over a Sample}
\label{alg:ltf-func}
\end{algorithm}

\subsubsection{High-level intuition and explanation of \Cref{alg:ltf-func}}

At the highest level, similar to the warm-up algorithm, for each halfspace $w^{(i)}$ we will attempt to make  a ``lucky'' (see \Cref{def:lucky}) guess $\bg^{(t)}$  and restrict to some region $R_{-}^{\beta_j}(\bg^{(t)})$ or $R_{+}^{\beta_j}(\bg^{(t)})$.
The  hope is to in this way construct a region $R \subseteq \R^n$ such that, writing $S'$ for the original data set $S$ after the sequence of linear transformations from the various Forsterizations have been performed on it, the set $S' \cap R$ has ``not too few'' points and is such that for each $i \in [k]$, either almost all of the points in $S' \cap R$ satisfy $\sgn(w^{(i)} \cdot x) \leq 0$ or almost all of them satisfy $\sgn(w^{(i)} \cdot x) \geq 0$. The outermost Line~\ref{line:outer-loop} loop goes over many attempts to have this happen.
Almost all of our discussion and analysis focuses on an execution of the Line~\ref{line:outer-loop} loop in which all of the guesses for the $\bg^{(t)}$'s (as well as some other guesses that we will discuss later) are ``lucky.''

In order for our analysis to establish that such a guessing-based procedure works with sufficiently high probability (at least $2^{-\sqrt{n}\log^{O(k)}(n)}$), in each execution of the Line~\ref{line:outer-loop} loop our algorithm must make a sequence of $\log^k(n)$ many guesses $\bg^{(1)},\bg^{(2)},\dots$ that may involve multiple guesses for each of the $k$ halfspaces in the target; this corresponds to the inner Line~\ref{line:ltf-func-inner-loop} loop.   The $\bu_1,\dots,\bu_k$ variables in our algorithm keep track of the progress that has been made across this sequence of guesses.

\medskip
\noindent {\bf The leaderboard.}
To explain how the $\bu_j$'s keep track of this in more detail, recall that for $i \in [k]$ we view a halfspace $w^{(i)}$ as being ``fixed'' on a region $R \subseteq \R^n$ if it takes the same value (either $+1$ or $-1$) on almost all of the transformed input examples that lie in $R$. 
Our analysis will employ a measure of ``how effectively'' each halfspace in $w^{(1)},\dots,w^{(k)}$ has been fixed so far. (We will elaborate much more on what it means for a halfspace $w^{(i)}$ to be ``fixed effectively'' later when we discuss the notion of ``quality'' (see \Cref{def:quality}); for now, we remark that ``quality''  involves both the numerical margin achieved on a suitable subset of examples as well as how large that subset is.)

In particular, the reader should have in mind a ``leaderboard'' of how effectively the various halfspaces have been fixed, with the halfspace $w^{(i_1)}$ that has been fixed most effectively in position 1 of the leaderboard, the halfspace $w^{(i_2)}$ that has been fixed second most effectively in position 2, and so on. 
We stress that the algorithm has no access at all to this leaderboard; it should be thought of as a helpful tool for analyzing fortuitous executions of the Line~\ref{line:outer-loop} loop in which all guesses are ``lucky'' as alluded to above.
The value of $\bu_j$ indicates the time step at which the halfspace in position $j$ of our leaderboard (namely, the halfspace that has been fixed ``the $j$-th most effectively'') was last fixed, i.e.~entered its current position on the leaderboard. 
In more detail, 

\begin{quote}
(*) 
The value of $\bu_j$ at the end of Line~\ref{line:u-update}
of any time step $t$ in the execution of the inner loop, which we denote $\bu_j(t)$ throughout our analysis, is the most recent time step $t' \leq t$ at which a $\bg^{(t')}$ was guessed that resulted in some target halfspace, which we denote $w^{(\ind(\bu_j(t)))}$,
being fixed 
``the $j$-th most effectively'' among all the halfspaces in $w^{(1)},\dots,w^{(k)}$ that are currently fixed.
If $\ell<k$ of the target halfspaces $w^{(\cdot)}$ are  fixed at time $t$, then $\bu_{\ell+1}(t)=\cdots=\bu_k(t)=0.$ 
\end{quote}
\noindent

It follows from (*) above that the value of $\ind(\bu_j(t))$ is the index in $[k]$ of which halfspace was fixed at time $\bu_j(t)$, and is also the index of the halfspace in position $j$ of the leaderboard at time $t$.  

We now explain in more detail the objectives of the guesses $\bg^{(t)},\br_{t}$ and $\bs_t$ that our algorithm makes in each execution of the inner Line~\ref{line:ltf-func-inner-loop} loop, starting with the guesses $\bg^{(t)}.$
Consider the start of time step $t$ of the inner loop, and suppose that at this time step $\bu_1(t),\dots,\bu_a(t) > 0$ and $\bu_{a+1}(t)=\cdots=\bu_k(t)=0$, meaning that halfspaces $w^{(\ind(\bu_1(t)))},\dots,w^{(\ind(\bu_a(t)))}$ are currently fixed by the guesses $\bg^{(1)},\dots,\bg^{(t-1)}$ that have been made thus far,
and are in positions $1,\dots,a$ of the leaderboard. Intuitively, at time step~$t$ the algorithm seeks to make a guess $\bg^{(t)}$ that \emph{causes a new target halfspace that is not currently fixed} to become fixed and hence take some position $a' \leq a+1$ on the leaderboard.  The position $a'$ that it takes will be $a+1$ if it is fixed less effectively than any of the currently-fixed halfspaces. Importantly, if $a'$ is less than $a+1$ (which is entirely possible if the newly-fixed halfspace is fixed more effectively than some halfspace currently on the leaderboard), then the halfspaces $w^{(\ind(\bu_{a'+1}))},\dots,w^{(\ind(\bu_{a}))}$ are ``unfixed'' and removed from the leaderboard, and consequently the values of $\bu_{a'+1},\dots,\bu_a$ are all updated to zero (see Line~\ref{line:u-update}). 
No matter what value $a'$ is in $[a+1]$, we set $\bu_{a'}$ to the current timestep $t$ (again see Line~\ref{line:u-update}), and the value of $\ind(t)=\ind(\bu_{a'})$ in our analysis becomes the index of the newly fixed halfspace.

We turn to explaining the objectives of the guesses $\br_t$ and $\bs_t$.  For $\br_t$,  we hope to set $\br_t$ to the value $a'$ corresponding to the position on the leaderboard of the newly-fixed halfspace (see Line~\ref{line:sample-r}).
Finally, Line~\ref{line:sample-s} attempts to guess a value for $\bs_t \in \{-,+\}$ corresponding to ``the right side of the halfspace $\bg^{(t)}$'' where there are ``many points with a large margin'' in the sense of \Cref{def:margin}.

Finally, we remark that the purpose of Line~\ref{line:forsterize-again} is to perform another round of Forsterization on the relevant set of points so that we can proceed to the next iteration of the Line~2(d) loop.

\subsubsection{Overview of ingredients of the algorithm and its analysis} \label{sec:overview}

Since the algorithm and its analysis are somewhat intricate, for the reader's convenience we now give a more detailed item-by-item overview of some of the key objects in the algorithm and its analysis. Our discussions of $\bu_j(t)$ and $\ind(t)$ below reiterate points that were made earlier, but we hope that this may help the reader form a clearer mental picture of what is going on.  Our discussions of $w^{(i)}(t)$, $\qual_j(t)$, $\imp_j(t)$, and $\mimp_j(t)$ below introduce new conceptual objects and ideas that play an important role in our analysis. The reader is encouraged to refer back to this subsection when reading the later detailed sections of our technical analysis.  

A condensed version of this overview, mentioning also some additional objects in the algorithm and its analysis, is provided  in \Cref{fig:figure2}.
In the following overview and in \Cref{fig:figure2}, we note that whenever an object has ``$(t)$'' appended to it, this refers to the object at the end
of time step $t$ of the Line~\ref{line:ltf-func-inner-loop} loop; thus, for example, $\bu_j(t)$ is the value of $\bu_j$ when Line~\ref{line:forsterize-again} has been executed
for the $t$-th time.
To aid the reader in distinguishing between objects which occur in the algorithm and objects which occur only in the analysis, we use \violet{violet font} both in \Cref{fig:figure2} and in the item-by-item overview for objects which  occur only in the analysis of the algorithm and not in the algorithm itself.

We stress again that our discussion of the quantities below should be thought of as only referring to an execution of the Line~1(d) loop in which ``all guesses are lucky'' (see the definition of a ``good total execution'', \Cref{def:good}).

\begin{itemize}

    \item The meaning of $\bu_j(t)$ was explained earlier in (*); to recall, $\bu_j(t)$ is the time step $t' \leq t$ at which the halfspace currently (at the end of time step $t$) in position $j$ of our leaderboard (namely, the halfspace that has been fixed ``the $j$-th most effectively'') entered that position. We write $\vec{\bu}(t)$ to denote the vector $(\bu_1(t),\dots,\bu_k(t))$ and we write $\nnz(\vec{\bu}(t))$ to denote the number of nonzero entries in $\vec{\bu}(t)$, i.e.~the number of halfspaces that have currently been fixed at time $t$. Note that $\nnz(\vec{\bu}(t))$ may go up and down as $t$ increases, as halfspaces are fixed and unfixed as described earlier.  

    \item \violet{As mentioned earlier, $\ind(t)$ takes as input a time step $t$ and outputs the \underline{ind}ex $i \in [k]$ of which target halfspace $w^{(i)}$ is fixed by the guess $\bg^{(t)}$ at time $t$. 
    Therefore, the index of the halfspace at position $j$ of our leaderboard at time $t$ is $\ind(\bu_j(t))$, for $j \in \{1,\dots,\nnz(\vec{\bu}(t))\}.$}
    
    \item $V_t$ is the subspace of $\R^n$ that the algorithm has ``zoomed in on'' after the first $t$ rounds of Forsterization, and  $A_t: V_t \to V_t$ is the linear transformation obtained by composing all $t$ of the Forsterization linear transformations that have been performed thus far (see Line~\ref{line:forsterize-again}).
    
    \item The value $\br_t \in [k]$ is the ``\underline{r}ank'' on the leaderboard of the halfspace $w^{(\ind(\bu_{\br_t}))}$ that is fixed at time step $t$, i.e.,~which position $\br_t$ of the leaderboard has $\bu_{\br_t}$ updated at time step $t$.

    \item The set $\bS_{t+1}$ is the  (transformed-by-$A_t$-version-of) the subset of the original input set $S$ 
    that is still ``in play'' after time step $t$ (see Line~\ref{line:forsterize-again}), i.e.~the set of ``filtered and Forsterized'' points.  $\bS_{t+1}$ contains the points over which the algorithm will continue to attempt to fix all $k$ of the target halfspaces.

    \item 
    The set $R_{\bs_t}^{\beta_{\br_t}}(\bg^{(t)})= \{x \in \mathbb{R}^{n}: \bg^{(t)} \cdot x \geq \bs_t \cdot \beta_{\br_t} \}$ should be thought of as the \underline{r}egion of $\R^n$ consisting of those points $x$ that have a ``large margin vis-a-vis the $t$-th guessed halfspace $\bg^{(t)}$ on the side corresponding to $\bs_t$.''
    The points in $\bS_t \cap R_{\bs_t}^{\beta_{\br_t}}(\bg^{(t)})$ are the ones that are used as input to the $(t+1)$-st Forsterization step (see Line~\ref{line:forsterize-again}). In a good execution, $w^{(\ind(t))}$ will be fixed in this region.

    \item \violet{The vector $w^{(i)}(t)$ stands for $\frac{A_t^{-1}P_{V_t} w^{(i)}}{\|A_t^{-1}P_{V_t} w^{(i)}\|}$. In words, this is a vector in $V_t$ (normalized to be a unit vector) corresponding to the original vector $w^{(i)}$ after it has been projected to the currently-relevant subspace $V_t$ and transformed by the Forsterizations.  (Recall from \Cref{remark:Forster} that this is always a nonzero vector and hence $w^{(i)}(t)$ is indeed well-defined.)}

    \end{itemize}

    On a first read through the paper, the reader may wish to skip the following items and come back to them in the course of reading \Cref{sec:good-execution,sec:great-execution}.
    \begin{itemize}
    
    \item \violet{
    The quantity $\qual_j(t)$ capures the ``quality of the $j$-th slot in the leaderboard at time $t$.''  It measures how effectively the
    $\ind(\bu_j(t))$-th
    halfspace $w^{(\ind(\bu_j(t)))}$ (the one that occupies the $j$-th slot in the leaderboard at time $t$) was fixed at the time step $\bu_j(t)$
    when it was fixed. Quantitatively, it takes into account both the fractional size of the relevant subset of examples that have a ``large margin'' as well as how large that margin is; a higher value indicates higher quality.  See \Cref{def:quality} for a detailed definition.}

    \item \violet{
    The quantity $\imp_j(t)$ measures the ``\underline{imp}urity'' of the halfspace in position $j$ of the leaderboard at time step $t$.  It is equal to the fraction of points in the relevant subset $\bS_{t+1}$ of examples that satisfy $\sgn(w^{(\ind(\bu_{j}(t)))} \cdot x) \neq \bs_{\bu_j(t)}$ after the $t$-th iteration of Line~\ref{line:forsterize-again}. See \Cref{def:finefilter} for a detailed definition.
    }

    \item \violet{The quantity  $\mimp_j(t)$ (standing for ``\underline{m}odified \underline{imp}urity'') is an upper bound on the impurity $\imp_j(t)$ and is a technical convenience for us. In particular, we would like to argue that the impurity increases by a small multiplicative factor at each step. Unfortunately, such a statement only holds if the impurity is not too small, leading us to use the modified quantity $\mimp_j(t)$ instead.}

\end{itemize}

We conclude this subsection by stating the main goal of \Cref{sec:good-execution} through \Cref{sec:filtering}. This is to prove the following crucial lemma:

\begin{lemma} [Non-tiny probability of generating a hypothesis with non-trivial accuracy over the sample]
    \label{lem:ltf-func-loop}
    With probability at least $2^{-\sqrt{n} \log^{O(k)}(n)}$ over the random choices of the $\bg^{(t)}$'s, the $\br_t$'s, and the $\bs_t$'s, an execution of the $\log^k(n)$ repetitions of the inner Line \ref{line:ltf-func-inner-loop} loop of \Cref{alg:ltf-func} outputs a hypothesis $h_{b_1,b_2,b_3}$ that correctly classifies ${\frac 1 2} + 2^{-\sqrt{n} \log^{O(k)}(n)}$ fraction of the examples in $S$.
\end{lemma}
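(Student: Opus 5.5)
We will analyze a single execution of the inner loop and isolate a \emph{good total execution}: one in which, at each time step $t$, the guess $\br_t$ equals the leaderboard position the newly fixed halfspace should take, the sign $\bs_t$ is on the side of $w^{(\ind(t))}(t)$ carrying at least half the large-margin mass, and $\bg^{(t)}$ is $\alpha_{\br_t}$-lucky for $w^{(\ind(t))}(t)$ with $\alpha_j := 10\beta_j$. The guesses $\br_t,\bs_t$ cost a factor $(2k)^{-\log^k n}$. By \Cref{lem:gauss-lb} each lucky Gaussian guess costs $e^{-O(n\beta_1^2)} = 2^{-\sqrt{n}\log^{O(k)}n}$. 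There are at most $\log^k(n)$ time steps, so the total probability of a good execution is $2^{-\sqrt{n}\log^{O(k)}n}$ (using $k \le c\log n/\log\log n$ from \Cref{rem:large-k}). The remainder of the argument shows that every good execution reaches Line 2(c)(i) with all $\bu_i\neq 0$, while $|\bS_t|/|S| \ge 2^{-\sqrt{n}\log^{O(k)}n}$ and every halfspace is nearly constant on $\bS_t$. Then, as in the proof of \Cref{lem:weak-learn-over-sample}, a suitable choice of $b_1,b_2,b_3$ (with $b_3 = g(\bs\text{'s})$ and $b_1,b_2$ majority labels) gives accuracy $\tfrac12 + \Omega(|\bS_t|/(n^{O(\log^k n)}|S|))$. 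Here \Cref{thm:forster}(i) loses only a factor $1/n$ per Forsterization, and \Cref{remark:Forster} guarantees that the labels remain a function of the transformed halfspaces $w^{(i)}(t)$.

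The size bound and the purity of a newly fixed halfspace come from a generalization of \Cref{lem:advantage-simplified,lem:filtering-simple} to thresholds $\beta_j$. Suppose that at step $t$ the set $\bS_t$ has a $(p,\tau)$-margin with respect to $w^{(\ind(t))}(t)$, with $\tau \approx \lambda/\sqrt n$ and $p \approx n^{-O(\lambda)}$. Then a point with margin $\tau$ has advantage $n^{\Omega(\lambda\,\cdot\,\text{polylog})}$ over a point on the wrong side in $R^{\beta_j}_{\bs}(\tilde\bg)$. Reverse Markov (\Cref{lem:reverse-Markov}) then yields, with non-tiny probability, a region that retains a $2^{-\sqrt n\log^{O(k)}n}$ fraction of $\bS_t$ and whose impurity is at most roughly $p^{-1}n^{-\Omega(\lambda\log^{5(k-j+1)}n)}$. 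At the first fixing, Forster's margin $(1/4n, 1/2\sqrt n)$ from \Cref{lem:rad-iso-large-margin} supplies the required hypothesis.

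To track the leaderboard we define a quality $\qual_j$, essentially the pair $(\lambda,\log(1/p))$ ordered lexicographically by margin exponent, together with a modified impurity $\mimp_j(t) := \max(\imp_j(t), \text{floor}_j)$. The key monotonicity step, generalizing the "crucial insight" of \Cref{sec:learnfunc}, goes as follows. Bucket the points that are impure for slot $j$ according to their margin $\lambda'$ with respect to the newly guessed halfspace, and compare advantages. If $\mimp_j$ grows by more than a factor $n^{O(1)}$ at step $t$, then some bucket carries an $n^{-O(\lambda')}$ fraction of $\bS_t$ with margin $\lambda'/\sqrt n$ and $\lambda' \ge \log n\cdot\lambda_j$. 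That halfspace therefore has quality exceeding slot $j$, and the correct $\br_t$ is at most $j$, so slot $j$ is unfixed rather than polluted. Otherwise impurities grow by only $n^{O(1)}$ per step, and since at most $\log^k n$ steps occur they remain $\le 1/n^2$, because each $\beta_j$ carries an extra $\log^5 n$ factor relative to $\beta_{j+1}$.

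Termination follows from a potential argument. The vector of qualities $(\qual_1,\dots,\qual_k)$ increases lexicographically at each step: either a new slot is filled, or slot $\br_t$ gains a $\log n$ factor in margin exponent. Margins are at most $1$, so $\lambda\le\sqrt n$ and each slot can improve only $O(\log n/\log\log n)$ times before the lower slots reset. This gives at most $(\log n)^{k}$ steps, matching the inner loop bound. We expect this monotonicity-of-impurity step, and in particular the calibration of the $\beta_j$ exponents so that higher-ranked halfspaces are never polluted, to be the main obstacle.
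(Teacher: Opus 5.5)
Your architecture matches the paper's. It has good executions with lucky $\bg^{(t)}$ and correct $\br_t,\bs_t$, a leaderboard of slot qualities, and a floored impurity $\mimp_j$. It buckets slot-$j$-impure points by margin with respect to the newly guessed halfspace, applies reverse Markov, and uses a nested update count that is essentially \Cref{lem:num-updates}. The gap is in the step you flag as the main obstacle, and it is not a matter of constants: your dichotomy is false for the thresholds of \Cref{alg:ltf-func}.

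By \Cref{lem:advantage}, a point with margin $\mu$ with respect to $w^{(\ind(t))}$ has advantage $e^{\Theta(n\beta_{\br_t}^2\mu)}$ over a zero-margin point. Already at the ordinary Forster scale $\mu\approx 1/\sqrt n$ this is $e^{\Theta(\sqrt n\beta_{\br_t}^2)}=e^{\Theta(\log^{10(k-\br_t+1)}n)}$, which is superpolynomial. (The $n^{\Theta(\lambda)}$ scale from the overview in \Cref{sec:learnfunc} needs $\sqrt n\beta^2=\Theta(\log n)$, which holds for no $\beta_j$ here.) So slot-$j$-impure points at modest margin, which the new halfspace may have in abundance without exceeding slot $j$'s quality, can inflate $\imp_j$ superpolynomially in one step while $\br_t>j$ remains the correct rank.

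What is true is the rank- and quality-dependent per-step bound $e^{O(\sqrt n\beta_{\br_t}^2\log^{\qual_j(t)+1}(n))}$, and it must be summed with weights. While slot $j$ is untouched, slot $j+i$ changes at most $\log^i(n)$ times. This gives total exponent $O(\sqrt n\log^{\qual_j+1}(n)\sum_{i}\log^i(n)\beta_{j+i}^2)=O(\sqrt n\beta_{j+1}^2\log^{\qual_j+2}(n))$. The initial purity exponent $\sqrt n\beta_j^2\log^{\qual_j-1}(n)/3$ beats this only because $\beta_j^2=\log^{10}(n)\,\beta_{j+1}^2$. A uniform charge over all $\log^k(n)$ steps fails once $k$ exceeds a small constant, and $k$ may be as large as $c\log n/\log\log n$.

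Two further ingredients are missing.

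\emph{Per-slot calibration of quality.} Quality must be calibrated per slot, as in \Cref{def:quality} with $p=\frac1{4n}\exp(-n\beta_j^2\tau/\log n)$. With your $p\approx n^{-O(\lambda)}$, a halfspace that does not beat slot $j$ may still place an $n^{-O(\lambda')}$ fraction of impure points at margin $\lambda'/\sqrt n$. Their advantage $e^{\Theta(\lambda'\sqrt n\beta_{\br_t}^2)}$ swamps $n^{O(\lambda')}$. The paper's $p$ sits in a window. On one side, $1/p$ is below the advantage at the slot's own threshold $\beta_j$, which gives item~(1) of \Cref{def:finefilter}. On the other, $1/p$ is far above the advantage at any lower rank's threshold $\beta_{\br_t}\le\beta_j/\log^5(n)$, which makes the buckets $B_i^j$ negligible.

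\emph{Simultaneous control on one event.} Your treatment of old slots reasons about realized growth (``if $\mimp_j$ grows by more than \dots\ then some bucket \dots''), but bucket sizes control only expectations. The event you need (size, new-slot purity, and old-slot control together) has probability only $e^{-O(n\beta_{\br_t}^2)}$. Bounding each old slot separately by Markov and intersecting with that event loses a factor $e^{O(n\beta_{\br_t}^2)}$, far larger than slot $j$'s purity margin. The paper gets all three conditions at once from a single reverse Markov on $\bX=|\bS_{t+1}|-\bU-\sum_j\bV_j$ (\Cref{claim:expectation-lb}).

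Relatedly, ``every good execution'' does not suffice. You need the fine-filter event at each step, with its conditional probability multiplied into the success probability, as in \Cref{lem:filtering-real}.
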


We remark that since there are $\log^k(n)$ time steps $t$, and at each time step we guess ``the right value'' of $\br_t$ (respectively, $\bs_t$) with probability at least $1/k$ (respectively, at least $1/2)$, the reason that the success probability in \Cref{lem:ltf-func-loop} is only $2^{-\sqrt{n} \log^{O(k)}(n)}$ is because of the relatively low probability of successfully guessing vectors $\bg^{(1)},\dots,\bg^{(\log^k(n))}$ that satisfy our requirements.

As mentioned at the beginning of this subsection, similar to the warm-up, given \Cref{lem:ltf-func-loop} a standard analysis yields a PAC learning algorithm for any function of $k$ halfspaces; see \Cref{sec:strong-PAC-any-of-k} for details.

\begin{figure}[ht]
\begin{centering}
\begin{tabular}{|c|c|c|}
\hline
Symbol & Description & Reference \\
\hline
$\bu_j(t)$ & Time stamp of when the halfspace in position $j$ & \Cref{def:good}  \\
& of the leaderboard at time $t$ was last fixed & \\
\hline
\violet{$\ind(t)$} & \violet{Index in $[k]$ of which unfixed halfspace $w^{(\ind(t))}$} & \Cref{def:good}  \\
& \violet{gets fixed by guess $\bg^{(t)}$ at time $t$}  & \\
\hline
$V(t)$ & The subspace of $\R^n$ that the filtered & Line~\ref{line:forsterize-again}\\
& and Forsterized points lie in at time $t$ &  \\
\hline
$A(t)$ & Composition of the first $t$ Forsterization linear transformations; & Line~\ref{line:forsterize-again}\\
& ensures $\bS_{t+1}$ is in $3/2$-radial isotropic position w.r.t. $V(t+1)$& \\
\hline
$\bs_t$ & ``Side'' ($+$ or $-$) of the halfspace $\bg^{(t)} \cdot x \geq \bs_t \cdot \beta_{\br_t}$ to which & \Cref{def:good}  \\
& we are restricting at time $t$& \\ 
\hline
$\bg^{(t)}$ & The random vector, drawn from $\calN \left(0,\frac{1}{n} I_{n} \right)$, & Line~\Cref{line:sample-g} \\
& that is guessed at time $t$ & \\
\hline 
$\br_t$ & Index in $[k]$ specifying which position $\br_t$ of the leaderboard & \Cref{def:good}\\
& had $\bu_{\br_t}$ updated at time $t$ &  \\
\hline
$\bS_t$ & Set of filtered and Forsterized points that ``are  still in play''& Line~\ref{line:forsterize-again}\\
&  at time $t$; in $3/2$-radial isotropic position w.r.t.~$V(t)$ &  \\
\hline
$\beta_j$ &  $\frac{\log^{5 (k-j+1)}(n)}{n^{1/4}}$;  the threshold of the restricting region $\bg^{(t)}\cdot x \geq \bs_t \cdot \beta_j$ & Line~\ref{line:beta}\\
& attempting to fix the target halfspace in leaderboard position $j$
&  \\
\hline
$\alpha_j$ &  $10\beta_j$; a ``luckiness'' parameter for good guesses of the $\bg^{(t)}$'s & \Cref{sec:good-execution} \\
\hline
$R_{\bs_t}^{\beta_{\br_t}}(\bg^{(t)})$ & The new region $\{x \in \R^n: \bg^{(t)} \cdot x \geq \bs_t \cdot \beta_{\br_t}\}$ that we are & Line~\ref{line:sample-s}\\
& further restricting by
at time $t$ & \\
\hline 
\violet{$w^{(i)}(t)$} & \violet{ $\frac{A_t^{-1}P_{V_t} w^{(i)}}{\|A_t^{-1}P_{V_t} w^{(i)}\|}$; the transformed-and-projected version} & \Cref{remark:Forster}\\
& \violet{of the original target halfspace vector $w^{(i)}$ at time $t$} &  \\
\hline
\violet{$\qual_j(t)$} & \violet{The $j$-quality of the 
$\ind(\bu_j(t))$-th
halfspace when it was fixed,} & \Cref{def:quality} \\
& \violet{i.e.,~the quality of the $j$-th slot in the leaderboard at time $t$} &  \\
\hline
\violet{$\imp_j(t)$} & \violet{The fraction of points in $\bS_{t+1}$ 
satisfying $\sgn(w^{(\ind(\bu_{j}(t)))} \cdot x) \neq \bs_t$} & \Cref{def:finefilter} \\
& \violet{(on the ``wrong side'' of $w^{(\ind(\bu_j(t)))}$)} 
& \\
\hline
\violet{$\mimp_j(t)$} & \violet{Equals the maximum of $\imp_j(t)$ and $e^{-\sqrt{n} \beta_j^2 \log^{\qual_j(t)-1}(n)/3}$} & \Cref{def:finefilter} \\
\hline
\end{tabular}
\end{centering}
\caption{Description and Pointers for Notation For the Algorithm and Its Analysis.
}
\label{fig:figure2}
\end{figure}
\FloatBarrier

\begin{figure}
\begin{tikzpicture}
\tikzset{
  mystar/.style={shape=star,star points=5,minimum size=#1,star point ratio=2.1,rounded corners=#1/20}
}

\coordinate (p1) at (0,10.5);

\node[anchor=west, font=\crayonfont] at ($(p1) + (0,-0.2)$) {time $t = 65$:};

\draw[crayon draw, fill=violet!4] ($(p1) + (0, -3.1)$) rectangle ($(p1) + (16,-0.5)$);

\node[sticker star] at ($(p1) + (0.5, -1)$) {};
\node[sticker star] at ($(p1) + (15.4,-1)$) {};

\node[title text] at ($(p1) + (8, -1)$) {Leaderboard};

\begin{scope}[shift={(0,0)}]
  \coordinate (c1) at ($(p1) + (2, -2)$);
  \coordinate (c2) at ($(p1) + (6.2, -2)$);
  \coordinate (c3) at ($(p1) + (10.4, -2)$);

  \node[rank text] at (c1) {Rank 1:};
  \draw[crayon line] ($(c1)+( -1.6,-0.25)$) -- ($(c1)+(1.6,-0.25)$);
  \node[rank text] at ($(c1)+(0,-0.7)$) {$w^{(3)}$};

  \node[rank text] at (c2) {Rank 2:};
  \draw[crayon line] ($(c2)+(-1.6,-0.25)$) -- ($(c2)+(1.6,-0.25)$);
  \node[rank text] at ($(c2)+(0,-0.7)$) {$w^{(6)}$};

  \node[rank text] at (c3) {Rank 3:};
  \draw[crayon line] ($(c3)+(-1.6,-0.25)$) -- ($(c3)+(1.6,-0.25)$);
  \node[rank text] at ($(c3)+(0,-0.7)$) {$w^{(4)}$};
\end{scope}

\coordinate (t1) at ($(p1) + (1.9,6.7-10.5)$);
\def\rinc{4.2}
\def\cinc{-0.5}

\node[small math, anchor=center] at (t1) {$\qual_1\!\big(w^{(3)},65\big)=12$};
\node[small math, anchor=center] at ($(t1) + (\rinc,0)$) {$\qual_2\!\big(w^{(6)},65\big)=8$};
\node[small math, anchor=center] at ($(t1) + 2*(\rinc,0)$){$\qual_3\!\big(w^{(4)},65\big)=8$};

\node[small math, anchor=center] at ($(t1) + (0,\cinc)$) {\black{$u_1(65)=24$}};
\node[small math, anchor=center] at ($(t1) + (\rinc, \cinc)$) {\black{$u_2(65)=37$}};
\node[small math, anchor=center] at ($(t1) + (2*\rinc, \cinc)$) {\black{$u_3(65)=65$}};
\node[small math, anchor=center] at ($(t1) + (3*\rinc, \cinc)$) {\black{$u_4 = \dots = u_k = 0$}};

\node[small math, anchor=center] at ($(t1) + (0,2*\cinc)$) {$\ind(24)=3$};
\node[small math, anchor=center] at ($(t1) + (\rinc, 2*\cinc)$) {$\ind(37)=6$};
\node[small math, anchor=center] at ($(t1) + (2*\rinc, 2*\cinc)$) {$\ind(65)=4$};

\node[note, text width=15.6cm, align=left, anchor=west] at ($(t1) + (0.4,4.1) - (p1) - (1.9,6.7-10.5-0.4)$)
{A depiction of the leaderboard at time step $t=65$, along with some related quantities, during a good total execution of the line 2(c) loop. Information in \textcolor{violet!85!black}{violet} is used only in the analysis and is not available to the algorithm. At time step 65, we had $\br_{65} = 3$ and $\bg^{(65)}$ was an $\alpha_3$-lucky guess w.r.t.~$w^{(4)}$. At time step 24, $\bg^{(24)}$ was an $\alpha_1$-lucky guess w.r.t.~$w^{(3)}$, and at time step 37, $\bg^{(37)}$ was an $\alpha_2$-lucky guess w.r.t.~$w^{(6)}.$};

\coordinate (p2) at (0,2.8);
\node[anchor=west, font=\crayonfont] at ($(p2) - (0,0.4)$) {time $t = 66$:};

\draw[crayon draw, fill=violet!4] ($(p2) + (0, -0.5) - (0,2.8)$) rectangle ($(p2) + (16.0, 2.1) - (0,2.8)$);
\node[sticker star] at ($(p2) + (0.5, 1.6) - (0,2.8)$) {};
\node[sticker star] at ($(p2) + (15.4, 1.6) - (0,2.8)$) {};
\node[title text] at ($(p2) + (8,1.6) - (0,2.8)$) {Leaderboard};


\coordinate (d1) at ($(p2) + (2, -2)$);
\coordinate (d2) at ($(p2) + (6.2, -2)$);


\node[rank text] at (d1) {Rank 1:};
\draw[crayon line] ($(d1)+(-1.6,-0.25)$) -- ($(d1)+(1.6,-0.25)$);
\node[rank text] at ($(d1)+(0,-0.7)$) {$w^{(3)}$};

\node[rank text] at (d2) {Rank 2:};
\draw[crayon line] ($(d2)+(-1.6,-0.25)$) -- ($(d2)+(1.6,-0.25)$);
\node[rank text] at ($(d2)+(0,-0.7)$) {$w^{(8)}$};

\coordinate (t2) at ($(p2) + (1.9,6.7-10.5) - (0,0.2)$);
\def\rincb{4.2}
\def\cincb{-0.5}

\node[small math, anchor=center] at (t2) {$\qual_1\!\big(w^{(3)},66\big)=12$};
\node[small math, anchor=center] at ($(t2) + (\rinc,0)$) {$\qual_2\!\big(w^{(8)},66\big)=9$};

\node[small math, anchor=center] at ($(t2) + (0,\cinc)$) {\black{$u_1(66)=24$}};
\node[small math, anchor=center] at ($(t2) + (\rinc, \cinc)$) {\black{$u_2(66)=66$}};
\node[small math, anchor=center] at ($(t2) + (2*\rinc, \cinc)$) {\black{$u_3 = \dots = u_k = 0$}};

\node[small math, anchor=center] at ($(t2) + (0,2*\cinc)$) {$\ind(24)=3$};
\node[small math, anchor=center] at ($(t2) + (\rinc, 2*\cinc)$) {$\ind(66)=8$};

\node[note, text width=15.6cm, align=left, anchor=west] at ($(t2) + (0.4,4.1) - (p1) - (1.9,6.7-10.5-0.6)$)
{At time step 66, $\br_{66} = 2$, $\bg^{(66)}$ was an $\alpha_2$-lucky guess w.r.t.~$w^{(8)}$, and we have that  $\qual_2(w^{(8)},66) > \qual_2(w^{(6)},65)$.
Hence (cf.~item~(i) of \Cref{def:good}) $w^{(8)}$ replaces $w^{(6)}$ in the second slot of the leaderboard and all halfspaces with lower rank are reset.};
\coordinate (p3) at (0,-4.9);

\node[anchor=west, font=\crayonfont] at ($(p3) + (0,-0.2)$) {time $t = 67$:};

\draw[crayon draw, fill=violet!4] ($(p3) + (0, -3.1)$) rectangle ($(p3) + (16,-0.5)$);

\node[sticker star] at ($(p3) + (0.5, -1)$) {};
\node[sticker star] at ($(p3) + (15.4,-1)$) {};

\node[title text] at ($(p3) + (8, -1)$) {Leaderboard};

\begin{scope}[shift={(0,0)}]
  \coordinate (c1) at ($(p3) + (2, -2)$);
  \coordinate (c2) at ($(p3) + (6.2, -2)$);
  \coordinate (c3) at ($(p3) + (10.4, -2)$);

  \node[rank text] at (c1) {Rank 1:};
  \draw[crayon line] ($(c1)+( -1.6,-0.25)$) -- ($(c1)+(1.6,-0.25)$);
  \node[rank text] at ($(c1)+(0,-0.7)$) {$w^{(3)}$};

  \node[rank text] at (c2) {Rank 2:};
  \draw[crayon line] ($(c2)+(-1.6,-0.25)$) -- ($(c2)+(1.6,-0.25)$);
  \node[rank text] at ($(c2)+(0,-0.7)$) {$w^{(8)}$};

  \node[rank text] at (c3) {Rank 3:};
  \draw[crayon line] ($(c3)+(-1.6,-0.25)$) -- ($(c3)+(1.6,-0.25)$);
  \node[rank text] at ($(c3)+(0,-0.7)$) {$w^{(1)}$};
\end{scope}

\coordinate (t1) at ($(p3) + (1.9,6.7-10.5)$);

\node[small math, anchor=center] at (t1) {$\qual_1\!\big(w^{(3)},67\big)=12$};
\node[small math, anchor=center] at ($(t1) + (\rinc,0)$) {$\qual_2\!\big(w^{(8)},67\big)=9$};
\node[small math, anchor=center] at ($(t1) + 2*(\rinc,0)$){$\qual_3\!\big(w^{(1)},67\big)=4$};

\node[small math, anchor=center] at ($(t1) + (0,\cinc)$) {\black{$u_1(67)=24$}};
\node[small math, anchor=center] at ($(t1) + (\rinc, \cinc)$) {\black{$u_2(67)=66$}};
\node[small math, anchor=center] at ($(t1) + (2*\rinc, \cinc)$) {\black{$u_3(67)=67$}};
\node[small math, anchor=center] at ($(t1) + (3*\rinc, \cinc)$) {\black{$u_4 = \dots = u_k = 0$}};

\node[small math, anchor=center] at ($(t1) + (0,2*\cinc)$) {$\ind(24)=3$};
\node[small math, anchor=center] at ($(t1) + (\rinc, 2*\cinc)$) {$\ind(66)=8$};
\node[small math, anchor=center] at ($(t1) + (2*\rinc, 2*\cinc)$) {$\ind(67)=1$};

\node[note, text width=15.6cm, align=left, anchor=west] at ($(t1) + (0.4,4.1) - (p1) - (1.9,6.7-10.5-0.6)$)
{At time step~67 there is no unfixed halfspace that has higher $1$-quality than $w^{(3)}$ or higher $2$-quality than $w^{(8)}$.
At this time step we have $\br_{67}=3$, $\bg^{(67)}$ was an $\alpha$-lucky guess w.r.t.~$w^{(1)}$, and $\qual_3(w^{(1)},67)=4$.
Note that $w^{(1)}$ would have been placed into rank 3 on the leaderboard had its $3$-quality been any value from $0, 1, \dots 9$.};

\end{tikzpicture}
\caption{Example depictions of the leaderboard} \label{fig:leaderboard}
\end{figure}

\subsection{A Good Execution}
\label{sec:good-execution}

Throughout the execution of the algorithm, we make a number of guesses; in this section we explain in detail what are the desired outcomes of those guesses.

We start with the following basic definition:

\begin{definition}[$\alpha$-lucky] \label{def:lucky}
    We say that a vector $g$ is \emph{$\alpha$-lucky with respect to a vector $w \in \mathbb{S}^{n-1}$} if $w \cdot \bg \geq \alpha$.
\end{definition}

We will always set 
\[ \alpha_j:= 10 \beta_j\]
for all $j \in [k-1]$.

A crucial notion for us is the \emph{quality} of a halfspace:

\begin{definition}[$j$-Quality of a Halfspace]
\label{def:quality}
Given a 
a value $j \in [k-1]$,
we say that the 
\emph{$j$-quality of a target halfspace $w^{(i)}$ at time $t$},
denoted $\qual_j(w^{(i)},t)$, is the largest non-negative integer $q$ such that $\bS_{\bu_j(t)}$\footnote{For intuition, recall that $\bS_{\bu_j(t)}$ is the set of examples ``that were in play'' at the time $\bu_j(t)$ at which the halfspace $w^{(\ind(\bu_j(t))}$ that is in position $j$ on the leaderboard at time $t$ was last fixed, i.e.~entered that position on the leaderboard.} has a $(p, \tau)$-margin with respect to
$w^{(i)}$, 
where \[
\quad p = \frac{1}{4n} \exp(-n \beta_j^2 \tau/\log(n))
\quad \text{and} \quad
\tau = \frac{\log^{q}(n)}{2\sqrt{n}}.
\]
Most of the time we will be concerned with $\qual_j(w^{(\ind(\bu_j(t)))},t)$; to ease notation we will simply write $\qual_j(t)$ as shorthand for $\qual_j(w^{(\ind(\bu_j(t)))},t)$. 
We refer to this quantity $\qual_j(t)$ as the \emph{quality of the $j$-th slot in the leaderboard at time $t$.}

\end{definition}

Recall from \Cref{lem:rad-iso-large-margin} that since $\bS_{t'}$ is in $(3/2)$-radial isotropic position for all $t'$, the quantity  $\qual_j(w^{(i)},t)$ is indeed well-defined since $q=0$ satisfies the required conditions.
Note that since the margin parameter $\tau$ can never exceed 1, we will always have that $\qual_j(w^{(i)},t)$ is at most $O({\frac {\log n}{\log \log n}}).$
Also, note that since $\ind(t')$ is only defined for $t'>0$ (this should be clear from our discussion of $\ind(\cdot)$ already, and will become clearer in \Cref{def:good} below), the quantity $\qual_j(t)$ is only defined for $j \leq \nnz(\vec{\bu}(t)).$

Note the tension between the parameters $p$ and $\tau$ in \Cref{def:quality}: for larger $q$, the required margin $\tau$ is larger (which makes it easier to learn), but the relevant fraction of points $p$ with that margin is smaller (which makes it harder to learn). \Cref{def:quality} is carefully chosen to manage this tension in a way that will be useful for us;
intuitively, one should think of  halfspaces with higher quality as being algorithmically ``easier to handle'' via our techniques.

With \Cref{def:lucky} and \Cref{def:quality} in hand, we can now define a \emph{good time step} and a \emph{good total execution} of the loop on Line \ref{line:ltf-func-inner-loop}.
This definition also provides the formal definition of $\ind(t).$

\begin{definition}[Good time steps and good total executions.] \label{def:good}
    We say that a time step $t \in [\log^k(n)]$ of the loop on Line~\ref{line:ltf-func-inner-loop} is \emph{good} if $\bg^{(t)}$, $\br_t$ and $\bs_t$ satisfy the following conditions:

\begin{itemize}
    
    \item [$(i)$] Suppose $t$ is such that there exists at least one pair $(j,a) \in [k] \times [k]$, where $a \neq \ind(\bu_i(t))$ for any $i \in \nnz(\vec{\bu})$, such that
    $\qual_j(w^{(a)},t) > \qual_j(t)$.
    In words, $a$ is the index of an unfixed halfspace with higher $j$-quality than the halfspace corresponding to position $j$ on the current leaderboard.
    In this case, let $(j,a)$ denote the pair with smallest value of $j$ and, among those, smallest value of $a$; (I) $\br_t=j$, and (II) $\bg^{(t)}$ is $\alpha_{\br_t}(t)$-lucky for the halfspace $w^{(a)}$.  For the analysis, $\ind(t)$ is set to $a$.
    
    \item [$(ii)$] 
    The other possibility is that $t$ is such that no such $(j,a)$ pair exists as in $(i)$. In this case 
    (I$'$) $\br_t=\nnz(\vec{\bu}(t))+1$, and (II$'$) $\bg^{(t)}$ is $\alpha_{\br_t}(t)$-lucky for the target halfspace $w^{(a)}$ with the smallest value of $a \in [k]$ satisfying $a \neq \ind(\bu_i(t))$ for any $i \in \nnz(\vec{\bu}(t))$.
    For the analysis, $\ind(t)$ is set to that $a$.

    \item [$(iii)$] Finally, $\bs_t$  corresponds to the side of $w^{(\ind(t))}$ with ``the most high margin points certifying the $\br_t$-quality of $w^{(\ind(t))}$ over $\bS_{t}$'',
    i.e.~the side of $w^{(\ind(t))}$ with the most points in $\bS_t$ that have margin at least $\frac{\log^{\br_t}(n)}{2 \sqrt{n}}$ for $w^{(\ind(t))}$.

\end{itemize}
    
We say that a \emph{total execution} of all time steps of the loop on Line \ref{line:ltf-func-inner-loop} is \emph{good} if every time step $t \in [\log^k(n)]$ is good.
\end{definition}

We will use the above notation throughout this section. In particular, as described earlier, $\ind(t)$ denotes the halfspace for which $\bg^{(t)}$ is $\alpha_{\br_t}(t)$-lucky. 

In the rest of this subsection we record a few useful invariants that hold throughout any good total execution of the loop on Line \ref{line:ltf-func-inner-loop}. 

\begin{observation} \label{obs:quality}
 For any target halfspace $w^{(i)}$, in any good time step $t$ of the loop on Line \ref{line:ltf-func-inner-loop} we have that
        \[\qual_{j+1}(w^{(i)},t) \leq \qual_{j}(w^{(i)},t). \]
\end{observation}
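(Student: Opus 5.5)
The heart of the argument is a pointwise monotonicity of the margin requirement in \Cref{def:quality} as $j$ increases. Fix $q \ge 0$ and write $\tau_q = \frac{\log^q(n)}{2\sqrt n}$. The fraction required in the $j$-quality condition is
\[
p_j(q) = \frac{1}{4n}\exp\bigl(-n\beta_j^2\tau_q/\log(n)\bigr).
\]
By Line~\ref{line:beta}, $\beta_j = \log^{5(k-j+1)}(n)/n^{1/4}$ is strictly decreasing in $j$, so $\beta_{j+1}^2 \le \beta_j^2$, hence $p_{j+1}(q) \ge p_j(q)$, with the same $\tau_q$ in both. Consequently, on any fixed finite set $T$ of unit vectors, if $T$ has a $(p_{j+1}(q),\tau_q)$-margin with respect to $w^{(i)}$, then it also has a $(p_j(q),\tau_q)$-margin. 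Taking $q$ to be the largest value certified at level $j+1$, the largest value certified at level $j$ is at least as big. So I will first isolate this statement as a one-line monotonicity fact about a common set.

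The second step is to reduce the observation to that fact. The subtlety is that $\qual_j(w^{(i)},t)$ is measured on $\bS_{\bu_j(t)}$ while $\qual_{j+1}(w^{(i)},t)$ is measured on $\bS_{\bu_{j+1}(t)}$, and these sets differ in general. I expect this to be the main obstacle. A margin fraction on the later, filtered and re-Forsterized set does not obviously transfer back to the earlier set, since filtering can enrich high-margin points by a huge factor.

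What I would try first is to use goodness of the time step $t' = \bu_{j+1}(t)$ together with the fact that, by Line~\ref{line:u-update}, entries in slots below $j$ are reset whenever slot $j$ changes. This implies $\bu_j(t) < \bu_{j+1}(t)$ and that slot $j$ was unchanged from time $\bu_j(t)$ through $t$. I would then argue by contradiction: suppose $w^{(i)}$ had $(j+1)$-quality $q$ on $\bS_{t'}$ but $j$-quality less than $q$ on $\bS_{\bu_j(t)}$.

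For this I would combine two bounds on the fraction of $\tau_q$-margin points for $w^{(i)}$ within $\bS_{t'}$ relative to its ancestor. The first is the enrichment bound from the advantage-type estimates: a $\beta_j$-filter can multiply this fraction by at most roughly $\exp(O(n\beta_j^2\tau_q/\log n))$. The second is the slack between $p_j$ and $p_{j+1}$, which is about $\exp\bigl(n(\beta_j^2-\beta_{j+1}^2)\tau_q/\log n\bigr)$ and is enormous because $\beta_j/\beta_{j+1} = \log^5(n)$. The aim is to show this slack absorbs the enrichment, forcing the level-$j$ margin condition to already hold at $\bS_{\bu_j(t)}$.

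If the intended reading of \Cref{def:quality} is instead that the comparison is made over a common reference set, the observation follows immediately from the first step. Either way, I would present the $p_j$-monotonicity as the core of the argument and the set-comparison as the delicate part.
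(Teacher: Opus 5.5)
Your first step is the paper's entire proof. The paper justifies the observation in one line from \Cref{def:quality} and the fact that the $\beta_j$ decrease: $p_{j+1}(q)\ge p_j(q)$ at the same $\tau_q$, so any $q$ certified at level $j+1$ is also certified at level $j$ on the same point set. The paper treats the two qualities as measured on a common set, and that is the form \Cref{obs:decreasing-quality} needs. There the observation is applied to a single halfspace, the one entering slot $j+1$, on the single set $\bS_{\bu_{j+1}(t)}$ current when it is placed. Goodness then compares its $j$-quality with $\qual_j(t)$.

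You are right that the subscript $\bS_{\bu_j(t)}$ in \Cref{def:quality} literally compares two different sets. Even so, drop the cross-set transfer rather than presenting it as the delicate step, because it would not go through:
\begin{itemize}
\item The Advantage Lemma controls survival odds through the margin with respect to the direction the guess is lucky for. That is the slot-$j$ halfspace, not an arbitrary $w^{(i)}$.
\item Even when $w^{(i)}$ is that direction, a point of margin $\tau_q$ gains a factor $e^{\Theta(n\alpha_j\beta_j\tau_q)}=e^{\Theta(n\beta_j^2\tau_q)}$ over a margin-$0$ point. This exceeds your slack $p_{j+1}(q)/p_j(q)=\exp\bigl(n(\beta_j^2-\beta_{j+1}^2)\tau_q/\log n\bigr)$ by a $\log n$ factor in the exponent.
\item Your enrichment bound $\exp(O(n\beta_j^2\tau_q/\log n))$ is therefore not supported by the lemma.
\item The later filters at times in $(\bu_j(t),\bu_{j+1}(t))$ can enrich the fraction further.
\item The intervening Forster transforms change margins with respect to $w^{(i)}(\cdot)$ in an uncontrolled way.
\end{itemize}
Commit to the common-set reading, under which your first paragraph is a complete proof.
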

\begin{proof}
\Cref{obs:quality} holds from the definition of of quality (\Cref{def:quality}) and the fact that the $\beta_j$'s form a decreasing sequence.
\end{proof}

 \begin{observation} \label{obs:increasing}
 In any 
 execution of the loop on Line \ref{line:ltf-func-inner-loop}, at any time step $t$ the sequence of values $\bu_1(t),\bu_2(t),\dots,\bu_k(t)$ is an increasing sequence followed by a sequence of $0$'s.
 \end{observation}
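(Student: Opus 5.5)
The plan is an induction on the time step $t$ that uses only Line~\ref{line:u-update}, the single line where the $\bu_j$'s change. Because the argument never looks at how $\br_t$ or $\bg^{(t)}$ were chosen, it covers every execution of the loop on Line~\ref{line:ltf-func-inner-loop}, not just good ones. The invariant $P(t)$ I would carry is: there is an $m \in \{0,1,\dots,k\}$ with $0<\bu_1(t)<\bu_2(t)<\cdots<\bu_m(t)$, with $\bu_{m+1}(t)=\cdots=\bu_k(t)=0$, and with $\bu_m(t)\le t$ whenever $m\ge 1$. This is the precise meaning of ``an increasing sequence followed by a sequence of $0$'s.''

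\emph{Base case.} Line~2(b) sets $\bu_1=\cdots=\bu_k=0$ before the first time step, so $P(0)$ holds with $m=0$.

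\emph{Inductive step.} Assume $P(t-1)$ with prefix length $m$. If the check on Line~2(c)(i) fires, the inner loop exits and no further values need to be considered. Otherwise the algorithm samples some $r=\br_t\in[k]$, which may be arbitrary, and Line~\ref{line:u-update} sets $\bu_r=t$ and $\bu_i=0$ for every $i>r$; entries with $i<r$ are untouched. I then split into two cases.

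\emph{Case $r\le m+1$.} The new vector is $(\bu_1(t-1),\dots,\bu_{r-1}(t-1),\,t,\,0,\dots,0)$. The first $r-1$ entries are positive and strictly increasing by $P(t-1)$. Each is at most $\bu_m(t-1)\le t-1<t$, so appending $t$ keeps the prefix strictly increasing and yields $P(t)$ with prefix length $r$.

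\emph{Case $r>m+1$.} This case is the main obstacle. In a good execution it never occurs, since $\br_t\le\nnz(\vec{\bu})+1$ there. In an arbitrary execution, however, $\br_t$ is uniform on $[k]$, and the update would give $\bu_{m+1}=0$ followed by $\bu_r=t\neq 0$. That is a zero followed by a nonzero entry, which contradicts the literal claim. So for ``any execution'' the statement cannot be proved exactly as worded, and I will need to pin down which reading is intended.

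My first attempt would be to read Line~\ref{line:u-update} as writing into the first empty slot, i.e.\ $\bu_{\min(\br_t,\nnz(\vec{\bu})+1)}$. Under that reading only the first case ever arises, and the induction closes for every execution. I would state this reading explicitly in the proof.

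If the line must instead be taken verbatim, I would fall back to the observation restricted to good executions, where condition $(ii)$ of \Cref{def:good} gives $\br_t\le\nnz(\vec{\bu}(t))+1$ and condition $(i)$ gives $\br_t=j\le\nnz(\vec{\bu})$. Either way, the computational content is the same two-case update argument above, together with the bound $\bu_m(t-1)\le t-1$.
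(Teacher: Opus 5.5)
Your argument is correct and is the paper's reasoning made explicit. The paper's proof just cites Line~\ref{line:u-update}, and your two-case induction spells out what that citation means: the new timestamp $t$ exceeds every earlier entry, it is placed at position $\br_t\le\nnz(\vec{\bu})+1$, and everything after it is zeroed.

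Your caveat is also right. Since $\br_t$ is uniform on $[k]$, the choices $\br_1=1,\br_2=3$ give $\vec{\bu}(2)=(1,0,2,0,\dots,0)$, so the paper's one-line proof tacitly assumes $\br_t\le\nnz(\vec{\bu})+1$. Of your two fixes, the second is the appropriate one: restrict to good executions, where \Cref{def:good}(i)--(ii) force this bound. That is the only setting in which the analysis needs the observation, and it avoids altering the algorithm.
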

\begin{proof}
 This is a direct consequence of Line~\ref{line:u-update} of \Cref{alg:ltf-func}.
 \end{proof}

 \begin{observation}
     \label{obs:decreasing-quality}
 Consider any good total execution of the loop on Line \ref{line:ltf-func-inner-loop} and any time $t \in [\log^k(n)].$
 The sequence $\qual_1(t), \qual_2(t), \dots,\qual_{\nnz(\vec{\bu}(t))}(t)$ is non-increasing.
\end{observation}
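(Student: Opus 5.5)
We argue by induction on $t$, tracking how the leaderboard changes in a single good time step. Write $m := \nnz(\vec{\bu}(t))$. The claim is $\qual_j(t) \geq \qual_{j+1}(t)$ for every $j < m$. At $t = 0$ the leaderboard is empty, so the claim is vacuous. Assume it holds at time $t-1$ and consider a good step $t$. By \Cref{def:good} and Line~\ref{line:u-update}, step $t$ writes a new halfspace $w^{(a)}$ into slot $r := \br_t$. Slots $1,\dots,r-1$ are untouched, so $\bu_j(t) = \bu_j(t-1)$ for $j<r$, and slots above $r$ are zeroed. Since $\qual_j(t)$ depends only on $\bS_{\bu_j(t)}$ and the halfspace in slot $j$, we get $\qual_j(t) = \qual_j(t-1)$ for $j < r$. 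The inductive hypothesis therefore already gives monotonicity among slots $1,\dots,r-1$. It remains to show $\qual_{r-1}(t) \geq \qual_r(t)$, since no slots beyond $r$ exist.

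The quantity $\qual_r(t)$ is the $r$-quality of $w^{(a)}$ measured on $\bS_t$, because $\bu_r(t) = t$. On the other hand, $\qual_{r-1}(t)$ is the $(r-1)$-quality of $w^{(b)}$, with $b = \ind(\bu_{r-1}(t))$, measured on the older set $\bS_{\bu_{r-1}(t)}$. We compare the two through the selection rule of \Cref{def:good}, which is evaluated at time $t$ with the leaderboard as it stands before the update. Here $\qual_j(w^{(a)},t)$ for an unfixed $a$ is computed on the set $\bS_{\bu_j(t)}$ attached to slot $j$, and I would interpret it that way.

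\emph{Case (i).} The rule picks the pair $(j,a)$ with smallest $j$, so $j = r$. By minimality of $j$, no unfixed halfspace has strictly larger $(r-1)$-quality than the current occupant of slot $r-1$. In particular $\qual_{r-1}(w^{(a)},t) \leq \qual_{r-1}(t)$. \Cref{obs:quality} then gives
\[\qual_{r}(w^{(a)},\cdot) \leq \qual_{r-1}(w^{(a)},\cdot),\]
provided both are evaluated on the same point set. \emph{Case (ii).} Here $r = m+1$, and no unfixed halfspace beats any slot, in particular slot $m = r-1$. The same chain of inequalities applies.

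The main obstacle is exactly this "same point set" issue. The comparison in \Cref{def:good} evaluates $w^{(a)}$ on $\bS_{\bu_{r-1}}$, whereas $\qual_r(t)$ is evaluated on $\bS_t$, a filtered and re-Forsterized subset, and margins do not obviously transfer between the two. I would first try to pin down the convention that qualities entering the rule and the slot being filled are both read off $\bS_{\bu_j}$ for the relevant slot $j$. Under that convention, the new slot-$r$ quality is computed in the same way as $\qual_r(w^{(a)},t)$, and the inequalities chain cleanly:
\[\qual_r(t) = \qual_r(w^{(a)},t) \leq \qual_{r-1}(w^{(a)},t) \leq \qual_{r-1}(t).\]
If the definitions instead force evaluation on $\bS_t$ itself, I would fall back on monotonicity of the $(p,\tau)$-margin condition in $j$ (\Cref{obs:quality}, using that $\beta_j$ decreases) applied on the common set $\bS_t$. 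I would then argue that a violation $\qual_r(t) > \qual_{r-1}(t)$ would make $(r-1,a)$ a valid pair in rule (i), contradicting the minimality of $j = r$.
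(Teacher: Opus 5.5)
Your route is the paper's route, made explicit. The paper's proof is a one-line appeal to \Cref{def:quality}, \Cref{obs:quality} and the fact that the $\beta_j$ decrease, which amounts to your chain $\qual_r(t)=\qual_r(w^{(a)},t)\le\qual_{r-1}(w^{(a)},t)\le\qual_{r-1}(t)$. The paper leaves three steps implicit, and you handle each correctly: the induction on $t$, the invariance $\qual_j(t)=\qual_j(t-1)$ for $j<\br_t$, and the use of the selection rule for the last inequality. That rule enters through minimality of $j$ in case~(i), and through the case hypothesis in case~(ii).

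The point-set mismatch you identify is real, and the paper does not resolve it either. It cites \Cref{obs:quality}, whose proof only compares the thresholds $p$ for consecutive $j$, and so tacitly evaluates both qualities on one set. Yet \Cref{def:quality} reads $\qual_j(\cdot,t)$ off $\bS_{\bu_j(t)}$, which depends on $j$, and after the Forsterizations the comparison is also against differently transformed copies of $w^{(a)}$. Your ``convention'' is exactly that tacit assumption. At the paper's level of rigor, the step is closed by citing \Cref{obs:quality} as stated; nothing in the paper actually transfers margin information from $\bS_{\bu_{r-1}(t)}$ to $\bS_t$.

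Your fallback does not work. Recomputing every quality on $\bS_t$ breaks $\qual_j(t)=\qual_j(t-1)$ for $j<r$, so the inductive hypothesis no longer controls slots $1,\dots,r-1$. Rules~(i)/(ii) only compare unfixed halfspaces against current occupants, so they cannot restore the ordering among the old slots on the new set. That reading also conflicts with how the paper uses quality elsewhere, e.g.\ $\qual_j(t)=\qual_j(t-1)$ for $j<r_t$ in the proof of \Cref{lem:filtering-real}. Finally, in case~(ii) the contradiction in your last sentence is with the case hypothesis, not with minimality of $j$.
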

\begin{proof}
\Cref{obs:decreasing-quality} is a consequence of \Cref{def:quality}, \Cref{obs:quality} and the fact that the $\beta_1,\beta_2,\dots$ values form a decreasing sequence.
\end{proof}

\subsection{Fine Filters}
\label{sec:great-execution}

We now turn to arguing that a good total execution of the inner loop on Line \ref{line:ltf-func-inner-loop} which ends at time step $t=T$ will find a region $R \subseteq \R^n$ such that (a) the set of points  $\bS_{T} \subset R$ is ``not too small,'' and (b) every target halfspace  is almost constant over the points in $\bS_T$. 
In particular, our goal is  to argue that after the last step $T$ of a good total execution of the loop, for the vast majority of points $x$ in $\bS_T$, we have that the corresponding point in $S$
satisfies $\sgn(w^{\ind(\bu_j(T))} \cdot x) = \bs_{\bu_j(T)}$
for all $j \in [k]$. Towards this goal, we make the following definition:

\begin{definition}[Fine Filter] \label{def:finefilter}
Given $t \in [\log^k(n)]$ and $j \in [k]$,
let $\imp_j(t)$ denote the fraction of points in $\bS_{t+1}$ satisfying $\sgn(w^{(\ind(\bu_{j}(t)))} \cdot x) \neq \bs_{\bu_j(t)}$ after Line~\ref{line:forsterize-again}, and let \[
\mimp_j(t) := \max \left(\imp_j(t), e^{-\sqrt{n} \beta_j^2 \log^{\qual_j(t) - 1}(n)/3} \right).
\]
We say that a guess $\bg^{(t)}$ is a 
\emph{fine filter} if the following hold (note that the outcome of $\bg^{(t)}$ determines $\bS_{t+1}$ and hence $\imp_j(t)$ and $\mimp_j(t)$):
\begin{itemize}
    \item [(1)] We have that 
     \[ \imp_{\br_t}(t) \leq e^{-\sqrt{n} \beta_{\br_t}^2 \log^{\qual_{\br_t}(t) - 1}(n)/3};\]
    \item [(2)]
    For all $j < \br_t$ we have that 
    \[\mimp_j(t) \leq \mimp_j(t-1) \cdot e^{O(\sqrt{n} \beta_{\br_t}^2 \log^{\qual_j(t)+1}(n))};  \quad \text{and}\]

    \item [(3)]
    \[
|\bS_{t+1}| \geq e^{-O(n \beta_{\br_t}^2)} \cdot |\bS_{t}|.
\]
\end{itemize}

\end{definition}

Let us give some interpretation of \Cref{def:finefilter}. 
Condition (1) should be thought of as stipulating that the halfspace that is fixed at time $t$, i.e. $w^{(\ind(t))}$, has few points in $\bS_{t+1}$ on the ``wrong side'' (with $w^{(\ind(t))} \cdot x \not = \bs_t$). Condition $(2)$ roughly mandates that the impurity $\imp_j(t)$ of halfpaces ranked higher on the leaderboard doesn't increase by too much after fixing $w^{(\ind(t))}$. For technical reasons, we must use the surrogate quantity $\mimp_j(t)$ instead of $\imp_j(t)$. In particular, if $\imp_j$ is very small, then we may be unable to ensure property $(2)$ with $\imp_j$ in place of $\mimp_j$. While this problem can be circumvented by replacing (2) with
\[ \imp_j(t) \leq \imp_j(t-1) \cdot O(\sqrt{n} \beta_{\br_t}^2 \log^{\qual_j(t)+1}(n) + (\text{small additive quantity}),\]
it is simpler and more convenient to use $\mimp_j$ to bound the multiplicative increase in $(2)$.
Condition (3) should be thought of as stipulating that the set of examples ``that are in play'' does not shrink too much from time step $t$ to time step $t+1$.

With this, we can define a \emph{great total execution} of the loop on Line \ref{line:ltf-func-inner-loop}:

\begin{definition}
We say that a total execution of the loop on Line \ref{line:ltf-func-inner-loop} is \emph{great} if it is a good total execution and moreover, for all $t \in [\log^k(n)]$, the guess $\bg^{(t)}$ is a fine filter. 
\end{definition}

The bulk of the technical work in the rest of this section will be to show that a great total execution occurs with non-trivial probability, i.e.~to establish the following:

\begin{lemma} [Non-tiny probability of great total execution]
\label{lem:great-exec-prob}
    A great total execution of the loop on Line \ref{line:ltf-func-inner-loop} occurs with probability $n^{-O(n \beta_1^2)} \geq 2^{-\sqrt{n} \log^{O(k)}(n)}$.
\end{lemma}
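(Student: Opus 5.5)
The plan is to decompose the great-execution event time step by time step and use the chain rule of probability. Write $\mathcal{E}_t$ for the event that time step $t$ is good \emph{and} $\bg^{(t)}$ is a fine filter. Since $\bS_t$, $\bu(t)$, the qualities, and the leaderboard are all determined by the history up to time $t-1$, it suffices to show the following. For every history in which steps $1,\dots,t-1$ were great, the conditional probability of $\mathcal{E}_t$ is at least $\frac{1}{2k}\cdot n^{-O(n\beta_{\br_t}^2)}$, or more crudely $\frac{1}{2k} n^{-O(n\beta_1^2)/\log^k(n)}$. Multiplying over the $\log^k(n)$ steps then gives the bound. Note the target $n^{-O(n\beta_1^2)}$ is consistent with this: each step costs roughly $n^{-O(n\beta_{\br_t}^2)}$, but I should not simply multiply $\log^k n$ copies of $n^{-O(n\beta_1^2)}$. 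Instead I will use $\beta_1 = \log^{5k}(n)/n^{1/4}$ and absorb the factor $\log^k(n)$ into the exponent. Since $n\beta_1^2\log^k n\cdot \log n = \sqrt n \log^{O(k)} n$, the final bound $2^{-\sqrt n\log^{O(k)}n}$ follows, and I would state the per-step bound so that the product is $n^{-O(n\beta_1^2)}$ up to this absorption.

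For a fixed step, the guesses $\br_t$ and $\bs_t$ are correct with probability at least $1/(2k)$, since the target values are determined by the history via Definition~\ref{def:good}. Next, $\bg^{(t)}$ must be $\alpha_{\br_t}$-lucky for $w^{(\ind(t))}(t)$. By Lemma~\ref{lem:gauss-lb} this has probability $e^{-O(n\alpha_{\br_t}^2)}$. Conditioned on luckiness (call it $\tbg$), I need the three fine-filter conditions to hold simultaneously with probability $e^{-O(n\beta_{\br_t}^2\log n)}$. I would prove this with a reverse-Markov argument modeled on Lemma~\ref{lem:filtering-simple}.

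\begin{itemize}
\item For condition (1) and condition (3), use a generalized advantage lemma. A point with margin at least $\tau=\log^{q}(n)/(2\sqrt n)$ on side $\bs_t$ is, for $q=\qual_{\br_t}(t)$, at least $e^{\Omega(\sqrt n\beta^2\log^{q}n)}$ times more likely to land in $R^{\beta}_{\bs_t}(\tbg)$ than a wrong-side point. This is redone with general $\tau$ via the same $t_\gamma$ computation. Combined with the $p$ fraction of good points from the definition of quality (where $p$ contains the matching factor $\exp(-n\beta^2\tau/\log n)$), this makes the expected wrong-side mass smaller by the factor $e^{-\sqrt n\beta^2\log^{q-1}n/3}$. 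The lower bound on the landing probability of good points gives condition (3) in expectation.

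\item For condition (2), take each $j<\br_t$ and bucket the wrong-side points for $w^{(\ind(\bu_j(t)))}$ by their margin with respect to $w^{(\ind(t))}$, as in the technical overview. Since step $t$ is good, no unfixed halfspace has $j$-quality exceeding $\qual_j(t)$. Hence large-margin buckets with respect to $w^{(\ind(t))}$ have small mass. The monotonicity of the landing probability in the margin (the analogue of Lemma~\ref{lem:monotonicity}) then bounds the expected fraction of wrong-side points after filtering. That bound is the old mass times $e^{O(\sqrt n\beta_{\br_t}^2\log^{\qual_j(t)+1}n)}$, relative to the expected number of surviving points. The floor term in $\mimp_j$ handles the case where $\imp_j(t-1)$ is tiny.
\end{itemize}

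All these statements are in expectation, so I combine them into a single random variable of the form (weighted surviving good mass) minus $\sum$(weighted bad masses). This variable is bounded above, so Lemma~\ref{lem:reverse-Markov} gives all conditions at once with probability at least the ratio of its expectation to its maximum, which is $e^{-O(n\beta_{\br_t}^2\log n)}$.

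The main obstacle is condition (2). The argument must convert "no unfixed halfspace has higher $j$-quality", which is a statement about $\bS_{\bu_j(t)}$, into a bound on the margin distribution of the current wrong-side points in $\bS_t$. Those are different sets, related through subsequent filterings and Forsterizations. I expect to need an induction on $t$ which tracks that the wrong-side points' relative frequency was inflated by at most the accumulated factors from earlier steps, so that each bucket of margin $\lambda\log^{\qual_j}(n)/\sqrt n$ has mass at most $n^{-\Omega(\lambda)}$-weighted $p$. The union over buckets and over $j<\br_t$ also has to fit inside a single reverse-Markov application; I would try weighting each bad term by $1/(k\cdot\text{target bound})$ so that the sum remains bounded by a constant multiple of the good term.
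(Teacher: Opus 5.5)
Your per-step plan matches the paper's. You condition on a great history, pay $\frac{1}{2k}e^{-O(n\alpha_{r_t}^2)}$ for $r_t$, $s_t$ and luckiness, and then obtain all three fine-filter conditions at once by applying reverse Markov to $\bX=|\bS_{t+1}|-\bU-\sum_{j<r_t}\bV_j$ (\Cref{claim:expectation-lb}, \Cref{lem:filtering-real}).

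\textbf{The gap is in assembling the per-step bounds.} Multiplying $\log^k(n)$ worst-case factors gives only $n^{-O(n\beta_1^2\log^k n)}$. That is $2^{-\sqrt n\log^{O(k)}(n)}$, but it is not the stated $n^{-O(n\beta_1^2)}$, because $\log^k n$ is not $O(1)$; ``absorbing'' it simply abandons the first claim. Your ``cruder'' per-step bound $n^{-O(n\beta_1^2)/\log^k(n)}$ is false for $k\ge 2$ at steps with $r_t=1$, since $\alpha_1$-luckiness alone has probability $e^{-\Theta(n\beta_1^2)}$.

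The missing ingredient is \Cref{lem:num-updates}. In a great execution, slot $i$ of the leaderboard is updated only $O(\log^i(n))$ times: within an interval where higher slots are frozen, each update strictly raises a quality bounded by $\log n$, and a slot is reset only when a higher slot moves. Hence
\[ \sum_t n\beta_{r_t}^2\le O\Big(n\sum_{i=1}^k\log^i(n)\,\beta_i^2\Big)=O(n\beta_1^2\log n), \]
because $\log^i(n)\beta_i^2$ shrinks by a $\log^{9}(n)$ factor as $i$ increases. To land on $n^{-O(n\beta_1^2)}$ you also need the per-step cost to be $e^{-O(n\beta_{r_t}^2)}$, not your $e^{-O(n\beta_{r_t}^2\log n)}$. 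Reverse Markov gives this, since $\bX\le|S_t|$ and $\E[\bX]\ge\Pr[\xref\in R^{\beta_{r_t}}_{s_t}(\bg^{(t)})]\,|S_t|=e^{-O(n\beta_{r_t}^2)}|S_t|$.

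\textbf{Two corrections on condition (2).}

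First, \Cref{lem:monotonicity} only orders landing probabilities; at best it reduces each bucket to its largest-margin point. The factors $e^{O(\sqrt n\beta_{r_t}^2\log^{\qual_j(t)+1}(n))}$ for low-margin points and $e^{O(\sqrt n\beta_{r_t}^2\log^{i+1}(n))}$ for $B^j_i$ must come from the \emph{upper} half of \Cref{lem:advantage}, namely $\Adv(x,\xref,R^\beta_+(\tbg))\le e^{O(n\alpha\beta(w\cdot x))}$. That bound needs its own argument, because conditioning on luckiness does not cap $\tbg_1$; the paper splits on $\tbg_1\ge 2\alpha$, an event of probability $O(e^{-1.5n\alpha^2})$. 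The bucket bound then beats this advantage because $\beta_j\ge\log^5(n)\beta_{r_t}$ for $j<r_t$.

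Second, the paper does no induction through intermediate filterings. It uses goodness of step $t$ directly: for $j<r_t$, the newly fixed $w^{(\ind(t))}$ does not beat slot $j$'s quality, where the candidate's quality is measured on the set it is being fixed on now (cf.\ the time-66 example in \Cref{fig:leaderboard}). This gives $|B^j_i|<\frac{1}{4n}e^{-\sqrt n\beta_j^2\log^{i-1}(n)/2}|S_t|$ for all $i>\qual_j(t)$. With that reading your ``main obstacle'' disappears. The induction you sketch would have to control how filters for lower-ranked halfspaces enrich the high-margin points of $w^{(\ind(t))}$, and nothing in it does so.
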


\Cref{sec:advantage-real} and \Cref{sec:filtering} are dedicated to proving \Cref{lem:great-exec-prob}.
In the rest of this subsection we show that given \Cref{lem:great-exec-prob}, we easily get a weak hypothesis as asserted in \Cref{lem:ltf-func-loop}. 

To do this, we first need the following simple lemma.  In the lemma and subsequently, we say that $\bu_j$ is \emph{updated} at time step $t$ to mean that $\bu_j$ takes on a new nonzero value, i.e., $\bu_j(t) \notin\{\bu_j(t-1),0\}$, and
we say that  $\bu_j$ is \emph{reset} to mean that $\bu_j$ is newly set to 0, i.e., 
$\bu_j(t-1) \neq 0, \bu_j(t)=0$.

\begin{lemma}
\label{lem:num-updates}
    In any great total execution of the loop on Line \ref{line:ltf-func-inner-loop}, for any $j \in [k-1]$ and any $i \geq 1$ such that $j + i \leq k$, and any interval $[t_1, t_2]$ of time steps in which $\bu_j$ is not updated or reset, there can be at most $\log^{i}(n)$ time steps in $[t_1,t_2]$ at which $\bu_{j+i}$ is updated or reset.
    Moreover, there are at most $\log(n)$ time steps across any entire great total execution at which $\bu_1$ is updated or reset.
\end{lemma}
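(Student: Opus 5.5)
The plan is to track slot qualities and use their monotonicity plus a bounded range. The basic facts come from \Cref{def:quality}: each quality is a non-negative integer, and since $\tau=\log^q(n)/(2\sqrt n)\le 1$ it is at most $Q:=O(\log n/\log\log n)$, which is $\le \log n$ for large $n$. I would induct on $i$ and keep the potential argument as simple as possible.

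\textbf{Step 1: updates of $\bu_1$.} Updates of $\bu_1$ only arise in a good step of type (i) with $j=1$. Type (ii) sets $\br_t=\nnz(\vec{\bu}(t))+1$, and this equals $1$ only when the leaderboard is empty, which happens only at $t=1$. A type-(i) update with $j=1$ requires an unfixed halfspace $a$ with $\qual_1(w^{(a)},t)>\qual_1(t)$, and it places $a$ in slot $1$ with the new time stamp $\bu_1=t$.

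\textbf{Step 1, main obstacle.} I must argue that the new value $\qual_1(t)$ equals $\qual_1(w^{(a)},t)$, so that it strictly exceeds the old value. The difficulty is that quality is measured with respect to $\bS_{\bu_j(t)}$, which changes when $\bu_j$ changes. The comparison in (i) must therefore be read as evaluated on the current set $\bS_t$, and after the update $\bS_{\bu_1}=\bS_t$. I would make this identification explicit from the definitions, since Definition (i) compares against the quality the candidate would receive were it fixed now.

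\textbf{Step 1, conclusion.} Granting that, $\qual_1$ strictly increases at every update of $\bu_1$. Between updates it is constant, because $\bu_1$ and hence $\bS_{\bu_1}$ are fixed, and $\bu_1$ is never reset. Since $\qual_1$ takes values in $\{0,\dots,Q\}$, there are at most $Q+1\le\log n$ updates.

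\textbf{Step 2: general $i$.} I would induct on $i$ for a window $[t_1,t_2]$ in which $\bu_j$ is untouched. For $i=1$: $\bu_{j+1}$ is reset only when some $\bu_{j'}$ with $j'\le j$ is updated, by Line~\ref{line:u-update}. Within the window that forces $j'<j$, which also updates or resets $\bu_j$, and this is excluded. Hence within the window $\bu_{j+1}$ is only updated. By the Step 1 argument applied to slot $j+1$, each update after the first strictly raises $\qual_{j+1}$, except when slot $j+1$ was empty and type (ii) fills it. That exception can only occur once in the window, since the slot is never emptied afterward. Qualities lie in $\{0,\dots,Q\}$, so there are at most $Q+2\le\log n$ events.

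\textbf{Step 2, inductive step.} Suppose the claim holds for $i-1$ at all levels. The window for $\bu_j$ splits into at most $\log n+1$ maximal subintervals, cut at the times $\bu_{j+1}$ changes. On each subinterval $\bu_{j+1}$ is untouched, so the inductive hypothesis at level $j+1$ allows at most $\log^{i-1}(n)$ events for $\bu_{j+i}$. The boundary times add at most $\log n$ further events. The total is roughly $(\log n+1)\log^{i-1}(n)+\log n$.

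\textbf{Constants.} To land exactly on $\log^i(n)$, I would run the whole argument with the sharper per-level count $Q+2\le \tfrac12\log n$. Then the recursion closes with room to spare, giving at most $\log^i(n)$ as claimed.
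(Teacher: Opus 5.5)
Your proposal is correct and follows the paper's proof. As in the paper, $\bu_1$ is never reset and each update strictly raises the bounded integer $\qual_1$; the base case $i=1$ rules out resets of $\bu_{j+1}$ inside the window and uses monotone quality for slot $j+1$; and the inductive step splits the window at the times $\bu_{j+1}$ changes. Your extra bookkeeping only tightens points the paper passes over, since the paper simply multiplies $\log n$ by $\log^{i-1}(n)$: the single type-(ii) fill of an empty slot, reading case (i) as comparing against the candidate's quality on the current set $\bS_t$, and the $(m+1)\log^{i-1}(n)+m$ count, which you close by using $\tfrac12\log n$ per level.
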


\begin{proof}
    We begin by establishing the last sentence of the lemma.  This holds since $\bu_1$ is never reset, and every time it is updated the value of $\qual_1(t)$ must increase (recalling case~$(i)$ of \Cref{def:good}), but recalling the upper bound on $\qual_j(t)$ mentioned after \Cref{def:quality} this can occur at most $\log n$ times.

We prove the first sentence of the lemma by induction on $i$. We begin with the base case, where $i =1$. 
Note that $\bu_{j+1}$ cannot be reset to 0 without updating $\bu_\ell$ for some $\ell \leq j$, and thus updating or resetting $\bu_j$. Since by asumption $\bu_j$ is not updated or reset in the interval $[t_1, t_2]$, it follows that $\bu_{j+1}$ is never reset in that interval.
 On the other hand, each time we update $\bu_{j+1}$ at some time step $t$, by inspection of cases $(i)$ and $(ii)$ of \Cref{def:good}, it must be the case that 
$\qual_{j+1}(t)$
 increases.  Moreover, by \Cref{obs:decreasing-quality} and \Cref{def:quality}, for all $t$ we have that
    \begin{equation} \label{eq:porknachos}
    \log(n) \geq 
    \qual_{j+1}(t) \geq 0.
    \end{equation}
    Thus, $\bu_{j+1}$ can be updated (and thus increase) at most $\log(n)$ times in $[t_1, t_2]$.
    
We now turn to prove the inductive hypothesis. Indeed, suppose the statement is true for $i=\Delta - 1$. This means that, between any two times in $[t_1,t_2]$ that $\bu_{j+1}$ is updated or reset, there can be at most $\log^{\Delta-1}(n)$ times that $\bu_{j+\Delta}$ is updated or reset. Since, by the base case, we can only update or reset $\bu_{j+1}$ at most $\log(n)$ times in $[t_1,t_2]$, it follows that we can update or reset $\bu_{j + \Delta}$ at most $\log^\Delta (n)$ times in $[t_1,t_2]$, and the lemma is proved.
\end{proof}

With this, we can now prove \Cref{lem:ltf-func-loop} (assuming \Cref{lem:great-exec-prob}):

\begin{proof}[Proof of \Cref{lem:ltf-func-loop} assuming \Cref{lem:great-exec-prob}]
    Note that by \Cref{lem:great-exec-prob}, it suffices to show that a great execution will successfully output  a hypothesis $h_{b_1,b_2,b_3}$ that correctly classifies ${\frac 1 2} + 2^{-\sqrt{n} \log^{O(k)}(n)}$ fraction of the examples in $S$. Toward this goal, let $T$ denote the final iteration of the loop, i.e., the first time such that $\bu_i(T) \not = 0$ for all $i \in [k]$ (equivalently, the first time that all $k$ slots on the leaderboard are occupied, i.e.,~at which we have $\br_T=k$). Fix any $j \in [k]$. Since we are considering a great execution $\bg_{\bu_j(T)}$ is a fine filter, and hence by Item $(1)$ of \Cref{def:finefilter}, since $\br_{\bu_{j}(T)} = j$, we have that
    \begin{equation} \label{eq:lambshank}
    \imp_j(\bu_j(T)) \leq e^{-\sqrt{n} \beta_j^2 \log^{\qual_j(T) -1}(n)/3}. 
    \end{equation}
    Moreover, for any $t \geq u_j(T) + 1$, we have that
    \begin{equation} \label{eq:lollipop}
    \mimp_j(t) \leq \mimp_j(t-1) \cdot e^{O(\sqrt{n} \beta_{\br_t}^2 \log^{\qual_j(t) +1}(n))}
    \end{equation}
    by Item~(2) of \Cref{def:finefilter}.
    Using iterated applications of \Cref{eq:lollipop} with $t=\bu_j(T)+1,\dots,T-1$ (and observing that $\qual_j(t)=\qual_j(T)$ for all these values of $t$, the definition of $\qual_j(t)$) followed by \Cref{lem:num-updates} then yields that 
        \begin{align*}
            \mimp_j(T-1) &\leq \mimp_j(\bu_j(T)) \cdot \exp \left( O \left( \sqrt{n} \log^{\qual_j(T)+1}(n) \cdot \sum_{i = 1}^{k-j} \beta_{j+i}^2 \cdot \log^{i}(n) \right) \right). \\
\intertext{By the definition of $\mimp_j(t)$ and \Cref{eq:lambshank}, the above is}
            &= \exp \left(- \sqrt{n} \beta_j^2 \log^{\qual_j(T) - 1}(n)/3 + O \left( \sqrt{n} \log^{\qual_j(T)+1}(n) \cdot \sum_{i = 1}^{k-j} \beta_{j+i}^2 \cdot \log^{i}(n) \right) \right). \\
\intertext{Since the $\beta_j$'s form a geometrically decaying series, the above is}
            &\leq \exp \left(- \sqrt{n} \beta_j^2 \log^{\qual_j(T) - 1}(n)/3 + O \left( \sqrt{n} \beta_{j+1}^2 \log^{\qual_j(T)+2}(n) \right) \right) \\
            &\ll \frac{1}{n}
        \end{align*}
    where the last inequalities hold by our choice of $\beta_i$'s. It then follows by a union bound over all $j \in [k]$ that all but at most a $\frac{k}{n}=o(1)$ fraction of points in $\bS_T$ satisfy 
        \[\sign(w^{(\ind(\bu_j(T)))} \cdot x) = \bs_{\bu_j(T)}\]
    for all $j \in [k]$. This directly implies  that $f$ is at least $(1 -o(1))$-biased towards the bit $b_3 := g(\bs_{\bu_1(T)}, \bs_{\bu_2(T)}, \dots, \bs_{\bu_k(T)})$ on $\bS_T$. For a suitable choice of the bits $b_1$ and $b_2$, then, the hypothesis $h_{b_1,b_2,b_3}$  correctly classifies at least a 
    \begin{align*}
    \frac{1}{2} + \frac{1}{2} \cdot \pbra{1 - o(1)} \cdot \frac{|\bS_T|}{|S|} &\geq \frac{1}{2} + \frac{1}{2} \prod_{t = 1}^T e^{-O(n \beta_{r_t}^2)} \\ 
    &\geq \frac{1}{2} + \frac{1}{2} \prod_{i = 1}^{k} e^{-O(n \beta_{i}^2 \log^i(n))} \\
    &\geq \frac{1}{2} + \frac{1}{2} \cdot e^{-O(n \beta_1^2 \log(n))}
    \end{align*}
    fraction of examples in $S$, where the first inequality above used the fact that every guess $\bg^{(t)}$ is a fine filter (specifically, part (3) of \Cref{def:finefilter}), the second inequality used \Cref{lem:num-updates}, and the third inequality holds because because the $\beta_j$'s form a geometrically decaying series. This is the assertion of \Cref{lem:ltf-func-loop}.
\end{proof}

\subsection{The Advantage Lemma}
\label{sec:advantage-real}

To prove \Cref{lem:great-exec-prob} we will need a more refined version of the ``advantage lemma'' from the warm-up, \Cref{lem:advantage-simplified}.  In this more refined version, stated below, we give an \emph{upper bound} as well as a lower bound on the advantage for a much wider range of parameters, and the upper bound matches the lower bound up to polynomial factors.

\begin{lemma}[Advantage Lemma] \label{lem:advantage}
Let
$\Omega \left(\sqrt{\frac{\log(n)}{n}} \right) 
\leq 10\beta = \alpha
= o(1)$
and let $x,\xref, w \in \mathbb{S}^{n-1}$ satisfy $w \cdot \xref = 0$ and $\alpha \beta (w \cdot x) \geq \omega \left( \frac{\log(n)}{n} \right)$.
If $\bg \sim \calN(0, \frac{1}{n} I_n)$ and $\tbg$ denotes $\bg$ conditioned on it being $\alpha$-lucky for $w$, then 
\[
\Adv(x,\xref, R^\beta_{+}(\tbg)) = 
e^{\Theta(n \alpha \beta (w \cdot  x))}.
\]
\end{lemma}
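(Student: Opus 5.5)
We will follow the template of the proof of \Cref{lem:advantage-simplified}, but track both upper and lower bounds. Without loss of generality take $w = e_1$, and write $x_1 = w\cdot x \in (0,1]$. Since $\xref_1 = 0$, the first coordinate of $\tbg$ is irrelevant for $\xref$. The quantity $\tbg\cdot\xref$ is then distributed exactly as $\calN(0,1/n)$, independently of the conditioning. So \Cref{lem:gauss-lb} gives
\[
\Pr[\xref \in R^\beta_+(\tbg)] = \Theta\!\left(\frac{1}{\beta\sqrt n}\, e^{-n\beta^2/2}\right),
\]
using $\beta\sqrt n = \Omega(\sqrt{\log n})>1$.

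For $x$, we condition on $\tbg_1=\gamma\ge\alpha$ and set
\[
t_\gamma(x) = \frac{\beta-\gamma x_1}{\sqrt{1-x_1^2}}\sqrt n ,
\]
so that $\Pr[x\in R^\beta_+(\tbg)\mid \tbg_1=\gamma] = \Pr_{\by\sim\calN(0,1)}[\by\ge t_\gamma(x)]$. This probability is increasing in $\gamma$.

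For the \textbf{lower bound}, we bound the mixture from below by its value at $\gamma=\alpha$, exactly as in \Cref{eq:parsnip-simple}. Expanding as in the simplified proof gives
\[
-\frac{t_\alpha^2}{2}+\frac{n\beta^2}{2} = \frac{n}{2(1-x_1^2)}\left(2\alpha\beta x_1-\alpha^2x_1^2-\beta^2x_1^2\right).
\]
When $x_1 \le c$ for a small constant $c$ (recall $\alpha=10\beta$), this is $\Theta(n\alpha\beta x_1)$. The polynomial prefactors $\frac{\beta\sqrt n}{t_\alpha}$ are $e^{o(n\alpha\beta x_1)}$ precisely because $n\alpha\beta x_1=\omega(\log n)$, so they are absorbed into the $\Theta$ in the exponent. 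When $t_\alpha(x)<1$, for instance when $x_1 \ge \beta/\alpha$, the advantage is at least $\Omega(\beta\sqrt n)e^{n\beta^2/2}$. We must still check that this is $e^{\Omega(n\alpha\beta x_1)}$; this holds since $n\alpha\beta x_1\le 10 n\beta^2$. Relating it to the upper bound is handled below.

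For the \textbf{upper bound}, we use that the conditional law of $\tbg_1$ given $\tbg_1 \ge \alpha$ has a Gaussian tail: $\Pr[\tbg_1\ge\gamma]\approx e^{-n(\gamma^2-\alpha^2)/2}$ up to polynomial factors. We then bound
\[
\Pr[x\in R^\beta_+(\tbg)] \le \mathbb{E}_\gamma\!\left[\min\!\left(1,\, e^{-t_\gamma(x)^2/2}\right)\right]
\]
and optimize over $\gamma$. The exponent is
\[
-\frac{n(\gamma^2-\alpha^2)}{2} - \frac{n(\beta-\gamma x_1)_+^2}{2(1-x_1^2)}.
\]
Completing the square in $\gamma$ shows that the maximum over $\gamma\ge\alpha$ is achieved near $\gamma=\alpha$ whenever $x_1$ is small. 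This is because the derivative at $\gamma=\alpha$ is $-n\alpha + O(n\beta x_1)<0$ when $\alpha=10\beta$, so the exponent is at most $-\frac{n\beta^2}{2}+O(n\alpha\beta x_1)$. Dividing by the reference probability gives $e^{O(n\alpha\beta x_1)}$, with polynomial factors again absorbed using $n\alpha\beta x_1=\omega(\log n)$.

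When $x_1$ is large (say $x_1\ge \beta/\alpha = 1/10$), the trivial bound $\Pr[x\in R]\le 1$ gives advantage at most $O(\beta\sqrt n)e^{n\beta^2/2}$. This is $e^{O(n\alpha\beta x_1)}$ since $n\beta^2 = \Theta(n\alpha\beta)\le O(n\alpha\beta x_1)/x_1$ and $x_1=\Theta(1)$ there. Symmetrically, in this regime the lower bound $e^{\Omega(n\beta^2)}$ equals $e^{\Omega(n\alpha\beta x_1)}$ because $x_1\le 1$. This settles the $t_\alpha<1$ case as well.

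The main obstacle is the upper bound: it requires controlling the whole mixture over $\gamma$ rather than just its value at $\gamma=\alpha$. The key point there is the sign of the derivative in $\gamma$, which is where $\alpha=10\beta$ is used. A secondary nuisance is tracking the $1/(1-x_1^2)$ factor when $x_1$ is near 1; we would handle it by splitting into the cases $x_1\le 1/10$ and $x_1>1/10$ as above.
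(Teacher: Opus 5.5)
Your proof is correct. The lower bound is essentially the paper's: bound the mixture by its value at $\gamma=\alpha$ via monotonicity, then split on $t_\alpha(x)\ge 1$, which forces $x_1<1/10$.

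The upper bound takes a genuinely different route. The paper never integrates against the law of $\tbg_1$. After disposing of $\alpha x_1\ge\beta/4$ with the trivial bound $\Pr[x\in R^\beta_+(\tbg)]\le 1$, it splits on $\tbg_1\ge 2\alpha$ versus $\tbg_1<2\alpha$. The first event has probability $\Theta(e^{-1.5n\alpha^2})$, which is negligible against $e^{-n\beta^2/2}$ because $\alpha=10\beta$. The second is bounded by the conditional probability at $\gamma=2\alpha$, using monotonicity in $\gamma$. Evaluating at $2\alpha$ rather than $\alpha$ costs only a constant in the exponent, which is all the lemma claims. So the paper's route needs nothing beyond monotonicity and one tail estimate.

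You instead analyze the full mixture. The exponent $E(\gamma)=-\frac{n}{2}(\gamma^2-\alpha^2)-\frac{n}{2(1-x_1^2)}(\beta-\gamma x_1)_+^2$ is concave in $\gamma$. On the region $\gamma x_1\le\beta$ its vertex is at $\gamma=\beta x_1<\alpha$, so it is maximized at $\gamma=\alpha$. This buys a sharper statement: the probability is governed by $t_\alpha(x)$ alone, so your upper and lower bounds agree up to polynomial factors. The price is working with the conditional density of $\tbg_1$, which is $\Theta(n\alpha)e^{-n(\gamma^2-\alpha^2)/2}$.

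Two points to tighten when you write it out.
\begin{itemize}
\item $\min(1,e^{-t^2/2})$ is not an upper bound on $\Pr[Z\ge t]$ for $Z\sim\mathcal{N}(0,1)$ when $t<0$. You should use $e^{-(t_\gamma)_+^2/2}$, which your displayed exponent already does.
\item A bound on $\max_{\gamma\ge\alpha}E(\gamma)$ does not by itself control an integral over $[\alpha,\infty)$. You need one more step: either use concavity, $\int_\alpha^\infty e^{E(\gamma)}\,d\gamma\le e^{E(\alpha)}/|E'(\alpha)|$ with $|E'(\alpha)|=\Omega(n\alpha)$ cancelling the $\Theta(n\alpha)$ density normalization; or split off $\gamma\ge\alpha+1$, whose conditional probability $e^{-\Omega(n)}$ is negligible, and pay only a polynomial factor on $[\alpha,\alpha+1]$.
\end{itemize}
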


\begin{proof}
    Without loss of generality let $w = e_1$ and note that as a consequence of this convention and our assumptions, we have that $\tbg_1 \geq \alpha$, $(x_{\mathrm{ref}})_1 = 0,$ and $x_1 \geq 0$.
    Fix some $\gamma \geq \alpha$ and 
    condition on $\tbg_1 = \gamma$. It then follows that $(\tbg_2, \dots, \tbg_n)$ is drawn according to $\calN(0,\frac{1}{n} I_{n-1})$. Now set
    \[t_{\gamma}(z) := \frac{\beta - \gamma z_1}{\sqrt{1 - z_1^2}} \cdot \sqrt{n},
    \quad \quad \text{so} \quad \quad
    t_{\gamma}(\xref) = \beta \sqrt{n}.
    \]
Note that since $(\tbg_2, \dots, \tbg_n) \sim \calN(0,\frac{1}{n} I_{n-1})$, it follows that
\begin{equation}\Prx_{\tbg} \left[x \in R^\beta_+(\tbg) \bigg | \tbg_1 = \gamma \right] = \Prx_{\by \sim \calN(0,1)} [\by \geq t_\gamma(x)] = \begin{cases} \Theta(\frac{1}{t_{\gamma}(x)} e^{-t_{\gamma}(x)^2/2}) & \text{if~}t_{\gamma} (x)\geq 1 \\ \Theta(1) & \text{otherwise} \end{cases}  
\label{eq:carrot}
\end{equation} 
and likewise
\begin{equation}\Prx_{\tbg} \left[\xref \in R^\beta_+(\tbg)  \right] = \Prx_{\by \sim \calN(0,1)} [\by \geq t_\gamma(\xref)] = \Theta\pbra{\frac{1}{\beta \sqrt{n}} e^{-n\beta^2/2}}
\label{eq:beet}
\end{equation} 
where in both cases the final equality is by \Cref{lem:gauss-lb} (note that in \Cref{eq:beet} we know that $t_\gamma(\xref) = \beta \sqrt{n}$ is $>1$ by assumption on $\beta$).

With this we turn to prove the lower bound on the advantage (analogous to what was done in the warm-up):

\begin{claim}
    \label{clm:adv-lb}
    \[
\Adv(x,\xref, R^\beta_{+}(\tbg)) = 
e^{\Omega(n \alpha \beta (w \cdot  x))}.
\]
\end{claim}

\begin{proof}
The argument is similar to the proof of \Cref{lem:advantage-simplified}.
Since the Gaussian tail $\Pr_{\by \sim {\cal N}(0,1)}[\by \geq t]$ is a decreasing function of $t$ and $t_\gamma(x)$ is decreasing in $\gamma$, the probability given in \Cref{eq:carrot} is increasing in $\gamma$. 
Thus, for each $\gamma \geq \alpha$, we have that

\[\Prx_{\tbg} \left[x \in R^\beta_+(\tbg) \bigg | \tbg_1 = \gamma \right] \geq \begin{cases} \Omega(\frac{1}{t_{\alpha}(x)} e^{-t_{\alpha}(x)^2/2}) & \text{if~}t_{\alpha}(x) \geq 1 \\ \Theta(1) & \text{otherwise} \end{cases}. \]
Since $\tbg$ is a mixture over outcomes with $\tbg_1=\gamma$ as $\gamma$ ranges over $[\alpha,\infty)$, we have
\begin{equation} \label{eq:parsnip}
\Prx_{\tbg} \left[x \in R^\beta_+(\tbg)  \right] \geq \begin{cases} \Omega(\frac{1}{t_{\alpha}(x)} e^{-t_{\alpha}(x)^2/2}) & \text{if~}t_{\alpha}(x) \geq 1 \\ \Theta(1) & \text{otherwise} \end{cases}.
\end{equation}
We now consider two cases depending on the value of $t_\alpha(x).$

The first case is that $t_{\alpha}(x) \leq 1$; in this case, recalling \Cref{def:advantage}, \Cref{eq:beet,eq:parsnip} give us that
\[ \Adv(x,\xref, R^\beta_{+}(\tbg)) \geq \Omega \left( \beta \sqrt{n} e^{n\beta^2/2} \right) \geq e^{\Omega(n\beta^2)}
\geq e^{\Omega(n \alpha \beta (w \cdot x))},\]
where the penultimate inequality used that $\beta \sqrt{n} \geq 1$
and the final inequality used that $n \alpha \beta(w \cdot x) = O(n \beta^2),$ which holds since $\alpha =10\beta$ and $w \cdot x \leq 1.$

The second case is that $t_{\alpha}(x) \geq 1$. In this case, again by \Cref{eq:beet,eq:parsnip} and using $\beta \sqrt{n} \geq 1,$ we have 
\begin{equation}
\Adv(x,\xref, R^\beta_{+}(\tbg)) = \Omega \left( \frac{\beta \sqrt{n}}{t_\alpha(x)} e^{-t_{\alpha}(x)^2/2 + n\beta^2/2}\right).
\label{eq:sugakix}
\end{equation}
We can then compute
\begin{align*}
-t_\alpha(x)^2/2 + n\beta^2/2 &= \frac{n}{2} \left( \frac{2 \alpha \beta x_1 -\alpha^2 x_1^2 - \beta^2 x_1^2}{1 - x_1^2} \right). 
\end{align*}
We then have
\begin{align*}-t_\alpha(x)^2/2 + n\beta^2/2 &= {\frac n 2} \cdot {\frac 1 {1-x_1^2}} \left(  2 \alpha \beta x_1 -\alpha^2 x_1^2 - \beta^2 x_1^2 \right) \\
&\geq {\frac n 2}
\cdot {\frac 1 {1-x_1^2}} 
(\alpha \beta x_1 - \beta^2 x_1^2)
\tag{using $\beta \geq \alpha x_1$, since $t_{\alpha}(x) > 0$}\\
&\geq {\frac n 2}
\cdot {\frac 1 {1-x_1^2}} 
(\alpha \beta x_1 / 2) \tag{using $2\beta \leq \alpha$ and $0 \leq x_1 \leq 1$}\\
&= \Omega(n \alpha \beta x_1).
\end{align*}

As $t_\alpha(x) \geq 0$ and $2 \beta \leq \alpha$, it follows that $x_1 \leq \frac{1}{2}$. Since $\beta - \alpha x_1 \leq 1$ and $0 \leq x_1 \leq \frac{1}{2}$, it follows that
\begin{equation}t_{\alpha}(x) \leq O(\sqrt{n}). 
\label{eq:crustymunchies}
\end{equation}
Thus, in this second case, combining \Cref{eq:sugakix,eq:crustymunchies} we have that 
\[ \Adv(x,\xref, R^\beta_{+}(\tbg)) \geq \Omega \left( \beta e^{\Omega(n \alpha \beta x_1)} \right) \geq e^{\Omega(n \alpha \beta x_1)}
=e^{\Omega(n \alpha \beta (w \cdot x))},\]
where the final inequality uses the fact that $\beta = \Omega\pbra{\sqrt{\frac {\log(n)} n}}$ and $\alpha \beta x_1 \geq \omega \left( \frac{\log(n)}{n} \right)$.
\end{proof}

It now remains to prove the corresponding upper bound on the advantage (we remark that this part of the argument does not have an analogue in the warm-up). 
We first note that the advantage is never greater than $e^{O(n\beta^2)}$, as
    \begin{equation}
\Adv(x,\xref, R^\beta_{+}(\tbg)) 
    = 
    \frac{\Pr[x \in R^\beta_{+}(\tbg)]}{\Pr[\xref \in R^\beta_{+}(\tbg)]}
    \leq \frac{1}{\Pr[\xref \in R^\beta_{+}(\tbg)]} \leq \beta \sqrt{n} e^{n\beta^2/2}
    = e^{O(n\beta^2)}
    \label{eq:zorba}
    \end{equation}
    where the final inequality is \Cref{eq:beet} and the final equality uses $\beta \sqrt{n} = \Omega(\sqrt{\log n})$.
    If $\alpha x_1 \geq \beta/4$ then $n \alpha \beta x_1 \geq n \beta^2/4$ and \Cref{eq:zorba} gives the desired upper bound on advantage; so let us assume that $\alpha x_1 \leq \beta/4$. Now note that by \Cref{lem:gauss-lb} we have
    \begin{equation}
        \label{eq:ummagumma}
    \Pr[\tbg_1 \geq 2 \alpha] \leq \frac{\frac{\Theta(1)}{2\alpha \sqrt{n}} e^{-2 n \alpha^2}}{\frac{\Theta(1)}{\alpha \sqrt{n}} e^{-n \alpha^2/2}} = \Theta \left( e^{-1.5 n \alpha^2} \right). 
    \end{equation}
Moreover, note that
    \begin{equation} \label{eq:tbiggerthanone}t_{2 \alpha}(x) = \frac{\beta - 2 \alpha x_1}{\sqrt{1 - x_1^2}} \cdot \sqrt{n} \geq \frac{\beta \sqrt{n}}{2} \geq 1.\end{equation}
We thus have
    \begin{align}
\Prx_{\tbg} \left[x \in R^\beta_+(\tbg) \right] &=
\Prx_{\tbg} \left[x \in R^\beta_+(\tbg)  \ \& \ \tbg \geq 2 \alpha \right ]
+
\Prx_{\tbg} \left[x \in R^\beta_+(\tbg) 
 \ \& \ \tbg < 2 \alpha \right ]\nonumber
\\        
&\leq \Theta \left( e^{-1.5 n \alpha^2} \right) + \Prx_{\tbg} \left[x \in R^\beta_+(\tbg) 
 \ \& \ \tbg < 2 \alpha \right ] \nonumber \tag{by \Cref{eq:ummagumma}}\\
&\leq \Theta \left( e^{-1.5 n \alpha^2} \right) + \frac{1}{t_{2\alpha}(x)} e^{-t_{2\alpha}(x)^2/2}, \label{eq:ace}
    \end{align}
    where the last inequality, as at the start of the proof of \Cref{clm:adv-lb}, uses the fact that the probability given in \Cref{eq:carrot} is increasing in $\gamma$.
Similar to before, we then compute
\[
-t_{2\alpha}(x)^2/2 + n\beta^2/2 = \frac{n}{2} \left( \frac{4 \alpha \beta x_1 -4\alpha^2 x_1^2 - \beta^2 x_1^2}{1 - x_1^2} \right).
\]
As before we must have that $x_1 \leq 1/2$ since $2\beta < \alpha$ and $t_{2 \alpha}(x) \geq 1$. Similar to before, it then follows that
\begin{equation}
-t_{2\alpha}(x)^2/2 + n\beta^2/2 = O(n \alpha \beta x_1) 
\label{eq:almost}
\end{equation}
where we now use the fact that $\alpha x_1 \leq \beta/4$. Thus, we can bound the overall advantage by 
\begin{align*}
    \Adv(x,\xref, R^\beta_{+}(\tbg)) 
    &=
    \frac{\Pr[x \in R^\beta_{+}(\tbg)]}{\Pr[\xref \in R^\beta_{+}(\tbg)]}\\
    &\leq O \left(\beta \sqrt{n} e^{-1.5n\alpha^2 + n\beta^2/2} \right)  +
 \frac{\beta \sqrt{n}}{t_{2\alpha}(x) } e^{-t_{2\alpha}(x)^2/2 + n \beta^2/2} \tag{by \Cref{eq:beet,eq:ace}}\\
 &\leq O \left(\beta \sqrt{n} e^{-n\beta^2} \right) 
 + \beta\sqrt{n} e^{O(n \alpha \beta x_1)} \tag{using $\alpha \geq \beta$ and  \Cref{eq:tbiggerthanone,eq:almost}}\\
 &\leq 
 e^{O( n \alpha \beta x_1 )},
    \end{align*}
where the last line again uses $\alpha \beta x_1 \geq \Omega \left( \frac{\log(n)}{n} \right)$.
\end{proof}

Next, we establish a ``monotonicity'' property of advantage; this essentially says that if $x$ points more in the direction of $w$ than $\xref$, then $x$ has advantage over $\xref$ (for a ``lucky'' $\tbg$ distributed as in the previous lemma).

\begin{lemma}[Monotonicity of Advantage] \label{lem:monotonicity}
Let $\beta \leq \alpha$ and $x,\xref,w \in \mathbb{S}^{n-1}$ be such that $w\cdot x \geq $ $w \cdot  \xref$. Then
    \[ \Adv(x,\xref, R^\beta_{+}(\tbg)) \geq 1\]
 where $\tbg$ denotes $\bg \sim \calN(0,\frac{1}{n} I_n)$ conditioned on its being $\alpha$-lucky for $w$.
 \end{lemma}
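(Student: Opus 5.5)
We reduce to the same conditional computation used in the proof of \Cref{lem:advantage}. Without loss of generality $w = e_1$, so $\tbg_1 \geq \alpha$ and $x_1 \geq (\xref)_1$. Since the conditioning only involves the first coordinate, it suffices to show that for every fixed $\gamma \geq \alpha$,
\[
\Pr[x \in R^\beta_+(\tbg) \mid \tbg_1 = \gamma] \geq \Pr[\xref \in R^\beta_+(\tbg) \mid \tbg_1 = \gamma].
\]
Averaging over the mixture in $\gamma$ then gives $\Pr[x \in R^\beta_+(\tbg)] \geq \Pr[\xref \in R^\beta_+(\tbg)]$, which is the claim.

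Conditioned on $\tbg_1=\gamma$, the vector $(\tbg_2,\dots,\tbg_n)$ is distributed as $\calN(0,\frac1n I_{n-1})$. For a unit vector $z$ with $|z_1|<1$, the quantity $\sum_{i\ge2}\tbg_i z_i$ is therefore $\calN(0,(1-z_1^2)/n)$, exactly as in \Cref{eq:carrot}. Hence
\[
\Pr[z\in R^\beta_+(\tbg)\mid\tbg_1=\gamma] = \Pr_{\by\sim\calN(0,1)}[\by\ge t_\gamma(z)], \qquad t_\gamma(z)=\sqrt n\,\frac{\beta-\gamma z_1}{\sqrt{1-z_1^2}}.
\]
The degenerate cases $z_1=\pm1$ are handled directly. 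If $z_1=1$, then $z$ lies in $R^\beta_+$ deterministically because $\gamma\ge\alpha\ge\beta$. If $z_1=-1$, then $z$ is never in $R^\beta_+$. Since the Gaussian tail is decreasing in $t$, the main claim reduces to showing that $\phi(s):=(\beta-\gamma s)/\sqrt{1-s^2}$ is non-increasing in $s\in(-1,1)$ whenever $\gamma\ge\beta$. This gives $t_\gamma(x)\le t_\gamma(\xref)$.

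The key step, and the only real obstacle, is this monotonicity calculation. Differentiating gives
\[
\phi'(s) = \frac{-\gamma(1-s^2) + s(\beta-\gamma s)}{(1-s^2)^{3/2}} = \frac{\beta s-\gamma}{(1-s^2)^{3/2}}.
\]
Since $|s|<1$ and $\beta\le\alpha\le\gamma$, we have $\beta s<\gamma$ when $\beta\ge0$, so $\phi'(s)<0$. This is precisely where the hypothesis $\beta\le\alpha$ enters: without it, $\phi$ could increase near $s=1$. The endpoint cases fit this monotone picture, since $t_\gamma$ tends to $-\infty$ as $z_1\to1$ and to $+\infty$ as $z_1\to-1$. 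Therefore the conditional inequality holds for every $\gamma\ge\alpha$, and integrating over the conditional law of $\tbg_1$ finishes the proof.
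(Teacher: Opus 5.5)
Your proposal is correct and follows essentially the same route as the paper: reduce to $w=e_1$, condition on $\tbg_1=\gamma\ge\alpha$, show via the derivative $(\beta z_1-\gamma)/(1-z_1^2)^{3/2}\le 0$ that $t_\gamma(z)$ is non-increasing in $z_1$, and then average over $\gamma$. Your explicit handling of the endpoints $z_1=\pm1$ and your explicit use of $\beta\ge 0$ (which the paper's one-line justification also relies on implicitly) are minor tightenings of the same argument.
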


\begin{proof}
As before, without loss of generality let $w = e_1$ and note that by assumption we have that $\tbg_1 \geq \alpha$. Fix some $\gamma \geq \alpha$ and condition on $\tbg_1 = \gamma$. As before, we have that $(\tbg_2, \dots, \tbg_n)$ is drawn according to $\calN(0,\frac{1}{n} I_{n-1})$.
As before, for $z$ an arbitrary unit vector, let
    \[t_{\gamma}(z) := \frac{\beta - \gamma z_1}{\sqrt{1 - z_1^2}} \cdot \sqrt{n}. \]
As before, since $(\tbg_2, \dots, \tbg_n) \sim \calN(0,\frac{1}{n} I_{n-1})$, it follows from \Cref{lem:gauss-lb} that for any unit vector $z$ we have
\[\Prx_{\tbg} \left[z \in R^\beta_+(\tbg) \bigg | \tbg_1 = \gamma \right] = \Prx_{\by \sim \calN(0,1)} [\by \geq t_\gamma(z)]. \] 

We now note that 
    \[\frac{d}{dz_1} t_{\gamma}(z) = \sqrt{n} \cdot \frac{\beta z_1 - \gamma}{(1 - z_1^2)^{3/2}} \leq 0,\]
where the final inequality follows because $z_1 \leq 1$ and as $\beta \leq \alpha \leq \gamma$. So, we can conclude that
\[\Prx_{\tbg} \left[\xref \in R^\beta_+(\tbg) \bigg | \tbg_1 = \gamma \right] =  \Prx_{\by \sim \calN(0,1)} [\by \geq t_\gamma(\xref)] \leq \Prx_{\by \sim \calN(0,1)} [\by \geq t_\gamma(x)] = \Prx_{\tbg} \left[x \in R^\beta_+(\tbg) \bigg | \tbg_1 = \gamma \right].\] 

Writing $\phi(\gamma)$ to denote the value of the pdf of $\tilde{\bg}_1$ at $\gamma$, it then follows that
\begin{align*}
    \Adv(x,\xref, R^\beta_{+}(\tbg)) &= \frac{\int_\alpha^\infty \Pr_{\tbg} \left[x \in R^\beta_+(\tbg) \bigg | \tbg_1 = \gamma \right] \cdot 
    \phi(\gamma)
    d\gamma}{\int_\alpha^\infty \Pr_{\tbg} \left[\xref \in R^\beta_+(\tbg) \bigg | \tbg_1 = \gamma \right] \cdot 
    \phi(\gamma)
    d\gamma} \\
    &\geq \min_{\gamma \geq \alpha} \frac{\Pr_{\tbg} \left[x \in R^\beta_+(\tbg) \bigg | \tbg_1 = \gamma \right]}{\Pr_{\tbg} \left[\xref \in R^\beta_+(\tbg) \bigg | \tbg_1 = \gamma \right]} \\
    & \geq 1.\qedhere
\end{align*}
\end{proof}

\subsection{The Filtering Lemma and the Proof of \Cref{lem:great-exec-prob}} \label{sec:filtering}

The main goal of this subsection is to prove the following ``filtering lemma'':

\begin{lemma}[Filtering Lemma] \label{lem:filtering-real}
    Consider an execution of the loop on Line \ref{line:ltf-func-inner-loop} and fix some time step $t$. Then
        \begin{align}
        \label{eq:hippomeat}
        &\Pr_{\bg^{(t)}} \sbra{[\bg^{(t)} \text{ is a fine filter~} \bigg | ~~ \parbox{15em}{$t'$ is good for all $t' \leq t$ and\\ $\bg^{(t')}$ is a fine filter for all $t' < t$} } \geq e^{-O(n \beta_{r_t}^2)}. 
        \end{align}
\end{lemma}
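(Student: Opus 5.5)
Fix the history up to time $t$ and condition on the $\alpha_{\br_t}$-luckiness of $\bg^{(t)}$ for $w:=w^{(\ind(t))}(t)$; this is part of ``$t$ good'', and it has probability $e^{-O(n\beta_{\br_t}^2)}$ by \Cref{lem:gauss-lb}. Write $\tbg$ for the conditioned guess, $r:=\br_t$, $q:=\qual_r(t)$, $s:=\bs_t$, $\tau=\log^q(n)/(2\sqrt n)$ and $p=\frac1{4n}\exp(-n\beta_r^2\tau/\log n)$. Item (iii) of \Cref{def:good} says at least $p|\bS_t|/2$ points of $\bS_t$ have $s\,(w\cdot x)\ge\tau$; flipping signs, assume $s=+$. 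The aim is to show that the three conditions of \Cref{def:finefilter} hold simultaneously with probability $e^{-O(n\beta_r^2)}$ over $\tbg$. The method is to convert each condition into a bound on expectations and then apply reverse Markov (\Cref{lem:reverse-Markov}), as in the proof of \Cref{lem:filtering-simple}.

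\textbf{Conditions (1) and (3).} Take $x$ with $w\cdot x\ge\tau$ and $\xref$ with $w\cdot\xref\le 0$. By \Cref{lem:monotonicity} (comparing $\xref$ with a point orthogonal to $w$) and \Cref{lem:advantage}, $\Adv(x,\xref,R^{\beta_r}_+(\tbg))\ge e^{\Omega(n\alpha_r\beta_r\tau)}=e^{\Omega(\sqrt n\beta_r^2\log^q n)}$. Summing as in \Cref{eq:humbug}, the expected number of bad points in $\bS_t\cap R$ is at most $p^{-1}e^{-\Omega(\sqrt n\beta_r^2\log^q n)}$ times the expected number of good high-margin points. Since $p^{-1}=4n\,e^{\sqrt n\beta_r^2\log^{q-1}(n)/2}$, the extra factor of $\log n$ in the exponent absorbs it, with a margin of $e^{-\sqrt n\beta_r^2\log^{q-1}(n)/3}$ to spare. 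The Forsterization in Line~\ref{line:forsterize-again} discards at most a $(1-1/n)$ fraction of points by \Cref{thm:forster}(i), and that loss of a factor $n$ is also absorbed. For (3), the bound $\Pr[x\in R]\ge e^{-O(n\beta_r^2)}$ for every $x$ with nonnegative margin (from \Cref{eq:parsnip}) shows that the expected size of the good set is at least $p\,e^{-O(n\beta_r^2)}|\bS_t|=e^{-O(n\beta_r^2)}|\bS_t|$. Reverse Markov applied to $2A/B-C$ then gives (1) and (3) together with probability $e^{-O(n\beta_r^2)}$.

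\textbf{Condition (2), the main obstacle.} For $j<r$, write $w_j$ for the halfspace in slot $j$. The bad points for $w_j$ in $\bS_t$ are an $\imp_j(t-1)$ fraction. We need their fraction after intersecting with $R$ to grow by at most $e^{O(\sqrt n\beta_r^2\log^{q_j+1}n)}$, where $q_j=\qual_j(t)$. I would bucket the bad points by their margin $\lambda/\sqrt n$ with respect to $w$, for powers of two $\lambda$. By \Cref{lem:advantage}, their advantage relative to a typical point is at most $e^{O(\sqrt n\beta_r^2\lambda)}$. Buckets with $\lambda\le\log^{q_j+1}n$ contribute only the allowed factor.

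For large $\lambda$ the plan is a contradiction with goodness. If bad points with large margin for $w$ made up more than a $\mimp_j(t-1)\cdot e^{-\Theta(\sqrt n\beta_r^2\lambda)}$ fraction, this mass would certify a $(p',\tau')$-margin for $w$ on $\bS_{\bu_j(t)}$ with quality greater than $q_j$. The certificate is obtained by pulling back through the intermediate filters, using the upper bounds from \Cref{lem:advantage} and (3) from earlier steps. Case (i) of \Cref{def:good} would then have placed $w$ in slot $j$ rather than slot $r$, a contradiction. The floor built into $\mimp_j$ is what keeps this comparison valid when $\imp_j$ is tiny.

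The expected total of the large-$\lambda$ buckets is therefore within the allowed factor. Markov's inequality, with a union over the $k$ indices $j$, then makes (2) hold with constant conditional probability, so intersecting it with the reverse-Markov event still leaves probability $e^{-O(n\beta_r^2)}$. I expect the bookkeeping in this quality-contradiction step, which relates the fractions at time $\bu_j(t)$ to those at time $t$, to be the hardest part.
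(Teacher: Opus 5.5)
Your treatment of conditions (1) and (3) matches the paper's: it is the warm-up argument, with the factor $p^{-1}$ and the Forster loss of $n$ absorbed exactly as you say. Your argument for condition (2), however, has two genuine gaps. Below, $R$ denotes $R^{\beta_{r_t}}_{\bs_t}(\bg^{(t)})$.

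\textbf{Combining (2) with (1) and (3).} Markov plus a union bound over $j$ gives (2) with probability at least $1-c$. The reverse-Markov event $E$ giving (1) and (3) has probability only $e^{-O(n\beta_{r_t}^2)}$. All you can conclude is $\Pr[E\cap\{(2)\}]\ge e^{-O(n\beta_{r_t}^2)}-c$, which is negative. Nothing prevents the rare guesses that make $|\bS_{t+1}|$ large from being exactly those that enrich the points on the wrong side of the slot-$j$ halfspace.

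There is a second problem with this step. Condition (2) bounds a ratio whose denominator $|\bS_{t+1}|$ is controlled only on $E$. Conditioning your Markov bound on $E$ costs a factor $1/\Pr[E]=e^{O(n\beta_{r_t}^2)}$, far more than the permitted $e^{O(\sqrt n\beta_{r_t}^2\log^{\qual_j(t)+1}(n))}$. Accumulated over the updates counted in \Cref{lem:num-updates}, that loss would swamp $e^{-\sqrt n\beta_j^2\log^{\qual_j(t)-1}(n)/3}$ in the proof of \Cref{lem:ltf-func-loop}.

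The paper avoids this by folding all three conditions into one variable, $\bX=|\bS_{t+1}|-\bU-\sum_{j<r_t}\bV_j$, where $\bV_j=\imp_j(t)|\bS_{t+1}|/(\mimp_j(t-1)e^{O(\sqrt n\beta_{r_t}^2\log^{\qual_j(t)+1}(n))})$. It proves $\E[\bV_j]\le\frac1k|S_t|\Pr[\xref\in R]$ against $\E|\bS_{t+1}|\ge e^{\Omega(\sqrt n\beta_{r_t}^2\log^{\qual_{r_t}(t)}(n))}|S_t|\Pr[\xref\in R]$. It then applies \Cref{lem:reverse-Markov} once. On the resulting event, $\bU,\bV_j\le 2|\bS_{t+1}|$ gives (1) and (2), and $|\bS_{t+1}|\ge\bX$ gives (3).

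\textbf{The large-margin buckets.} The pullback to $\bS_{\bu_j(t)}$ is not needed, and as sketched it would not go through, for two reasons:
\begin{itemize}
\item Margins with respect to $w$ are not preserved by the intermediate Forster maps. Under $x\mapsto Ax/\|Ax\|$, each margin is rescaled by a point-dependent factor.
\item Transferring fractions backwards via (3) costs up to $\prod_{t'}e^{O(n\beta_{r_{t'}}^2)}=e^{O(n\beta_{j+1}^2\log n)}$. For small $\qual_j(t)$ this dwarfs the slack $e^{\Theta(\sqrt n\beta_j^2\log^{\qual_j(t)}(n))}$ between consecutive quality thresholds.
\end{itemize}

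The paper instead applies the leaderboard rule at time $t$ directly to the set $S_t$ currently in play. Since $w^{(\ind(t))}$ was placed in slot $r_t>j$, \Cref{def:good} gives $\qual_j(w^{(\ind(t))},t)\le\qual_j(t)$. So for every $i>\qual_j(t)$, fewer than $\frac1{4n}e^{-\sqrt n\beta_j^2\log^{i-1}(n)/2}|S_t|$ points of $S_t$ (bad for slot $j$ or not) have $w^{(\ind(t))}$-margin above $\log^i(n)/(2\sqrt n)$.

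The idea missing from your sketch is the scale separation. This count is governed by $\beta_j$, whereas the enrichment of such points in $R$ is only $e^{O(\sqrt n\beta_{r_t}^2\log^{i+1}(n))}$, and $\beta_j\ge\log^5(n)\,\beta_{r_t}$ for $j<r_t$. Hence these buckets together contribute at most $e^{-\sqrt n\beta_j^2\log^{\qual_j(t)}(n)/2.3}|S_t|\Pr[\xref\in R]$ to the expected bad count. That additive term is absorbed by the floor $e^{-\sqrt n\beta_j^2\log^{\qual_j(t)-1}(n)/3}\le\mimp_j(t-1)$.

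With the buckets bounded this way and the (2)-terms folded into $\bX$, your outline becomes the paper's proof.
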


Note that because of the conditioning on the LHS of \Cref{eq:hippomeat}, the value of $\br_t$ has been fixed after this conditioning. This can be verified by sequentially considering $t'=1,2,\dots$ in items $(i)$ and $(ii)$ of \Cref{def:good}:  at each time step $t'$, given the values of the earlier $r_{t''}$'s (which determine the earlier $\bu_{j}(t'')$'s), the value of $\br_{t'}$ is a deterministic function of the target halfspaces, and therefore so are the $u_{j}(t')$'s, letting us continue the sequential argument.  This is why the $r_t$ on the RHS of \Cref{eq:hippomeat} is well defined and not random (and why the $r_t$ occurring there is no longer boldfaced).

Before proving \Cref{lem:filtering-real}, we show why \Cref{lem:filtering-real} completes the proof of \Cref{lem:great-exec-prob}:

\begin{proof}[Proof of \Cref{lem:great-exec-prob} using \Cref{lem:filtering-real}]
    Fix a time step $t$. Let us suppose that $t'$ is good and $\bg^{(t')}$ is a fine filter for all $t' < t$. We will lower bound the probability that $t$ is good; once we have done this, we will deploy \Cref{lem:filtering-real} to lower bound the probably that $\bg^{(t)}$ is a fine filter. This will allow us to inductively lower bound the probability of a great total execution (recall that this means that all time steps are good and all guesses $\bg^{(t)}$ are fine filters).  
    
    This is easily accomplished:  note that the probability that time step~$t$ is good (meaning that $\bs_t$ is correct, $\br_t$ takes the correct value, 
    and $\bg^{(t)}$ is $\alpha_{\br_t}$-lucky with respect to the appropriate halfspace $w^{\ind(u_{r_t})}$, as specified in \Cref{def:good}) is 
        \[\frac{1}{2} \cdot \frac{1}{k} \cdot e^{-O(n \alpha_{r_t}^2)} = e^{-O(n \beta_{r_t}^2)} \]
        (recalling that $\beta_j = \alpha_j/10$ for all $j$).
        Assuming this happens, we can apply the Filtering Lemma  (\Cref{lem:filtering-real}) to get that there is a $e^{-O(n \beta_{r_t}^2)}$ chance that $\bg^{(t)}$ is a fine filter. 
        
        Using this inductively for $t=1,\dots,T$, where
        $T$ denotes the final time step, we can conclude 
        that the probability of a great total execution is at least
        \[\prod_{t = 1}^{T} e^{-O(n \beta_{r_t}^2)} \geq  \exp \left( - n \sum_{i = 1}^k \log(n)^i \beta_i^2 \right) \geq \exp \left( - n \beta_1^2 \log(n) \right)\]
    as desired, where the first inequality is by \Cref{lem:num-updates} and the second is by the geometrically decreasing definition of the $\beta_i$'s.
\end{proof}

We now move on to proving \Cref{lem:filtering-real}. Its proof (most of which takes place in the proof of \Cref{claim:expectation-lb}) consists of three components, corresponding to the three properties of \Cref{def:finefilter}. Recall that property~(1) bounds the impurity of the set of points $\bS_{t+1}$ that are in play at the end of time step $t$, and property~(3) states that that set is not too small.  These are entirely analogous to corresponding properties in our earlier warm-up argument, and proving properties (1) and (3) of \Cref{def:finefilter} indeed follows via similar arguments to those in \Cref{sec:warmup}. To prove property $(2)$, which states that the modified impurity doesn't increase by too much, we partition points based on their margin with respect to $w^{(\ind(t))}$, the halfspace we are currently fixing. To handle points close to the halfspace, we can simply apply the upper bound from the Advantage Lemma. For the remaining points, on the other hand, we have a bound on the quality for the halfspace that we are currently fixing by virtue of where it is placed on the leaderboard. From this, we have that there are ``few'' points with far margin. Using this with the upper bound on advantage from the Advantage Lemma then allows us to bound the contribution of these far margin points to $\mimp_j(t)$.

To prove \Cref{lem:filtering-real}, we will use the following key claim, which is roughly analogous to \Cref{eq:cheesepuff} in the proof of \Cref{lem:filtering-simple}.

    \begin{claim}
    \label{claim:expectation-lb}
    Fix any $t$, and suppose that $\bg^{(t)}$ is drawn conditioned on $t'$ being good for all $t' \leq t$ and $\bg^{(t')}$ being a fine filter for all $t' < t$.  
    Fix an outcome $S_t$ of $\bS_t$, and
    define the random variable $\bX$ to be
    \begin{equation} \label{eq:X}
    \bX := |\bS_{t+1}| -\overbrace{e^{\sqrt{n} \beta_{r_t}^2 \log^{\qual_{r_t}(t)-1}(n)/2.5} \cdot \imp_{r_t}(t) |\bS_{t+1}|}^{:=\bU}  - \sum_{j=1}^{r_t-1} \overbrace{\frac{\imp_j(t) |\bS_{t+1}|}{\mimp_j(t-1) \cdot e^{O(\sqrt{n} \beta_{r_t}^2 \log^{\qual_j(t)+1}(n))}}}^{:=\bV_j} .
    \end{equation}
    Then
    \begin{align*}
        \Ex_{\bg^{(t)}} \sbra{ \bX}
        \geq |S_t| e^{-O(\sqrt{n} \beta_{r_t}^2)}.
    \end{align*}
    \end{claim}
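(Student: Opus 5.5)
Throughout, write $r:=r_t$, $q:=\qual_r(t)$, $w:=w^{(\ind(t))}(t)$, $\beta:=\beta_r$, $\alpha:=\alpha_r=10\beta$, and let $\tbg$ be $\bg^{(t)}$ conditioned on being $\alpha$-lucky for $w$. By \Cref{def:good} this is exactly the conditioned distribution in the claim. The plan is to write every term of $\bX$ as a sum over individual points of $S_t$ of $\Pr[x\in R^{\beta}_{s_t}(\tbg)]$, and then use linearity of expectation. Up to flipping $w$, we may take $s_t=+$.

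First consider the counts. $\imp_j(t)|\bS_{t+1}|$ is the number of points of $\bS_{t+1}$ on the wrong side of $w^{(\ind(\bu_j(t)))}$. Since Forsterization only discards points and never changes sides (\Cref{remark:Forster}), this number is at most the number of such points of $S_t$ that lie in $R$. In the other direction, $|\bS_{t+1}|\ge |S_t\cap R|\cdot \frac1n$ by \Cref{thm:forster}(i). Hence $\E[\bX]$ is at least
\[
\frac{1}{n}\sum_{x\in S_t}\Pr[x\in R]\;-\;\sum_{x\text{ bad for }r} c_U\Pr[x\in R]\;-\;\sum_{j<r}\sum_{x\text{ bad for }j} c_j\Pr[x\in R],
\]
where $c_U$ and $c_j$ denote the constant multipliers appearing in $\bU$ and $\bV_j$. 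I will normalize every probability by $P_0:=\Pr[\xref\in R]=\Theta(\frac{1}{\beta\sqrt n}e^{-n\beta^2/2})$ from \Cref{eq:beet}, so that each $\Pr[x\in R]$ becomes an advantage. The aim is to show that the main (first) term is at least $M$, where $M$ dominates the claimed bound, and that each subtracted sum is at most $M/(4k)$.

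\emph{Main term.} By \Cref{def:good}(iii) and the definition of $q$, at least a $p/2$ fraction of $S_t$ has $w\cdot x\ge \tau_q:=\log^q(n)/(2\sqrt n)$, where $p=\frac{1}{4n}e^{-n\beta^2\tau_q/\log n}$. (Here I need the quality to be certified on $S_t$ itself; this holds because $\bu_r(t)=t$.) \Cref{lem:advantage} gives each such point advantage $e^{\Theta(n\alpha\beta\tau_q)}=e^{\Theta(\sqrt n\beta^2\log^q n)}$. So $M\ge |S_t|\,P_0\,\frac{p}{2n}\,e^{c\sqrt n\beta^2\log^q n}$. In this bound the loss from $p$ is only $e^{-\sqrt n\beta^2\log^{q-1}(n)/2}$, so the positive exponent dominates. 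It remains to multiply back by the (separately lower bounded) luckiness normalization to recover the form $|S_t|e^{-O(\cdot)}$.

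\emph{The $\bU$ term.} Points that are bad for $r$ satisfy $w\cdot x<0$. By \Cref{lem:monotonicity} (applied after comparing with a point of zero margin), each such point has $\Pr[x\in R]\le P_0$, and there are at most $|S_t|$ of them. The multiplier is $e^{\sqrt n\beta^2\log^{q-1}(n)/2.5}$, which is $e^{o(\sqrt n\beta^2\log^q n)}$ and is therefore beaten by the advantage $e^{c\sqrt n\beta^2\log^q n}$ in $M$, even after paying the factor $p$. This is the same calculation as \Cref{eq:cheesepuff}, now using the quantitative \Cref{lem:advantage}.

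\emph{The $\bV_j$ terms, $j<r$ (the main obstacle).} There are at most $\imp_j(t-1)|S_t|\le \mimp_j(t-1)|S_t|$ points that are bad for $j$. I will bucket them by their margin $m=w\cdot x$. For the bucket $m\le \log^{\qual_j(t)+1}(n)/(2\sqrt n)$, the upper bound of \Cref{lem:advantage} (together with monotonicity for $m\le 0$) gives advantage $e^{O(\sqrt n\beta^2\log^{\qual_j(t)+1}n)}$. This factor is exactly cancelled by the denominator of $\bV_j$, so the contribution is at most $|S_t|P_0$, and that is again dominated by $M$ as in the $\bU$ case.

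For the buckets with larger margin, $\tau_{q'}$ with $q'\ge \qual_j(t)+1$, I will use that $w^{(\ind(t))}$ was not selected in case (i) of \Cref{def:good} for slot $j<r$. By the minimality of $j$ there, $\qual_j(w^{(\ind(t))},t)\le \qual_j(t)$. Consequently, fewer than $\frac{1}{4n}e^{-n\beta_j^2\tau_{q'}/\log n}|\bS_{\bu_j(t)}|$ points of $\bS_{\bu_j(t)}$ have margin at least $\tau_{q'}$. These points are a superset of the relevant ones in $S_t$, because margins are inherited through the Forster steps up to rescaling; I will need to check this carefully. Against this count, the advantage is at most $e^{O(\sqrt n\beta^2\log^{q'+1}n)}$. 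The shrinkage $|\bS_{\bu_j(t)}|/|S_t|\le e^{O(\sum n\beta_{r_{t'}}^2)}$ comes from fine-filter property (3) together with \Cref{lem:num-updates}, and it involves only $\beta_{j'}$ with $j'>j$.

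The crux is that $\beta_j^2/\beta_r^2\ge\log^{10}(n)$, which should make $e^{-\sqrt n\beta_j^2\log^{q'-1}(n)/2}$ overwhelm both the advantage $e^{O(\sqrt n\beta^2\log^{q'+1} n)}$ and the shrinkage factor. If this works, the total contribution is at most $e^{-\sqrt n\beta_j^2\log^{\qual_j(t)}(n)/2}|S_t|P_0$. Dividing by $\mimp_j(t-1)\ge e^{-\sqrt n\beta_j^2\log^{\qual_j(t)-1}(n)/3}$ then leaves something negligible relative to $M$. Making the exponents line up here is the step I expect to be delicate; the margin-inheritance question is a secondary point to verify.

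Summing the three estimates gives $\E[\bX]\ge M/2$, which is the claim.
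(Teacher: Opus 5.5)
Your treatment of the main term, of $\bU$ (via \Cref{lem:monotonicity}), and of the low-margin bucket of each $\bV_j$ matches the paper's proof. The high-margin buckets of $\bV_j$, however, contain a genuine gap. You extract the non-displacement bound on $\bS_{\bu_j(t)}$ and then try to transfer it to $S_t$, and that transfer cannot work.

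First, your claim that the shrinkage involves only $\beta_{j'}$ with $j'>j$ is wrong. The step at time $\bu_j(t)$ itself has $r_{\bu_j(t)}=j$, so fine-filter property (3) only gives $|\bS_{\bu_j(t)}|/|S_t|\le e^{O(n\beta_j^2)+\cdots}$. This is essentially tight, since in expectation the region $\{x:\bg\cdot x\ge\beta_j\}$ retains only about an $e^{-n\beta_j^2/2}$ fraction of low-margin points.

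Second, and more fundamentally, the count bound you want to exploit has exponent $\sqrt n\beta_j^2\log^{q'-1}(n)/2$. Since margins are at most $1$, we have $\log^{q'}(n)\le 2\sqrt n$, i.e.\ $\sqrt n\log^{q'-1}(n)\le 2n/\log n$. So this exponent is $o(n\beta_j^2)$. For small $q'$ it is also far below the exponent $n\beta_{j'}^2=n\beta_j^2/\log^{10(j'-j)}(n)$ paid at each later step. The $\log^{10}(n)$ ratio between consecutive $\beta$'s cannot make up the factor of order $\sqrt n$ separating $n\beta^2$ from $\sqrt n\beta^2\log^{q'-1}(n)$. After multiplying by $|\bS_{\bu_j(t)}|/|S_t|$, the resulting fraction bound on $S_t$ is vacuous, so your ``crux'' fails.

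Third, the margin-inheritance step is also false in general. A Forster step $x\mapsto Ax/\|Ax\|$ rescales $w\cdot x$ by a factor proportional to $1/\|Ax\|$, which varies from point to point. So a large margin in $S_t$ need not come from a large margin in $\bS_{\bu_j(t)}$.

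The paper avoids both issues by applying the non-displacement condition directly to the data in play at time $t$. For $j<r_t$ and every $i>\qual_j(t)$, it uses that fewer than $\frac{1}{4n}e^{-\sqrt n\beta_j^2\log^{i-1}(n)/2}|S_t|$ points of $S_t$ have margin exceeding $\log^i(n)/(2\sqrt n)$ with respect to $w^{(\ind(t))}(t)$. In other words, the candidate's $j$-quality in \Cref{def:good}(i) is compared on $\bS_t$, just as you noted that the $r_t$-quality is certified on $S_t$ because $\bu_{r_t}(t)=t$. Then everything lives at time $t$, in the same coordinates in which \Cref{lem:advantage} is applied to $\bg^{(t)}$. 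No inheritance or shrinkage bound is needed, and the rest of your computation goes through unchanged.

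A smaller point: there is no luckiness normalization to multiply back in the main term, because the expectation is already over the conditioned $\tbg$. The estimate $P_0=e^{-\Theta(n\beta_{r_t}^2)}$ then gives $\E[\bX]\ge|S_t|e^{-O(n\beta_{r_t}^2)}$, which is the bound the paper's proof actually ends with.
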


\begin{proofof}{\Cref{lem:filtering-real} assuming \Cref{claim:expectation-lb}}
Fix any outcome $S_t$ of $\bS_t$, and observe that the random variable $\bX$ can never exceed $|S_t|$ (since $\bS_{t+1} \subseteq S_t$).  Hence by \Cref{lem:reverse-Markov} (``reverse Markov''), with probability at least $e^{-O(\sqrt{n} \beta_{r_t}^2)}$ we have that
    \begin{equation} 
    \label{eq:japanese-mayonnaise}\bX \geq e^{-O(\sqrt{n} \beta_{r_t}^2)}|S_t| . \end{equation}
Since moreover $\bX \leq |\bS_{t+1}|$, this immediately implies that 
 \begin{equation}
     \label{eq:fujiapple}
 |\bS_{t+1}| \geq e^{-O(\sqrt{n} \beta_{r_t}^2)} |S_t|
 \end{equation}
 (note that this corresponds to item~(3) of \Cref{def:finefilter}).
Using \Cref{eq:X} and \Cref{eq:fujiapple}, we can rewrite \Cref{eq:japanese-mayonnaise} as 
\[
2|\bS_{t+1}| \geq |\bS_{t+1}| + e^{-O(\sqrt{n}\beta_{r_t}^2)} |S_t|  \geq  \bU + \bV,
\text{~and hence~}2|\bS_{t+1}| \geq \bU,\bV,
\]
where $\bV := \sum_{j=1}^{r_t-1} \bV_j.$
The inequality $2|\bS_{t+1}| \geq \bU$ can be rewritten as
\[
\imp_{r_t}(t) \leq 2e^{-\sqrt{n} \beta_{r_t}^2 \log^{\qual_{r_t}(t)-1}(n)/2.5}
< e^{-\sqrt{n} \beta_{r_t}^2 \log^{\qual_{r_t}(t)-1}(n)/3}
\]
(corresponding to item~(1) of \Cref{def:finefilter}),
and the inequality $2|\bS_{t+1}| \geq \bV$ implies that for each $j < r_t,$ we have
\[ \imp_j(t) \leq \mimp_j(t-1) e^{O(\sqrt{n} \beta_{r_t}^2 \log^{\qual_j(t)+1}(n))}.\]
Note now that $\qual_j(t) = \qual_j(t-1)$ as $j < r_t$. Thus it follows that \[e^{-\sqrt{n} \beta_j^2 \log^{\qual_j(t) - 1}(n)/3} = e^{-\sqrt{n} \beta_j^2 \log^{\qual_j(t-1) - 1}(n)/3} \leq \mimp_j(t-1)\]
 by the definition of $\mimp_j(t-1)$.  Combining this with the above upper bound on $\imp_{r_t}(t)$ then yields item~(2) of \Cref{def:finefilter}.
Thus we have indeed established that $\bg^{(t)}$ is a fine filter, as desired, for that outcome of $S_t$.  Since this holds for every outcome $S_t$ of $\bS_t$, the proof of \Cref{lem:filtering-real} is complete.
\end{proofof}

\begin{proof}[Proof of \Cref{claim:expectation-lb}]
Fix a $j < r_t$; the idea is to ``bucket'' the points in $S_t$ that are on the wrong side of the $w^{(\ind(\bu_j(t)))}$ halfspace by their margin. To do this, we define 
    \[ A^j := \left \{x \in S_{t}: \sgn(w^{(\ind(\bu_j(t)))} \cdot x) \not = \bs_{\bu_j(t)} \land  \left| w^{(\ind(\bu_j(t)))} \cdot x \right| \leq \frac{\log^{\qual_j(t)+1}(n)}{2\sqrt{n}} \right \} \]
(intuitively, these are the ``low-margin'' points on the wrong side of the $w^{(\ind(\bu_j(t)))}$ halfspace in $S_t$). Additionally, for $i > \qual_j(t)$, we define a collection of mutually disjoint subsets
    \[B_i^j := \left \{x \in S_{t}: \sgn(w^{(\ind(\bu_j(t)))} \cdot x) \not = \bs_{\bu_j(t)} \land  \left| w^{(\ind(\bu_j(t)))} \cdot x \right| \in \left( \frac{\log^i(n)}{2\sqrt{n}}, \frac{\log^{i+1}(n)}{2\sqrt{n}} \right] \right \}, \]
    based on how large the margin is for the ``high margin'' points on the wrong side  of the $w^{(\ind(\bu_j(t)))}$ halfspace in $S_t$.
    (We remark that these sets are akin to the $T_\lambda$ sets from the proof overview given in \Cref{sec:learnfunc}.)
Notice that the set of points in $S_t$ on the wrong side of the $w^{(\ind(\bu_j(t)))}$ halfspace is exactly $A^j \sqcup \pbra{\bigsqcup_{i=\qual_j(t)+1}^{\log n} B_i^j}$.

Let $\xref$ denote any point such that $w^{(\ind(\bu_j(t)))} \cdot \xref = 0$ (we remark that here we are considering $\xref$ only for the purposes of analysis; it is not an actual point in the $S_t$).  We can now compute that 
    \begin{align}
        \Ex_{\bg^{(t)}} [ |A^j \cap \bS_{t+1}|] 
        &=\sum_{x \in A^j} \Pr_{\bg^{(t)}} \left[ x \in \bS_{t+1} \right] \nonumber\\
        &\leq \sum_{x \in A^j} \Pr_{\bg^{(t)}} \left[ x \in R_{\bs_t}^{\beta_{r_t}}(\bg^{(t)}) \right] \tag{recalling Line~\ref{line:forsterize-again}} \\
        &\leq \sum_{x \in A^j} e^{O(\sqrt{n} \beta_{r_t}^2 \log^{\qual_j(t)+1}(n))} \Pr \left[\xref \in R_{\bs_t}^{\beta_{r_t}}(\bg^{(t)}) \right] \nonumber \\
        &\leq \imp_j(t-1) \cdot |S_t| \cdot e^{O(\sqrt{n} \beta_{r_t}^2 \log^{\qual_j(t)+1}(n))} \Pr \left[\xref \in R_{\bs_t}^{\beta_{r_t}}(\bg^{(t)}) \right] \label{eq:water}
    \end{align}
    where the second inequality follows from the Advantage Lemma (\Cref{lem:advantage}) and the definition of $A^j$, and the third inequality is because $\imp_j(t-1) \cdot |S_t|$ is an upper bound on the size of $A^j$, recalling the definition of $\imp_j(t-1)$.

    For any $i > \qual_j(t)$, we can similarly compute 
    \begin{align}
        \Ex_{\bg^{(t)}} [ |B_i^j \cap \bS_{t+1}| ] &\leq \sum_{x \in B_i^j}  \Pr_{\bg^{(t)}} \left[ x \in R_{\bs_t}^{\beta_{r_t}}(\bg^{(t)}) \right] \nonumber \\
        &\leq \sum_{x \in B_i^j} e^{O(\sqrt{n} \beta_{r_t}^2 \log^{i+1}(n))} \Pr_{\bg^{(t)}} \left[ \xref \in R_{\bs_t}^{\beta_{r_t}}(\bg^{(t)}) \right] \nonumber
    \intertext{where the second inequality is by the definition of $B^j_i$ and the Advantage Lemma (\Cref{lem:advantage}). 
    Now, we observe that since time step $t$ is good, by \Cref{def:good} the halfspace $w^{(\ind(t))}$ is placed into spot $r_t$ on the leaderboard; since  $j<r_t$ (by the assumption at the start of the proof), it must be the case that
    $\qual_j(w^{(\ind(t))}, t) \leq \qual_j(t) < i$. Recalling the definition $\qual_j(w^{(\ind(t))}, t) $ (\Cref{def:quality}) and the definition of $B_i^j$, it follows that $|B_i^j| < \frac{1}{4n} e^{-\sqrt{n} \beta_j^2 \log^{i-1}(n) / 2 } \cdot |S_t|$, using which we can then bound the above by}
    & \leq \frac{1}{4n} e^{-\sqrt{n} \beta_j^2 \log^{i-1}(n) / 2 } \cdot |S_t|  e^{O(\sqrt{n} \beta_{r_t}^2 \log^{i+1}(n))} \Pr_{\bg^{(t)}} \left[ \xref \in R_{\bs_t}^{\beta_{r_t}}(\bg^{(t)}) \right] \nonumber \\
    & \leq e^{-\sqrt{n} \beta_j^2 \log^{i-1}(n) / 2.2 } |S_t|  \Pr_{\bg^{(t)}} \left[ \xref \in R_{\bs_t}^{\beta_{r_t}}(\bg^{(t)}) \right] \label{eq:wbter}
    \end{align}
where we used the fact that $\beta_j \geq \log^5(n) \beta_{r_t}$ since $j < r_t$ (recalling the definition of the $\beta_j$ parameters).

We now have that
    \begin{align*}
    \Ex[\bV_j] &= 
        \Ex_{\bg^{(t)}} \Bigg[ \frac{\imp_j(t) |\bS_{t+1}|}{\mimp_j(t-1) \cdot e^{O(\sqrt{n} \beta_{r_t}^2 \log^{\qual_j(t)+1}(n))}}  \Bigg] \tag{definition of $\bV_j$} \\
        &= \frac{1}{\mimp_j(t-1) \cdot e^{O(\sqrt{n} \beta_{r_t}^2 \log^{\qual_j(t)+1}(n))}}  \cdot \Ex_{\bg^{(t)}} \Bigg[  \left|A^j \cap \bS_{t+1} \right| + \sum_{i > \qual_j(t)} \left|B_i^j \cap \bS_{t+1} \right| \Bigg] \tag{definition of $A^j$ and $B^j_i$}\\
        &\leq \frac{\Pr \left[\xref \in R_{\bs_t}^{\beta_{r_t}}(\bg^{(t)}) \right] |S_t|}{\mimp_j(t-1) \cdot e^{O(\sqrt{n} \beta_{r_t}^2 \log^{\qual_j(t)+1}(n))}} \cdot\\
        & \text{~~~} \left ( \imp_j(t-1) \cdot e^{O(\sqrt{n} \beta_{r_t}^2 \log^{\qual_j(t)+1}(n))} + \sum_{i > \qual_j(t)} e^{-\sqrt{n} \beta_j^2 \log^{i-1}(n) / 2.2 } \right) 
        \tag{\Cref{eq:water} and \Cref{eq:wbter}}\\
        &\leq \frac{\Pr \left[\xref \in R_{\bs_t}^{\beta_{r_t}}(\bg^{(t)}) \right] |S_t|}{\mimp_j(t-1) \cdot e^{O(\sqrt{n} \beta_{r_t}^2 \log^{\qual_j(t)+1}(n))}} \cdot 
        \\
        & \text{~~~}\left ( \imp_j(t-1) \cdot e^{O(\sqrt{n} \beta_{r_t}^2 \log^{\qual_j(t)+1}(n))} + e^{-\sqrt{n} \beta_j^2 \log^{\qual_j(t)}(n) / 2.3 } \right) \tag{decaying series bound}\\
        &\leq \frac{\Pr \left[\xref \in R_{\bs_t}^{\beta_{r_t}}(\bg^{(t)}) \right] |S_t|}{2k \cdot \mimp_j(t-1)} \cdot 
        \left ( \imp_j(t-1) + e^{-\sqrt{n} \beta_j^2 \log^{\qual_j(t)}(n) / 2.4 } \right) \tag{dividing numerator and denominator by $e^{O(\sqrt{n} \beta_{r_t}^2 \log^{\qual_j(t)+1}(n))}$}\\
        &\leq \frac{\Pr \left[\xref \in R_{\bs_t}^{\beta_{r_t}}(\bg^{(t)}) \right] |S_t|}{k},
        \tag{definition of $\mimp$ in \Cref{def:finefilter}}
    \end{align*}
where we now reveal to the reader that in the definition of $\bV_j$ (recall \Cref{eq:X}), the constant hidden by the big-Oh notation is chosen so as to ensure that the $e^{O(\sqrt{n} \beta_{r_t}^2 \log^{\qual_j(t)+1}(n))}$ quantity we divided by to obtain the penultimate inequality cancels out the $e^{O(\sqrt{n} \beta_{r_t}^2 \log^{\qual_j(t)+1}(n))}$ multiplier of $\imp_j(t-1)$ in the previous line with a factor of $2k$ to spare.

We now turn to analyzing the quantity $\bU$.
To do this, we observe that by monotonicity of advantage (\Cref{lem:monotonicity}) we have that
\begin{align*}
    \E_{\bg^{(t)}} \left[ \imp_{r_t}(t) \cdot |\bS_{t+1}| \right] = \sum_{x \in S_t: w^{(\ind(t))} \cdot x \not = \bs_t} \Pr_{\bg^{(t)}} \left[ x \in R_{\bs_t}^{\beta_{r_t}}(\bg^{(t)}) \right] 
    \leq |S_t| \Pr_{\bg^{(t)}} \left[ \xref \in R_{\bs_t}^{\beta_{r_t}}(\bg^{(t)}) \right],
\end{align*}
so recalling the definition of $\bU$ from \Cref{eq:X}, we have
\[
\E[\bU] = 
e^{\sqrt{n} \beta_{r_t}^2 \log^{\qual_{r_t}(t)-1}(n)/2.5} \cdot  \E[\imp_{r_t}(t) |\bS_{t+1}|]
\leq
e^{\sqrt{n} \beta_{r_t}^2 \log^{\qual_{r_t}(t)-1}(n)/2.5} \cdot
|S_t| \Pr_{\bg^{(t)}} \left[ \xref \in R_{\bs_t}^{\beta_{r_t}}(\bg^{(t)}) \right].
\]

Finally we turn to the first term in \Cref{eq:X}, namely $|\bS_{t+1}|$: we compute
    \begin{align}
    \E[|\bS_{t+1}|] &\geq \frac{1}{n} \E_{\bg^{(t)}} \left[ \left |S_t \cap R_{\bs_t}^{\beta_{r_t}(\bg^{(t)})} \right| \right ] \nonumber\\
    & \geq \frac{1}{n} \cdot \frac{1}{4n} e^{-\sqrt{n} \beta_{r_t}^2 \log^{\qual_{r_t}(t) - 1}(n)/2} |S_t| \cdot e^{\Omega(\sqrt{n} \beta_{r_t}^2 \log^{\qual_{r_t}(t)}(n))} \Pr \left[ \xref \in  R_{\bs_t}^{\beta_{r_t}}(\bg^{(t)}) \right]\nonumber \\
    &\geq e^{\Omega(\sqrt{n} \beta_{r_t}^2 \log^{\qual_{r_t}(t)} (n))} |S_t| \Pr \left[ \xref \in  R_{\bs_t}^{\beta_{r_t}}(\bg^{(t)}) \right], \label{eq:climax} 
    \end{align}
where the first inequality used that the Forster transform can discard at most $\frac{n-1}{n}$ of the points (by \Cref{thm:forster}). For the second inequality, we note by the definition of quality that there is at least a $p = \frac{1}{4n} \exp(-n \beta_{r_t}^2 \tau/\log(n))$ fraction of points with margin $\tau = \frac{\log^{\qual_{r_t}(t)}(n)}{2 \sqrt{n}}$ with respect to $w^{(\ind(t))}$ in $\bS_t$. Combining this with the lower bound from the Advantage Lemma (\Cref{lem:advantage}) then yields this second inequality.

We can now put together \Cref{eq:climax} with our upper bounds on $\E[\bV]$ and $\E[\bU]$ to get that $\E[\bX]$ (recall that this equals $\E[|\bS_{t+1}| - \bU - \bV]$) is equal to
\begin{align*}
        \Ex_{\bg^{(t)}} \Bigg [ &|\bS_{t+1}| -e^{\sqrt{n} \beta_{r_t}^2 \log^{\qual_{r_t}(t)-1}(n)/2.5} \cdot \imp_{r_t}(t) |\bS_{t+1}|  - \sum_{j=1}^{r_t-1} \frac{\imp_j(t) |\bS_{t+1}|}{\mimp_j(t) \cdot e^{O(\sqrt{n} \beta_{r_t}^2 \log^{\qual_j(t)+1}(n))}} \Bigg] \\
        &\geq \Pr \left[ \xref \in R_{\bs_t}^{\beta_{r_t}}(\bg^{(t)}) \right] |S_t| \left(\overbrace{e^{\Omega(\sqrt{n} \beta_{r_t}^2 \log^{\qual_{r_t}(t)} (n))}}^{\text{from~}|\bS_{t+1}|} - \overbrace{e^{\sqrt{n} \beta_{r_t}^2 \log^{\qual_{r_t}(t)-1}(n)/2.5}}^{\text{from~}\bU} - \sum_{j=1}^{r_t-1} \overbrace{{\frac 1 k}}^{\text{from~}\bV_j} \right) \\
        &\geq \Pr \left[ \xref \in R_{\bs_t}^{\beta_{r_t}}(\bg^{(t)}) \right] |S_t| \\
        &\geq e^{-O(n \beta_{r_t}^2)} |S_t|,
    \end{align*}
where the final inequality used \Cref{lem:gauss-lb}.
This completes the proof of \Cref{claim:expectation-lb}.
\end{proof}

\subsection{From weak learning over a sample to strong PAC learning} \label{sec:strong-PAC-any-of-k}

\Cref{lem:ltf-func-loop} gives us 
that with probability at least $2^{-\sqrt{n} \log^{O(k)}(n)}$, each execution of the entire inner Line \ref{line:ltf-func-inner-loop} loop of \Cref{alg:ltf-func} outputs a hypothesis $h_{b_1,b_2,b_3}$ that correctly classifies ${\frac 1 2} + 2^{-\sqrt{n} \log^{O(k)}(n)}$ fraction of the examples in $S$.
This is entirely analogous to \Cref{claim:goal} of the warm-up result given in \Cref{sec:warmup}, and an argument entirely similar to the proof of \Cref{lem:weak-learn-over-sample} given \Cref{claim:goal} gives us the following analogue of \Cref{lem:weak-learn-over-sample}, saying that we can weak learn over a fixed sample of data points:

\begin{lemma} [Achieving non-trivial accuracy on a fixed sample] \label{lem:weak-learn-over-sample-functions}
Suppose that the input data set $S$ for \learnfunc\ is a sequence of $|S| = 2^{\Omega(\sqrt{n}\log^{\Omega(k)} (n))}$ examples that are labeled according to some 
function of $k$ halfspaces $f = g(\sgn(w^{(1)} \cdot x),\dots,\sgn(w^{(k)} \cdot x))$.  Then with probability at least $19/20$, \learnfunc\ outputs a hypothesis $h: \R^n \to \bits$ that correctly classifies at least
${\frac 1 2} + \gamma$  fraction of the examples in $S$, where $\gamma := 2^{-O(\sqrt{n} \log^{O(k)}(n))}$.
Moreover, the hypothesis class ${\cal H}$ of all hypotheses that can be generated by \learnfunc\ has VC dimension at most $\poly(n,\log(n)^k)$.
\end{lemma}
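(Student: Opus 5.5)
The proof follows the template of \Cref{lem:weak-learn-over-sample}, with \Cref{lem:ltf-func-loop} playing the role of \Cref{claim:goal}. There are two parts: a success-probability statement and a VC-dimension bound.

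\textbf{Success probability.} By \Cref{lem:ltf-func-loop}, each execution of the outer Line~\ref{line:outer-loop} loop independently succeeds with probability at least $q := 2^{-\sqrt{n}\log^{O(k)}(n)}$. Success means that the inner loop outputs some $h_{b_1,b_2,b_3}$ agreeing with at least a $\frac12 + 2^{-\sqrt{n}\log^{O(k)}(n)}$ fraction of $S$. The fresh randomness in each outer iteration comprises the Forsterization in Line~2(a) and the choices of $\bg^{(t)}$, $\br_t$, $\bs_t$. We choose the exponent in the number of outer repetitions, $2^{\sqrt{n}\log^{O(k)}(n)}$, to be at least $(\ln 20)/q$. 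The probability that every repetition fails is then at most $(1-q)^{(\ln 20)/q} \le 1/20$.

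Whenever the check in Line 2(c)(i)(B) passes, the output hypothesis has the required accuracy by construction. So with probability at least $19/20$ the algorithm outputs a hypothesis with advantage $\gamma = 2^{-O(\sqrt{n}\log^{O(k)}(n))}$. We should also account for the $\poly$-small failure probability of \forster\ on each call. There are at most $\log^k(n)$ calls per outer iteration, so a union bound (or simply treating a \forster\ failure as a failed iteration) absorbs this. The sample-size requirement $|S| = 2^{\Omega(\sqrt{n}\log^{\Omega(k)}(n))}$ is used only to ensure that the sets $\bS_t$ remain nonempty and large enough for \Cref{lem:ltf-func-loop} to apply, that is, for $|\bS_T| \ge 2^{-\sqrt{n}\log^{O(k)}(n)}|S| \ge 1$.

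\textbf{VC dimension.} Every output hypothesis has the following form. If $x\notin V_T$, output $b_1$. Otherwise, if some $t'<T$ has $A_{t'}x/\|A_{t'}x\| \notin R^{\beta}_{s}(g^{(t')})$, output $b_2$. Otherwise, output $b_3$. Here $T \le \log^k(n)$.

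Each membership condition $g\cdot (A_{t'}x)/\|A_{t'}x\| \ge s\beta$ can be rewritten, as in the warm-up, as a Boolean combination of a degree-2 PTF and a halfspace: $\sgn\big((A_{t'}^Tg\cdot x)^2 - \beta^2\|A_{t'}x\|^2\big)$ together with the sign of $A_{t'}^T g\cdot x$. The set $V_T$ is a linear subspace, whose indicator class has VC dimension at most $n$.

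Hence every hypothesis is $G(f_1,\dots,f_m)$ with $m = O(\log^k(n))$. Here $G:\bits^m\to\bits$ is a fixed Boolean function, namely an AND/OR structure together with the three bits. Each $f_i$ is drawn from a class of VC dimension $O(n^2)$: subspace indicators, degree-2 PTFs, or halfspaces. The standard composition bound (proof of Theorem~3.6 of \cite{KearnsVazirani:94}) gives VC dimension $O(m\cdot n^2 \log m)$, plus the $O(1)$ bits for the $b_i$. This is $\poly(n,\log(n)^k)$.

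The main point requiring care is that $T$ can vary with the execution. We handle this by taking the union over $T\le \log^k(n)$ of the corresponding classes, which adds only $O(\log\log^k n)$ to the bound. The same treatment covers the fact that the $\beta$ values are drawn from the fixed finite set $\{\beta_j\}$.
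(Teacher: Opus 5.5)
Your proposal is correct and follows essentially the same route as the paper. For the success probability, you amplify the per-iteration guarantee of \Cref{lem:ltf-func-loop} over the $2^{\sqrt{n}\log^{O(k)}(n)}$ outer repetitions. For the VC dimension, you write $h_{b_1,b_2,b_3}$ as a Boolean combination of a subspace indicator and $O(\log^k(n))$ degree-2 PTFs and halfspaces, and apply the composition bound from the proof of Theorem~3.6 of \cite{KearnsVazirani:94}. The paper's proof is only a few lines and leaves several points implicit: failures of \forster, the fact that regions with $s_t=-$ are a Boolean combination rather than an intersection, and the variable stopping time $T$. Your treatment of these fills in that bookkeeping without changing the argument.
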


\begin{proof}
Since we repeat the inner loop in \learnfunc\ $2^{O(\sqrt{n} \log^{O(k)})}$ times, \Cref{lem:ltf-func-loop} implies that we output a hypothesis $h$ with advantage at least $\gamma$ with high probability. Moreover, note that the hypothesis output by \learnfunc\ is of the form $h_{b_1, b_2, b_3}$. In turn, we can write this as $g(f_0, \bigwedge_{i=1}^{\log^k(n)} f_i)$, where $f_0$ is an indicator of a subspace and each $f_i$ is an intersection of a degree $2$ PTF and a halfspace (as in the proof of \Cref{lem:weak-learn-over-sample}), corresponding to whether points lie in the region $R_{s_{t}}^{\beta_{r_t}}(g^{(t)})$. Note now that the VC dimension of the set of all linear subspaces is $n$ and the VC dimension of degree $2$ PTFs is at most $O(n^2)$. Thus, it follows directly from well-known techniques (see the proof of~Theorem~3.6 of \cite{KearnsVazirani:94}) that the VC dimension of $h$ is at most $\poly(n,\log^k(n))$, as desired.
\end{proof}

Given this, an argument entirely analogous to that given in \Cref{ap:proof-theorem-2} for the warm-up result gives a weak PAC learning algorithm, and a standard application of accuracy boosting as in \Cref{ap:proof-theorem-2} yields a strong PAC learner for the class of arbitrary functions of $k$ halfspaces.  This completes the proof of \Cref{thm:main}. 

\section*{Acknowledgements}
Josh Alman is supported in part by NSF Grant CCF-2238221 and a Packard Fellowship. 
Shyamal Patel is supported by NSF grants CCF-2106429, CCF-2107187, CCF-2218677, ONR grant ONR-13533312, and an NSF Graduate Student Fellowship. Rocco Servedio is supported by NSF grants CCF-2106429 and CCF-2211238.

\begin{flushleft}
\bibliographystyle{alpha}
\bibliography{allrefs}
\end{flushleft}

\appendix


\section{Proof of \Cref{thm:2} using \Cref{lem:weak-learn-over-sample}}
\label{ap:proof-theorem-2}

\subsection{Background on weak learning, strong learning, and boosting} \label{sec:PAC}

We recall standard definitions from computational learning theory and standard results from the theory of hypothesis boosting.

\medskip

\noindent {\bf (Distribution-free) PAC learning.}
Given a target Boolean function $f: \R^n \to \bits$,  a hypothesis $h: \R^n \to \bits$ and a distribution $\calD$ over $\R^n$, we say that $h$ is an {\em $\epsilon$-approximator for $f$ under ${\cal D}$}
if $\Pr_{{\calD}}[f(x) = h(x) ] \geq 1 - \epsilon.$  
A (strong) \emph{PAC learning algorithm} for a class ${\cal C}$ of Boolean-valued functions has access to an {\em example oracle} $EX(c, {\cal D})$ which, when
invoked, provides a labeled example $(\bx, f(\bx))$ where
$\bx$ is drawn from a fixed, but unknown and arbitrary, distribution $\calD$ and $f \in {\cal C}$ is the unknown \emph{target function} that the algorithm is
trying to learn. 
An algorithm $L$ is a {\em PAC learning algorithm for ${\cal C}$} if the following condition holds: for any unknown $f \in {\cal C},$ any unknown distribution ${\cal D}$,and any
$0 < \epsilon, \delta < 1$, if $L$ is given
$\epsilon$ and $\delta$ and has access to $EX(c,{\cal D}),$ then with probability at least
$1-\delta$ algorithm $L$ outputs an
$\epsilon$-approximator for $f$ under ${\cal D}.$

\medskip

\noindent {\bf Weak learning.}
For $\gamma > 0$ (which may depend on $n$ and other parameters), we say that an algorithm $W$ is a {\em weak learning algorithm for ${\cal C}$
with advantage $\gamma$} if $W$ satisfies the following condition:
For any unknown target function $f \in {\cal C}$, for any unknown distribution ${\cal D},$ if $W$ is given access to $EX(f,{\cal D})$ then
with probability at least $9/10$, $W$
outputs a hypothesis $h$ such that
$\Pr_{{\cal D}}[f(x) = h(x)] \geq 1/2 + \gamma.$

\medskip

\noindent {\bf Accuracy boosting.}  Well known results in computational learning theory provide explicit \emph{boosting} algorithms which can be used to  automatically and efficiently upgrade any weak learning algorithm into a strong PAC learning algorithm.  The following theorem is established in \cite{Schapire:90,Freund:95}, and also follows from a host of other boosting algorithms in the literature:

\begin{theorem} [Accuracy boosting] \label{thm:boost}
Let ${\cal C}$ be a class of functions over $\R^n$ and let $W$ be a weak learning algorithm for ${\cal C}$ with advantage $\gamma$ which runs in time at most $T$ when given $EX(f,{\cal D})$ for any target function $f \in {\cal C}$ and any distribution ${\cal D}$.
Then there is a (strong) PAC learning algorithm $L$ for ${\cal C}$ (which is obtained by applying a boosting algorithm to $W$) with the following property: on input parameters $0<\eps,\delta<1$,
the running time of $L$ is polynomial in $1/\gamma,$
$1/\epsilon,$ $\log 1/\delta,$ and $T.$
\end{theorem}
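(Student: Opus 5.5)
The plan is to prove \Cref{thm:boost} by \emph{boosting by filtering}, in the style of Schapire's and Freund's boost-by-majority algorithms. The strong learner $L$ keeps a sequence of weak hypotheses $h_1,\dots,h_m$ with $m = O(\log(1/\eps)/\gamma^2)$. It outputs $H = \mathrm{MAJ}(h_1,\dots,h_m)$. At stage $i$ it calls $W$ on a reweighted distribution $\calD_i$ that puts more mass on examples where the current vote $h_1,\dots,h_{i-1}$ is wrong or has a small margin. The standard potential-function (boost-by-majority) analysis then says: if each $h_i$ has accuracy at least $\frac12+\gamma$ under its $\calD_i$, then $\Pr_{\calD}[H(x)\neq f(x)]\le \eps$. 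This is a purely combinatorial statement about weightings, proved the usual way. Since $\calD_i$ is obtained from $\calD$ by a weight function of the labeled example, each $\calD_i$ is still a distribution over $\R^n$ labeled by the same $f\in\calC$. So $W$'s guarantee applies to it.

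The steps, in order, are as follows.
\begin{enumerate}
\item Simulate $EX(f,\calD_i)$ from $EX(f,\calD)$ by rejection sampling. Draw $(x,f(x))$ and accept with probability $w_i(x)/\max w_i$, where $w_i(x)\in[0,1]$ is the boost-by-majority weight, computable from $h_1(x),\dots,h_{i-1}(x)$ and $f(x)$.
\item Amplify the confidence of $W$ from $9/10$ to $1-\delta/(2m)$. Run $W$ on $\calD_i$ about $O(\log(m/\delta))$ times. Estimate each candidate's accuracy on fresh samples from $\calD_i$ to within $\gamma/4$ by Chernoff bounds. Keep one with empirical advantage at least $3\gamma/4$, which has true advantage at least $\gamma/2$. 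This changes $m$ only by a constant factor.
\item Union bound over the $m$ stages.
\end{enumerate}
Each stage costs $\poly(T,1/\gamma,\log(m/\delta))$ times the rejection-sampling overhead. Hence the total running time is polynomial in $T$, $1/\gamma$, $1/\eps$ and $\log(1/\delta)$, as claimed.

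The main obstacle is step 1. When the total weight $\E_{\calD}[w_i]$ is tiny, rejection sampling needs too many draws. I would handle this as in Freund's filtering analysis. First estimate $\E_{\calD}[w_i]$ by sampling. If it is below roughly $\eps$ times a polynomial in $\gamma$, stop early and output the current majority vote. The boost-by-majority potential argument shows that the error of the current vote is bounded by a constant multiple of the remaining total weight (suitably normalized), so early stopping already gives an $\eps$-approximator. Otherwise, the expected number of draws per accepted example is $\poly(1/\eps,1/\gamma)$, and Markov plus repetition keeps the failure probability within the $\delta$ budget. Verifying this weight-versus-error relation for the specific boost-by-majority weights is the one genuinely technical part. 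Everything else is Chernoff bounds and bookkeeping.
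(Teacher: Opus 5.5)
Your overall route is the one behind the paper's citation: Freund's boost-by-majority in the filtering model, with rejection sampling, confidence amplification and a union bound. The paper itself gives no proof. The gap is in the step you flag as the technical core: stopping and outputting the current vote $\mathrm{MAJ}(h_1,\dots,h_i)$ once the total weight falls below $\epsilon\cdot\mathrm{poly}(\gamma)$. The relation you invoke, that the current vote's error is at most a constant multiple of the normalized weight, is false for boost-by-majority weights.

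An example with $r$ correct votes after $i$ rounds has weight $\alpha^i_r=\Pr[\mathrm{Bin}(k-i-1,\frac12+\gamma)=\lfloor k/2\rfloor-r]$. This is its probability of being pivotal at the final horizon $k\approx\ln(1/\epsilon)/(2\gamma^2)$, not at round $i$. Consider an example the current vote gets barely wrong ($r$ just below $i/2$) with $\Theta(k)$ rounds still to go. It is very likely to end up correct, so its normalized weight $\alpha^i_r/\max_{r'}\alpha^i_{r'}$ is only about $e^{-2\gamma^2(k-i)}$ once $i\gtrsim 2\gamma k$; near $i\approx 2\gamma k$ this is $\epsilon^{1-O(\gamma)}$. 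So the ratio of current error to normalized weight can be $\epsilon^{-\Theta(1)}$, not $O(1)$.

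An adversarial weak learner can exploit this. It keeps a set $H$ of mass $\mu$, with $\epsilon\ll\mu$ but small enough that $H$'s share of $\mathcal{D}_i$ stays $O(\gamma)$, exactly half-correct, while advancing everything else at rate about $\frac12+\gamma$. For small enough $\epsilon$ your test then fires within roughly $2\gamma k$ rounds. At that point the current vote still errs on all of $H$, i.e. on mass far above $\epsilon$.

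The standard repair (Freund's) is to skip the round rather than stop. If the filter times out, which happens essentially only when the unnormalized weight $\mathbb{E}_{\mathcal{D}}[\alpha^i_{r_i(x)}]$ is below about $\epsilon/(2k)$, let $h_{i+1}$ be arbitrary and continue. With $\beta^i_r=\Pr[r+\mathrm{Bin}(k-i,\frac12+\gamma)\le k/2]$ one has $\alpha^i_r=\beta^{i+1}_r-\beta^{i+1}_{r+1}$, and hence
\[
\mathbb{E}_{\mathcal{D}}\bigl[\beta^{i+1}_{r_{i+1}(x)}\bigr]-\mathbb{E}_{\mathcal{D}}\bigl[\beta^{i}_{r_{i}(x)}\bigr]=\mathbb{E}_{\mathcal{D}}\bigl[\alpha^i_{r_i(x)}\bigr]\Bigl(\tfrac12+\gamma-\Pr_{\mathcal{D}_i}[h_{i+1}(x)=f(x)]\Bigr).
\]
Consequences:
\begin{itemize}
\item Rounds where the weak learner succeeds do not increase the potential.
\item Each skipped round adds at most $\epsilon/(2k)$, so the final error is at most $\beta^0_0+\epsilon/2\le e^{-2\gamma^2k}+\epsilon/2\le\epsilon$.
\item Every non-skipped round has acceptance probability $\mathbb{E}_{\mathcal{D}}[\alpha^i]/\max_r\alpha^i_r\ge\epsilon/(2k)$, since $\alpha^i_r\le 1$. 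So simulating one example costs $O((k/\epsilon)\log(k/\delta))$ draws, which is still polynomial.
\end{itemize}
With this change, and with the weights defined using the amplified advantage $\gamma/2$, the rest of your plan goes through.
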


\subsection{A Weak Learning Algorithm for Intersections of Two Halfspaces} \label{sec:weak-learning}

With \Cref{lem:weak-learn-over-sample} in hand, which says that $\warmup$ constructs a ``simple'' yet nontrivial-accuracy hypothesis on any fixed sample $S$ labeled according to an intersection of two halfspaces, it is straightforward  obtain a distribution-free weak learning algorithm for the class of intersections of two halfspaces:

\begin{lemma} \label{lem:weak-learner}
There is an algorithm \warmupwl\
that is a weak learning algorithm for the class ${\cal C}$ of intersections of two halfspaces over $\R^n$ with advantage $\gamma' = 2^{-{O}(\sqrt{n}\log n)}.$  For any target intersection of two halfspaces and any distribution $\calD$ over $\R^n$, the running time of \warmupwl\ is $2^{\tilde{O}(\sqrt{n})}$.
\end{lemma}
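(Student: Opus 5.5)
We will obtain \warmupwl\ from \warmup\ by a standard sample-and-generalize argument. The algorithm draws a sample $S$ of $m$ labeled examples from $EX(f,\calD)$ and runs \warmup$(S)$. If the output is FAIL, it outputs an arbitrary hypothesis; otherwise it outputs the returned $h$. We will take $m = \poly(2^{\sqrt{n}\log n})$, chosen large enough that two conditions hold. First, $m \geq 2^{\Omega(\sqrt{n}\log n)}$, which is the size required by \Cref{lem:weak-learn-over-sample}. Second, $m \gg d/\gamma^2$, where $d = O(n^2)$ is the VC dimension of the hypothesis class ${\cal H}$ from \Cref{lem:weak-learn-over-sample} and $\gamma = 2^{-O(\sqrt{n}\ln n)}$ is the empirical advantage that lemma guarantees. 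The running time is $\poly(m) \cdot 2^{\tilde{O}(\sqrt{n})} = 2^{\tilde{O}(\sqrt{n})}$, since \warmup\ runs in time $\poly(|S|, 2^{\tilde O(\sqrt n)})$. This uses one Forster transform in time $\poly(|S|n)$, then $2^{\tilde O(\sqrt n)}$ loop iterations, each of which does linear programming on $n\ln n$ points and evaluates $O(1)$ candidate hypotheses on $S$.

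For correctness, the two ingredients are applied in order. By \Cref{lem:weak-learn-over-sample}, with probability at least $19/20$ over the internal randomness of \warmup\ (for any fixed $S$), the output $h \in {\cal H}$ agrees with $f$ on at least a $1/2+\gamma$ fraction of $S$. By the standard uniform convergence theorem for classes of VC dimension $d$ (e.g.\ Theorem~6.8 of \cite{SSBD14}), with probability at least $19/20$ over the draw of $S$, every $h' \in {\cal H}$ satisfies
\[
\left|\Pr_{\calD}[h'(x)=f(x)] - \Pr_{\bx \sim S}[h'(\bx)=f(\bx)]\right| \leq \gamma/2,
\]
provided $m = O((d+\log 20)/\gamma^2)$. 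Uniformity is essential here: $h$ depends on $S$, so we need the bound for all of ${\cal H}$ simultaneously. This holds because the agreement sets $\{x: h'(x)=f(x)\}$ range over a class whose VC dimension is at most that of ${\cal H}$, since $f$ is fixed. A union bound then shows that with probability at least $9/10$ the output satisfies $\Pr_\calD[h(x)=f(x)] \geq 1/2+\gamma/2$, so we may take $\gamma' = \gamma/2 = 2^{-O(\sqrt n \log n)}$.

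There is one subtlety to handle: \Cref{lem:weak-learn-over-sample} is stated for a sequence $S$ on which every point satisfies $w^{(i)}\cdot x \neq 0$. This is arranged by the perturbation remark in \Cref{sec:OCH}, applied to the drawn sample. That perturbation does not change any label on $S$, and the distributional bound is measured against the original $f$; since the perturbed halfspaces agree with $f$ on $S$, the empirical accuracy transfers. I would argue this by noting that the perturbed target $f'$ agrees with $f$ on $S$, and applying uniform convergence with $f$ itself. The main step to be careful about is therefore the uniform convergence bound, in particular checking that the sample size needed, $\tilde O(n^2)\cdot 2^{O(\sqrt n\log n)}$, stays within $2^{\tilde O(\sqrt n)}$. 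It does.
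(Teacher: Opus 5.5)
Your proposal is correct and follows the paper's proof essentially verbatim. You draw $O(d/\gamma^2)$ examples with $d=O(n^2)$ and run \warmup{} on them. You then combine the $19/20$ uniform-convergence guarantee (Theorem~6.8 of \cite{SSBD14}, applied uniformly over the fixed class ${\cal H}$) with the $19/20$ empirical-accuracy guarantee of \Cref{lem:weak-learn-over-sample}, and a union bound gives advantage $\gamma' = \gamma/2$ with probability $9/10$. Your explicit choice $m \geq 2^{\Omega(\sqrt{n}\log n)}$ and your remark that the perturbation of the target is purely an analysis device fill in details the paper leaves implicit, but they do not change the argument.
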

\begin{proof}
The lemma is an easy consequence of the ``Fundamental Theorem of Statistical Learning'' (see e.g.~part~(1) of Theorem~6.8 of \cite{SSBD14}) and \Cref{lem:weak-learn-over-sample}.  In more detail, the \warmupwl\ algorithm calls the $EX(f,{\cal D})$ oracle $O \pbra{{\frac {d}{\gamma'^2}}}$ times, where $d=O(n^2)$ is the VC dimension of the hypothesis class ${\cal H}$ and $\gamma' = \gamma/2$ is half of the ``advantage over the sample'' parameter $\gamma$ from \Cref{lem:weak-learn-over-sample}. 
The labeled examples that are obtained from these calls form the input sequence of examples $\bS$ that is then given to \warmup, and the hypothesis that \warmupwl\ returns is the hypothesis generated by \warmup.  

By the uniform convergence property (part~(1) of Theorem~6.8 of \cite{SSBD14}), with probability at least $19/20$ over the draw of  $\bS$, the empirical accuracy of every hypothesis $h \in {\cal H}$ on $\bS$ is within an additive $\pm \gamma' = \pm \gamma/2$ of the true accuracy of $h$ on $f$ w.r.t.~${\cal D}$; and by \Cref{lem:weak-learn-over-sample}, with probability at least $19/20$ over the execution of \warmup, that algorithm generates a hypothesis $h \in {\cal H}$ with empirical accuracy at least $1/2 + \gamma$ over the sample $\bS$.  So with overall probability at least $9/10$, \warmupwl\ generates a hypothesis with advantage $\gamma'$ as claimed.
\end{proof}
\subsection{Proof of \Cref{thm:2}} \label{sec:proof-theorem-2}

\Cref{thm:2} follows directly by applying \Cref{thm:boost} (accuracy boosting) to \Cref{lem:weak-learner} (the weak learner for intersections of two halfspaces). \qed

\section{Brute-force search learns functions of $k$ halfspaces if $k$ is large} \label{ap:brute-force}

In this brief section we argue that, as claimed in \Cref{rem:large-k}, there is a PAC learning algorithm for the class of all functions of $k$ halfspaces over $\R^n$ that runs in time $\poly(2^{2^k + O(nk^2)}, 1/\eps,\log(1/\delta))$. 

It is a well-known consequence of standard techniques (see the proof of~Theorem~3.6 of \cite{KearnsVazirani:94}) that the VC dimension of the class of all Boolean functions of $k$ halfspaces over $\R^n$ is at most $O(2^k + nk \log k)$.
Moreover, as is well known, given a finite sample $S$ of points in $\R^n$, for any halfspace $h$ the value of $h$ on all points in $S$ can be encoded by giving the value of $h$ on $n+1$ (carefully chosen) points in $S$.  Given these facts, consider the following algorithm which operates on a sample $S \subset \R^n$ of $m' = O(2^k + nk \log k)$ points labeled according to some function of $k$ halfspaces:

\begin{itemize}
    \item For $i=1,\dots,k$, guess $n+1$ points in $S$ and use them to define a halfspace $h^{(i)}$ as alluded to above;
    \item Given $k$ halfspaces $h^{(1)},\dots,h^{(k)}$ from the preceding step, guess a function $g: \bits^k \to \bits$ and check whether $g(h^{(1)}(x),\dots,h^{(k)})$ correctly labels each point in $S$. Output a hypothesis $g(h^{(1)}(x),\dots,h^{(k)}(x))$ for which this is the case.
\end{itemize}

Since there are ${m' \choose n+1}$ ways to guess $n$ points for each $i$ in the first step, the total number of possibilities for the first step is at most ${m' \choose n+1}^k \leq (em'/(n+1))^{(n+1)k} =2^{O(nk^2)}.$
Since there are $2^{2^k}$ functions $g: \bits^k \to \bits$, the total number of possibilities for the second step is at most $2^{2^k}.$ Hence the above algorithm runs in time $2^{2^k + O(nk^2)}$ and is guaranteed to output a function of $k$ halfspaces that is consistent with the input data set of $m'$ points.
By the ``Fundamental Theorem of Statistical Learning'' (see e.g.~Theorem~6.8 of \cite{SSBD14}), for a suitable choice of the hidden constant in the definition of $m'$, the above algorithm, when run on a sample of $m'$ points drawn from $EX(c,{\cal D})$, is a weak PAC learning algorithm achieving error rate $\eps' = 0.01$ with probability at least $9/10.$  We can run a boosting algorithm on this weak PAC learning algorithm as described in \Cref{thm:boost}, and we get a strong PAC learning algorithm (achieving accuracy $\eps$ with probability $1-\delta$) running in the claimed time bound.

\end{document}